\newif\ifarxiv
\newif\ifdouble
\newif\ifsingle
\newif\iffull
\newtheorem{thm}{Theorem}
\newtheorem{lemma}{Lemma}
\newtheorem{cor}{Corollary}
\newtheorem{defn}{Definition}
\newtheorem{alg}{Algorithm}
\newcommand{\AlgInput }{\text{ }\newline\newline\textit{\em\bf Input}~~~~}
\newcommand{\AlgOutput}{\text{ }\newline\newline\textit{\em\bf Output}~~~} 
\newcommand{\AlgFailure}{\textsc{Failure}}
\newcommand{\FiniteField}{\text{${\bf F}$}}     
\newcommand{\FieldSize}{\text{${n}$}}     
\newcommand{\CharPoly}{\text{$\chi$}}           
\newcommand{\VeatureCoeff}{\text{$\xi$}}          
\newcommand{\SecretSize}{\text{$k$}}
\newcommand{\GenuinePairs}{\text{${\bf G}$}}
\newcommand{\ChaffPairs}{\text{${\bf C}$}}
\newcommand{\VaultPairs}{\text{$\bf V$}}
\newcommand{\UnlockingPairs}{\text{${\bf U}$}}
\newcommand{\GenuineSize}{\text{$t$}}
\newcommand{\VaultSize}{\text{$r$}}
\newcommand{\absc}{\text{$x$}}
\newcommand{\ord}{\text{$y$}}
\newcommand{\VeatureSet}{\text{$\bf A$}}         
\newcommand{\WeatureSet}{\text{$\bf B$}}         
\newcommand{\ZeatureSet}{\text{$\bf D$}}         
\newcommand{\VaultCoeff}{\text{$v$}} 	            
\newcommand{\WaultCoeff}{\text{$w$}} 	            
\newcommand{\ZaultCoeff}{\text{$z$}}                
\newcommand{\VaultPoly}{\text{$V$}}
\newcommand{\WaultPoly}{\text{$W$}}
\newcommand{\ZaultPoly}{\text{$Z$}}
\newcommand{\VeatureElement}{\text{$a$}}         
\newcommand{\WeatureElement}{\text{$b$}}         
\newcommand{\ZeatureElement}{\text{$c$}}         
\newcommand{\SYM}{\text{$X$}}
\newcommand{\NumOverlap}{\text{$\omega$}}
\newcommand{\VeatureSize}{\text{$t$}}
\newcommand{\WeatureSize}{\text{$s$}}
\newcommand{\RemPoly}{\text{$R$}}
\newcommand{\VaultSecret}{\text{$f$}}
\newcommand{\WaultSecret}{\text{$g$}}
\newcommand{\VErrorPoly}{\text{$P$}}
\newcommand{\WErrorPoly}{\text{$Q$}}
\newcommand{\VErrorSize}{\text{$\delta$}}                
\newcommand{\WErrorSize}{\text{$\epsilon$}}              
\newcommand{\EEAConstant}{\text{$\alpha$}}               
\newcommand{\EEAInverse}{\text{$\beta$}}                 
\newcommand{\CandidateIndicesList}{\text{$\mathcal{L}$}}
\newcommand{\RelationErrorPoly}{\text{$\mathcal{E}$}}
\newcommand{\FiniteFieldExt}{\text{${\bf K}$}}
\newcommand{\AdditionalVeatureSet}{\text{$\VeatureSet_{\operatorname{bl}}$}}
\newcommand{\AdditionalVeatureSize}{\text{$\VeatureSize_{\operatorname{bl}}$}}
\newcommand{\AdditionalWeatureSet}{\text{$\WeatureSet_{\operatorname{bl}}$}}
\newcommand{\AdditionalWeatureSize}{\text{$\WeatureSize_{\operatorname{bl}}$}}
\newcommand{\AverageMinEntropy}{\text{$\mathbf{\tilde{H}_{\infty}}$}}
\newcommand{\MinEntropy}{\text{$\mathbf{H_{\infty}}$}}
\newcommand{\Expectation}{\text{$\mathbb E$}}
\newcommand{\diffVW}{\text{$d$}}
\newcommand{\VaultSurr}{\text{$\mathcal U$}}              
\newcommand{\RandVarX}{\text{$X$}}
\newcommand{\RandVarY}{\text{$Y$}}
\newcommand{\varx}{\text{$x$}}
\newcommand{\vary}{\text{$y$}}
\newcommand{\Oh}{\text{$\mathcal{O}$}}
\newcommand{\SoftOh}{\text{$\mathcal{O}^\sim$}}
\newcommand{\rem}{\text{$~\operatorname{rem}~$}}
\newcommand{\VaultSpurChar}{\text{$V_0$}}
\newcommand{\WaultSpurChar}{\text{$W_0$}}
\newcommand{\SetDiff}{\mathrm{dist}}
\newcommand{\prob}[1]{\text{$p_\mathrm{\scriptscriptstyle\textsc{#1}}$}}
\newcommand{\AlgA}{\text{$\mathcal{A}$}}
\newcommand{\AlgB}{\text{$\mathcal{B}$}}
\newcommand{\ThreshDist}{\text{$h$}}
\newcommand{\LastGuess}{\text{$m$}}
\newcommand{\NumOverlapInterval}{\text{$I$}}
\newcommand{\OverlapSet}{\text{${\bf S}$}}
\newcommand{\ParityBit}{\text{$\phi$}}
\newcommand{\SomeInteger}{\text{$N$}}
\newcommand{\FullRecFraction}{\text{$\theta$}}
\newcommand{\FullRecConfidence}{\text{$\gamma$}}
\newcommand{\FullRecProportion}{\text{$\Gamma$}}
\DeclareMathOperator*{\argmin}{argmin}
\newcommand{\BEGINPROOF}{\begin{proof}}
\newcommand{\ENDPROOF}{\end{proof}}
\newcommand{\BEGINPROOF}{\begin{IEEEproof}}
\newcommand{\ENDPROOF}{\end{IEEEproof}}
\newcommand{\JohannesMerkle}{Johannes Merkle\thanks{\ifarxiv\else J. Merkle is with \fi secunet Security Networks, Mergenthaler Allee 77, 65760 Eschborn, Germany. Email: \href{mailto:johannes.merkle@secunet.com}{johannes.merkle@secunet.com}}}
\newcommand{\BenjaminTams}{Benjamin Tams\thanks{\ifarxiv\else B. Tams is with the \fi Institute for Mathematical Stochastics, University of Goettingen, Goldschmidtstr. 7, 37077, Goettingen, Germany. Phone: +49-(0)551-3913515. Email: \href{mailto:btams@math.uni-goettingen.de}{btams@math.uni-goettingen.de}. B. Tams gratefully acknowledges the support of the Felix Bernstein Institute for Mathematical Statistics in the Biosciences and the Volkswagen Foundation.}}
\newif\ifukenglish
\newcommand{\etal}{\textit{et al.}}
\newcommand{\analyze}{analyse}
\newcommand{\analyzing}{analysing}
\newcommand{\enrollment}{enrolment}
\newcommand{\enrollments}{enrolments}
\newcommand{\enrolled}{enrolled}
\newcommand{\ie}{\textit{i.e.}}
\newcommand{\eg}{\textit{e.g.}}
\newcommand{\analyze}{analyze}
\newcommand{\analyzing}{analyzing}
\newcommand{\enrollment}{enrollment}
\newcommand{\enrollments}{enrollments}
\newcommand{\enrolled}{enrolled}
\newcommand{\ie}{\textit{i.e.},}
\newcommand{\eg}{\textit{e.g.},}
\begin{document}

\title{
Security of the Improved Fuzzy Vault Scheme\\in the Presence of Record Multiplicity\\(Full Version)
\iffull
\else
\thanks{A full version of this paper containing proofs of the theorems and lemmas can be found in \cite{bib:MerkleTams2013}.}
\fi
}
\author{
\iffull
\JohannesMerkle~~and~\BenjaminTams
\else
\BenjaminTams~~and~\JohannesMerkle
\fi
} 
\maketitle

\begin{abstract}
Dodis \etal{} proposed an improved version of the \emph{fuzzy vault scheme}, one of the most popular primitives used in biometric cryptosystems, requiring less storage and leaking less information. Recently, Blanton and Aliasgari have shown that the relation of two improved fuzzy vault records of the same individual may be determined by solving a system of non-linear equations. However, they conjectured that this is feasible for small parameters only. In this paper, we present a new attack against the improved fuzzy vault scheme based on the \emph{extended Euclidean algorithm} that determines if two records are related and recovers the elements by which the protected features, \eg{} the biometric templates, differ. Our theoretical and empirical analysis demonstrates that the attack is very effective and efficient for practical parameters. Furthermore, we show how this attack can be extended to fully recover both feature sets from related vault records much more efficiently than possible by attacking each record individually. We complement this work by deriving lower bounds for record multiplicity attacks and use these to show that our attack is asymptotically optimal in an information theoretic sense. Finally, we propose remedies to harden the scheme against record multiplicity attacks.
\end{abstract}

\ifarxiv
\section*{Keywords}
\else
\begin{keywords}
\fi
fuzzy vault scheme, record multiplicity attack, cross-matching
\ifarxiv
\else
\end{keywords}
\fi

\section{Introduction}
The \emph{fuzzy vault scheme} by Juels and Sudan \cite{bib:JuelsSudan2002,bib:JuelsSudan2006} is a cryptographic primitive for error-tolerant authentication based on unordered feature sets without revealing the features. It is considered as a potential tool for implementing \emph{biometric cryptosystems} that allow authentication and key derivation based on protected biometric features, \eg{} extracted from a human's fingerprints \cite{bib:ClancyKiyavashLin2003,bib:UludagJain2006,bib:NandakumarJainPankanti2007,bib:Nagar2010,bib:LiEtAl2010}, irises \cite{bib:LeeEtAl2008}, and even face \cite{bib:FrassenZhouBusch2008}. Roughly speaking, it works by hiding the set of \emph{genuine features} in a randomly generated set of \emph{chaff features}, and its security is based on the hardness of the \emph{polynomial reconstruction problem} (see \cite{bib:KiayiasYung2008}). 

However, a serious problem is its vulnerability to \emph{record multiplicity attacks} that link multiple vault records protecting features of the same biometric instance, \eg{} the same finger and uncover the protected biometric data \cite{bib:ScheirerBoult2007, bib:KholmatovYanikoglu2008}: Via correlation, related vault records may be recognized across different databases which conflicts with the \emph{unlinkability} requirement for biometric information protection \cite{bib:ISO24745:2011}; even worse, given records of overlapping feature sets, the common features can be easily recovered, violating the \emph{irreversibility} requirement. 

If the individual feature elements can be robustly represented,\footnote{A nice property of the fuzzy vault scheme is the possibility to tolerate noise not only on the set level but also in the representations of the individual elements, \eg{} introduced by measurement errors.} \eg{} by using sufficiently accurate measurements or applying quantization, correlation attacks can be thwarted by filling up the entire space with chaff points. Unfortunately, this countermeasure drastically inflates the vault records. An alternative is provided by the \emph{improved fuzzy vault scheme} proposed by Dodis \etal{} \cite{bib:DodisEtAl2004,bib:DodisEtAl2008} in which the chaff features are replaced by a polynomial. As a consequence, the data size of the vault records are significantly smaller as compared to those in the original fuzzy vault scheme. Furthermore, since the polynomial represents a maximal number of chaff features, the information leakage becomes minimal and the correlation attacks against the original fuzzy vault do not apply.

In 2013, Blanton and Aliasgari \cite{bib:BlantonAliasgari2013} showed that the improved fuzzy vault scheme is, in principle, also susceptible to record multiplicity attacks, because two vault records of overlapping feature sets leak the elements by which the sets differ. They argued that, for larger feature sets, determining the leaked features may be computationally impossible --- a conjecture disproved by our attacks.

\subsection{Related Work}
There are other schemes that are considered to implement biometric protection. A popular example is the \emph{fuzzy commitment scheme}, which has been proposed in 1999 by Juels and Wattenberg \cite{bib:JuelsWattenberg1999}. More general concepts, called \emph{secure sketches} or \emph{fuzzy sketches}, have been introduced by Dodis \etal{} in 2004 \cite{bib:DodisEtAl2004}. Given multiple instances of a certain fuzzy sketch protecting templates of the same individual, one may ask whether their correspondence can be examined, \ie{} \emph{cross-matching}. 

In 2009, Simoens \etal{} \cite{bib:SimoensEtAl2009} introduced cross-matching attacks as well as attacks via record multiplicity (called \emph{distinguishability attack} and \emph{reversibility attack}, respectively, in that paper); they observed that, for example, the fuzzy commitment scheme may be vulnerable to such kinds of attacks. In 2011, Kelkboom \etal{} \cite{bib:KelkboomEtAl2011} showed how cross-matching performance in a fuzzy commitment scheme is related to the system's authentication performance; furthermore, they proposed to apply (public) random permutation processes to prevent the attacks presented in \cite{bib:SimoensEtAl2009}. Similar ideas have already been briefly noted by Bringer \etal{} in 2008 \cite{bib:BringerEtAl2008} where the use of \emph{cancelable biometrics} was considered to potentially prevent cross-matching attacks. 

In 2012, Blanton and Aliasgari \cite{bib:BlantonAliasgari2012} considered other known fuzzy sketch constructions and concluded that they may be vulnerable to cross-matching and attacks via record multiplicity as well; they also proposed the use of a (secret) key in order to mitigate these vulnerabilities. More details on the approaches of these attacks against fuzzy sketches can be found in \cite{bib:BlantonAliasgari2013}. Blanton and Aliasgari's considerations \cite{bib:BlantonAliasgari2013} also comprise attacks against the improved fuzzy vault scheme. In particular, they observed that, if two instances of the improved fuzzy vault scheme protect sufficiently overlapping feature sets, their set differences, \ie{} the differing elements, can, in principle, be reconstructed by solving a system of non-linear equations. However, the authors admit that their proposed method is computationally feasible only when the number of differing elements is small, which implies that it may not be applicable for practically relevant parameters. 

\subsection{Contribution and Outline of the Paper}
\iffull
\else
For proofs of the theorems and lemmas presented in this paper, we refer to the full version \cite{bib:MerkleTams2013}.
\fi

In Section \ref{sec:attack}, we show that an efficient (\ie{} random polynomial-time) record multiplicity attack against the improved fuzzy vault scheme exists. Based on the extended Euclidean algorithm, the attack can reliably determine if two vault records are related, \ie{} if they belong to the same individual, and can partially uncover the features, \eg{} the biometric data. Specifically, if the feature sets protected by the vault records sufficiently overlap, the attack can recover the feature elements in which these sets differ. Thus, this \emph{partial recovery attack} solves the equations established by Blanton and Aliasgari \cite{bib:BlantonAliasgari2013} efficiently even for large parameters. Thereby, we cannot only determine the differing elements from the feature sets but also distinguish related from non-related vault correspondences. We can prove that the attack is always successful provided that the overlap between the feature sets is within the limits determined by Blanton and Aliasgari. In particular, the attack can successfully link two records of the same user with high probability and uncover the feature elements in which the feature sets differ. We perform some experiments to demonstrate the effectiveness of our attack. Finally, we show how our partial recovery attack can be extended to a \emph{full recovery attack} that can completely uncover the feature sets with much higher probability than possible with only one record; thereby, we \analyze{} in which cases the recovery is efficiently possible or even becomes deterministic.

In Section \ref{sec:bounds} we extend the results of \cite{bib:DodisEtAl2004} and derive an upper bound for the information leakage of related vault records. We use this result to derive bounds for the success probability of attacks that aim at determining the feature sets (full recovery) or individual elements thereof (partial recovery) from vault records of two overlapping feature sets. Based on these bounds, we can show that our partial recovery attack is optimal with respect to the number of elements extracted, and that our full recovery attack is, for increasing field size, asymptotically optimal. We are also able to show the impossibility to extract elements from the feature sets' overlap with non-negligible probability for sufficiently large finite fields; as a side-product, this also implies that, for sufficiently large finite fields, no attack can extract feature elements from a single vault record. 

In Section \ref{sec:preventions}, we discuss countermeasures that may prevent successful application of record multiplicity attacks in a possible system incorporating the fuzzy vault scheme. Final discussions are drawn in Section \ref{sec:discussion}. 

Before we start with describing our attack, we first describe the functioning and some details on fuzzy vault schemes, original and improved, and examine the differences between them.

\section{The Fuzzy Vault Schemes}\label{sec:scheme}

\subsection{The Original Fuzzy Vault Scheme}
In the fuzzy vault vault scheme by Juels and Sudan \cite{bib:JuelsSudan2006}, it is assumed that the features measured from an individual are encoded as a \emph{feature set} $\VeatureSet$ of size $\VeatureSize$ with elements in a fixed finite field $\FiniteField$ of size $\FieldSize=|\FiniteField|$. 

On \enrollment{}, first, a random secret polynomial $\VaultSecret\in\FiniteField[\SYM]$ of degree $<\SecretSize$ is generated; then, the secret polynomial is evaluated on the feature set to build the set of \emph{genuine pairs} $\GenuinePairs\subset\FiniteField\times\FiniteField$ containing the pairs $(\absc,\VaultSecret(\absc))$ where $\absc\in\VeatureSet$; note that, if $\SecretSize\leq\GenuineSize$, then $\GenuinePairs$ uniquely encodes $\VaultSecret$ and the features $\VeatureSet$; to hide both $\VaultSecret$ and $\VeatureSet$, a large set $\ChaffPairs$ of random chaff pairs $(\absc,\ord)$ where $\absc\notin\VeatureSet$ and $\ord\neq\VaultSecret(\absc)$ is generated; finally, the union $\VaultPairs=\GenuinePairs\cup\ChaffPairs$ of size $\VaultSize$ builds the vault.

On authentication, a second query feature set $\WeatureSet$ of the allegedly genuine individual is provided; it is used to extract the set of \emph{unlocking pairs} $\UnlockingPairs$ containing those vault pairs whose abscissa, \ie{} \emph{vault features}, well approximate the elements encoded by $\WeatureSet$; note that, since the elements in $\VeatureSet$ and $\WeatureSet$ encode biometric features (\eg{} minutiae), the similarity measure between them (\eg{} minutiae distance) can be used to determine which query features well approximate vault features; if the unlocking pairs $\UnlockingPairs$ dominantly contains genuine pairs, \ie{} pairs lying on the graph of a common polynomial of degree smaller than $\SecretSize$, the polynomial $\VaultSecret$ can be recovered which deems to be a match. In particular, if $\UnlockingPairs$ contains at least $(|\UnlockingPairs|+\SecretSize)/2$ genuine pairs, an algorithm for decoding Reed-Solomon codes can be used to recover $\VaultSecret$ efficiently (\eg{} see \cite{bib:Berlekamp1984,bib:Gao2002}).

An adversary having intercepted a vault $\VaultPairs$ (without knowing a matching template), has to identify at least $\SecretSize$ genuine pairs in $\VaultPairs$. From the difficulty in identifying $\SecretSize$ genuine pairs, the fuzzy vault scheme achieves its security. This can be either ensured information-theoretically for suitable $|\FiniteField|$, $\SecretSize$, $\GenuineSize$, and $\VaultSize$ or, which is currently of greater practical relevance, using the widely believed computational hardness of the polynomial reconstruction problem (see \cite{bib:GuruswamiSudan1998,bib:BleichenbacherNguyen2000,bib:GuruswamiVardy2005}). 

\subsection{The Improved Fuzzy Vault Scheme}
Similar as in the original scheme, for the improved fuzzy vault scheme, it is assumed that the biometric features measured from an individual are encoded as a feature set $\VeatureSet$ of size $\GenuineSize$ containing elements in a fixed finite field $\FiniteField$. However, unlike the original fuzzy vault scheme, the improved fuzzy vault does not explicitly store the reference features and, thus, does not allow to account for small deviations in the measurement of the individual features by comparing the features in the query set with the reference features. As a consequence, the encoding of the features to field elements must be robust w.r.t. typical measurement errors, and two features are considered to match if and only if they have an equal encoding in $\FiniteField$.

Dodis \etal{} proposed two variants of their improved fuzzy vault scheme: In \cite{bib:DodisEtAl2004}, a probabilistic version was proposed, in which, similar to the original fuzzy vault scheme, a secret polynomial is randomly chosen and used to hide the genuine features. In contrast, the construction proposed in \cite{bib:DodisEtAl2008} is deterministic. With respect to security against recovery of the features from the vaults, both variants are equivalent, and in our analysis, we will consider both of them.

\iffull
\subsubsection{Probabilistic Variant}
\fi

On \enrollment{}, as in the original vault, a random secret polynomial $\VaultSecret\in\FiniteField[\SYM]$ of degree smaller than $\SecretSize$ is generated; now, the secret polynomial is bound to the feature set $\VeatureSet$ by publishing the polynomial
\begin{equation*}
\VaultPoly(\SYM)=\VaultSecret(\SYM)+\prod_{\absc\in\VeatureSet}(\SYM-\absc)
\end{equation*}
which is monic and of degree $\GenuineSize=|\VeatureSet|$. We denote $\VaultPoly$ as a {\em vault record} of the feature set $\VeatureSet$.

Note that, if $\absc\in\VeatureSet$, then $\VaultPoly(\absc)=\VaultSecret(\absc)$ and thus $(\absc,\VaultPoly(\absc))$ is a genuine pair; otherwise, if $\absc\notin\VeatureSet$, then $\VaultPoly(\absc)\neq\VaultSecret(\absc)$ and $(\absc,\VaultPoly(\absc))$ is a chaff pair.

On authentication, given the query feature set $\WeatureSet$, the unlocking set $\UnlockingPairs$ is computed containing the pairs $(\absc,\VaultPoly(\absc))$ where $\absc\in\WeatureSet$; the remaining, \ie{} the attempt for decoding $\UnlockingPairs$, \eg{} using an algorithm for decoding Reed-Solomon codes, works analogous as on authentication in the original fuzzy vault scheme.

Obviously, the secret polynomial $\VaultSecret$ serves to protect the feature set. If $\VaultSecret$ were not added to the characteristic polynomial 
\begin{equation*}
\CharPoly_{\VeatureSet}(\SYM)=\prod_{\absc\in\VeatureSet}(\SYM-\absc)=\VeatureCoeff_0+\VeatureCoeff_1\SYM+\hdots+\VeatureCoeff_{\GenuineSize-1}\SYM^{\GenuineSize-1}+\SYM^\GenuineSize
\end{equation*}
of the feature set $\VeatureSet$, the features could be efficiently recovered using an efficient polynomial-time root-finding algorithm (\eg{} via polynomial factorization; see \cite{bib:vzGathen2003}). However, the recovery of the feature set $\VeatureSet$ is hardened by blinding $\VeatureCoeff_0,\ldots, \VeatureCoeff_{\SecretSize-1}$ with $\VaultSecret$. Due to this randomization, the lowest $\SecretSize$ coefficients of $\VaultPoly$ do not carry any information and, hence, could even be dismissed without affecting the protection of the feature set $\VeatureSet$. This approach results in the deterministic variant of the improved fuzzy vault scheme \cite{bib:DodisEtAl2008} when selecting $\VaultSecret(\SYM)=-(\VeatureCoeff_0+\VeatureCoeff\SYM+\hdots+\VeatureCoeff_{\SecretSize}\SYM^{\SecretSize-1})$.

\iffull
\subsubsection{Deterministic Variant}
The deterministic version proposed in \cite{bib:DodisEtAl2008} is a slight modification of the probabilistic construction of \cite{bib:DodisEtAl2004} presented in the previous section. Instead of blending the lowest $\SecretSize$ coefficients $\VeatureCoeff_{0},\ldots,\VeatureCoeff_{\SecretSize-1}$ of the characteristic polynomial $\CharPoly_{\VeatureSet}$ with a random polynomial $\VaultSecret$ of degree smaller than $\SecretSize$, these coefficients are dismissed and just $\VeatureCoeff_{\SecretSize}\SYM^{\SecretSize} + \cdots + \VeatureCoeff_{\GenuineSize-1}\SYM^{\GenuineSize-1}+\SYM^\GenuineSize$ as a part of $\CharPoly_{\VeatureSet}$ is stored as the vault. Note that, in the deterministic version, the size of the vault record is reduced by $\SecretSize$ finite field elements, \ie{} $\SecretSize\cdot\log\FieldSize$ bits where $\FieldSize=|\FiniteField|$.

In order to facilitate our analysis, we will consider both variants as special cases of a {\em generalized improved fuzzy vault scheme}, where the vault record $\VaultPoly$ is set to $\VaultPoly=\VaultSecret+\CharPoly_{\VeatureSet}$ with a polynomial $\VaultSecret$ of degree smaller $\SecretSize$. In case that $\VaultSecret$ is chosen at random, we obtain the probabilistic version, and if we set $\VaultSecret(\SYM)=-(\VeatureCoeff_0+\VeatureCoeff\SYM+\hdots+\VeatureCoeff_{\SecretSize}\SYM^{\SecretSize-1})$ we arrive at the deterministic version. 
\fi

\subsection{Security of Single Vault Records} \label{security_single}
In \cite{bib:DodisEtAl2004} and \cite{bib:DodisEtAl2008},  Dodis \etal{} presented information theoretic results on the security of both versions. In particular, they used the \emph{average min-entropy} to measure the information leaked by the vault records.
\begin{defn}
The \emph{min-entropy} of a variable $\RandVarX$ is defined as 
\begin{equation*}
\MinEntropy (\RandVarX) = -\log \left( \max_{\varx}( \Pr [\RandVarX = \varx] ) \right),
\end{equation*}  
and the \emph{average min-entropy} of $\RandVarX$ given $\RandVarY$ is defined as 
\begin{equation*}
\AverageMinEntropy (\RandVarX | \RandVarY) = - \log \left( \Expectation_{y \leftarrow\RandVarY} \left[  \max_{\varx}( \Pr [\RandVarX = \varx | \RandVarY=\vary]) \right] \right),
\end{equation*}
where $\Expectation_{\vary \leftarrow \RandVarY}$ denotes expectation for $\vary$ chosen at random from $\RandVarY$. The difference $\MinEntropy (\RandVarX)-\AverageMinEntropy (\RandVarX | \RandVarY)$ is called the {\em information leakage} or {\em entropy loss} of $\RandVarX$ by $\RandVarY$.
\end{defn}
As shown in \cite{bib:DodisEtAl2004} and \cite{bib:DodisEtAl2008}, the average min-entropy $\AverageMinEntropy (\VeatureSet  | \VaultPoly)$ is at least $\MinEntropy (\VeatureSet)- (\GenuineSize - \SecretSize)\log \FieldSize$. This gives an upper bound of $\FieldSize^{-\MinEntropy (\VeatureSet) + \GenuineSize - \SecretSize}$ for the average success probability of an attacker who tries to determine the feature set $\VeatureSet$ from the vault record $\VaultPoly$, where the average is taken over the randomness in $\VaultPoly$. However, even in the probabilistic version of the scheme, all information leakage results from the deterministic coefficients in the vault record. Therefore, the bound on the success probability also holds for any given vault record $\VaultPoly$. 

In the case where all feature elements are independently and uniformly chosen at random, the feature set $\VeatureSet$ has maximal entropy $\log \tbinom{\FieldSize}{\GenuineSize}$. Therefore, we can conclude that, unless an attacker can exploit a potential non-uniformity and interdependency of the features, his success probability $\prob{Single}$ is limited by 
\begin{equation}\label{max_prob}
\prob{Single} \leq \FieldSize^{\GenuineSize - \SecretSize} / \tbinom{\FieldSize}{\GenuineSize}. 
\end{equation}
For $\FieldSize \gg \GenuineSize$, this can be approximated by $\GenuineSize ! / \FieldSize^{\SecretSize}$. On the other hand, the best known attack method is to guess $\SecretSize$ feature elements and applying the unlocking process of the scheme which is of success probability $ \tbinom{\GenuineSize}{\SecretSize} / \tbinom{\FieldSize}{\SecretSize}$. For $\FieldSize \gg \GenuineSize$, this can be approximated by $\tbinom{\GenuineSize}{\SecretSize} / \FieldSize^{\SecretSize}$. Thus, for $\FieldSize \gg \GenuineSize$, the bound on the average min-entropy is tight up to a factor of at most $(\GenuineSize-\SecretSize)! \, \SecretSize !$. 

It is important to note that in certain applications of the scheme, it may be possible to exploit a low entropy of the features. For instance, if the scheme is used to protect biometric features, an attacker can randomly choose feature sets from a large database of biometric data. As shown in \cite{bib:Tams2013v1}, such a {\em false-accept attack} can be quite efficient under practical circumstances.

\section{Attacks}\label{sec:attack}
Subsequently, assume that an adversary is given two vaults
\begin{align}\label{eq:vaults}
\begin{split}
\VaultPoly(\SYM)&=\VaultSecret(\SYM)+\prod_{\VeatureElement_i\in\VeatureSet}(\SYM-\VeatureElement_i)\quad\text{and}\\
\WaultPoly(\SYM)&=\WaultSecret(\SYM)+\prod_{\WeatureElement_i\in\WeatureSet}(\SYM-\WeatureElement_i)
\end{split}
\end{align}
that protect the feature sets $\VeatureSet$ and $\WeatureSet$ of size $\VeatureSize$ and $\WeatureSize\leq \VeatureSize$, respectively, overlapping in $\NumOverlap=|\VeatureSet\cap\WeatureSet|$ elements; $\VaultSecret$ and $\WaultSecret$ are (yet unknown) polynomials in $\FiniteField[\SYM]$ of degree smaller $\SecretSize$. 

\subsection{The Approach of Blanton and Aliasgari}
Blanton and Aliasgari \cite{bib:BlantonAliasgari2013} observed that, if $\VeatureSize=\WeatureSize$ and the feature sets $\VeatureSet$ and $\WeatureSet$ overlap in $\NumOverlap$ elements, a system of polynomial equations in the $2(\VeatureSize-\NumOverlap)$ distinct elements can be derived from the vault records. We give a brief summary of their approach.

Using (\ref{eq:vaults}), we can represent the coefficients $\VaultCoeff_\SecretSize,\ldots,\VaultCoeff_{\VeatureSize}$ and $\WaultCoeff_\SecretSize,\ldots, \WaultCoeff_{\VeatureSize}$ of the vault records $\VaultPoly$ and $\WaultPoly$, respectively, as $\VaultCoeff_j=\sigma_{\VeatureSize - j}(\VeatureElement_1,\ldots, \VeatureElement_{\VeatureSize})$ for $j=\SecretSize,\ldots,\VeatureSize$ and $\WaultCoeff_j=\sigma_{\VeatureSize - j}(\WeatureElement_1,\ldots, \WeatureElement_{\VeatureSize})$ for $j=\SecretSize,\ldots,\VeatureSize$. 
where 
\begin{equation}\label{eq:elsympol}
\sigma_{m}(\SYM_1,\ldots, \SYM_N) = \sum_{\substack{J \subseteq \{1, \ldots, N\} \\ |J|=m}} \prod_{j\in J} \SYM_{j}
\end{equation}
denotes the $m$-th elementary symmetric polynomial in $N$ variables \cite{bib:LidlNiederreiter} (note, that $\sigma_{0}(\SYM_1,\ldots, \SYM_N)=1$.) Assume that $\VeatureElement_i=\WeatureElement_i$ for $i=1,\ldots \NumOverlap$. Using the identity 
\ifdouble
\begin{align}
\begin{split}
\label{eq:esp_identity}
\sigma_{m}(\SYM_1,\ldots, \SYM_N)& \\
=\sum_{i=0}^{m} &\sigma_{i}(\SYM_1,\ldots, \SYM_M) \sigma_{m-i}(\SYM_{M+1},\ldots, \SYM_N),  
\end{split}
\end{align}
\else
\begin{equation}\label{eq:esp_identity}
\sigma_{m}(\SYM_1,\ldots, \SYM_N) = \sum_{i=0}^{m} \sigma_{i}(\SYM_1,\ldots, \SYM_M) \sigma_{m-i}(\SYM_{M+1},\ldots, \SYM_N),  
\end{equation}
\fi
for some $M<N$, we obtain 
\ifdouble
\begin{align}
\begin{split}
\VaultCoeff_{t-j}& - \WaultCoeff_{t-j}\\
&=\sigma_{j}(\VeatureElement_1,\ldots, \VeatureElement_{\VeatureSize})  - \sigma_{j}(\VeatureElement_1,\ldots,\VeatureElement_{\NumOverlap}, \WeatureElement_{\NumOverlap+1},\ldots, \WeatureElement_{\VeatureSize}) \\ 
&=\sum_{i=0}^{j-1}\sigma_{i}(\VeatureElement_1,\ldots,\VeatureElement_{\NumOverlap}) \cdot \\
 &\quad\quad~\left( \sigma_{j-i}(\VeatureElement_{\NumOverlap+1},\ldots, \VeatureElement_{\VeatureSize}) - \sigma_{j-i}(\WeatureElement_{\NumOverlap+1},\ldots, \WeatureElement_{\VeatureSize}) \right),
\end{split}\label{eq:blanton}
\end{align}
\else
\begin{align}
\begin{split}
\VaultCoeff_{t-j} &- \WaultCoeff_{t-j}  \\
 & = \sigma_{j}(\VeatureElement_1,\ldots, \VeatureElement_{\VeatureSize})  - \sigma_{j}(\VeatureElement_1,\ldots,\VeatureElement_{\NumOverlap}, \WeatureElement_{\NumOverlap+1},\ldots, \WeatureElement_{\VeatureSize}) \\ 
&= \sum_{i=0}^{j-1} \sigma_{i}(\VeatureElement_1,\ldots,\VeatureElement_{\NumOverlap}) \cdot 
 \left( \sigma_{j-i}(\VeatureElement_{\NumOverlap+1},\ldots, \VeatureElement_{\VeatureSize}) - \sigma_{j-i}(\WeatureElement_{\NumOverlap+1},\ldots, \WeatureElement_{\VeatureSize}) \right),
\end{split}\label{eq:blanton}
\end{align}
\fi
for $j=1,\ldots,\VeatureSize-\SecretSize$. By successively inserting 
\ifdouble
\begin{align*}
\begin{split}
\sigma_{i}(\VeatureElement_1,\ldots,\VeatureElement_{\NumOverlap})&\\
=~\VaultCoeff_{t-i}-&\sum_{l=0}^{i-1} \sigma_{l}(\VeatureElement_1,\ldots,\VeatureElement_{\NumOverlap}) \cdot \sigma_{i-l}(\VeatureElement_{\NumOverlap+1},\ldots, \VeatureElement_{\VeatureSize}), 
\end{split}
\end{align*}
\else
\begin{equation*}
\sigma_{i}(\VeatureElement_1,\ldots,\VeatureElement_{\NumOverlap})=\VaultCoeff_{t-i}-\sum_{l=0}^{i-1} \sigma_{l}(\VeatureElement_1,\ldots,\VeatureElement_{\NumOverlap}) \cdot \sigma_{i-l}(\VeatureElement_{\NumOverlap+1},\ldots, \VeatureElement_{\VeatureSize}), 
\end{equation*}
\fi
(which follows from (\ref{eq:esp_identity}) as well) for $i=j-1,\ldots, 1$, we can clear (\ref{eq:blanton}) from all terms depending on merely $\VeatureElement_1,\ldots,\VeatureElement_{\NumOverlap}$, resulting in a system of $\VeatureSize-\SecretSize$ polynomial equations in the unknowns $\VeatureElement_{\NumOverlap+1},\ldots,\VeatureElement_{\VeatureSize},\WeatureElement_{\NumOverlap+1},\ldots,\WeatureElement_{\VeatureSize}$. If $\NumOverlap \geq (\VeatureSize+\SecretSize)/2$, this system of non-linear equations can be solved. 

In this approach, two aspects remained open:
\begin{enumerate}
\item {\bf Can the equations be solved efficiently?} Blanton and Aliasgari suggested standard methods for solving polynomial equations and assumed that this task becomes infeasible for large $\VeatureSize$ and $\FieldSize$. We disprove this conjecture by our partial recovery attack described below.
\item {\bf Do the solutions found indeed correspond to the correct feature set?} Since the equations are of total degree up to $\VeatureSize-\SecretSize$, while there are only $2(\VeatureSize-\NumOverlap)!$ many permutations of the correct feature elements, it is not clear whether there are many spurious solutions.   
\end{enumerate}
Our attack and its subsequent analysis give positive answers to these questions.
 
\subsection{Partial Recovery Attack}\label{sec:PartialRecoveryAttack}

Our partial recovery attack via record multiplicity is given by the algorithm below. 

\begin{alg}[Partial Recovery Attack]
\label{Algo:partial_recovery}
\AlgInput Two vault records $\VaultPoly$, $\WaultPoly$ of sets $\VeatureSet$ and $\WeatureSet$ of size $\VeatureSize$ and $\WeatureSize\leq\VeatureSize$, respectively.
\AlgOutput Either a triple $(\NumOverlap^*,\VeatureSet_0,\WeatureSet_0)$, where $\NumOverlap^*$ is a candidate for $|\VeatureSet\cap\WeatureSet|$, $\VeatureSet_0$ is a candidate for $\VeatureSet\setminus\WeatureSet$, and $\WeatureSet_0$ is a candidate for $\WeatureSet\setminus\VeatureSet$, or \AlgFailure.
\end{alg}
\begin{enumerate}
\item\label{partial_eea} Apply the extended Euclidean algorithm to $\VaultPoly$ and $\WaultPoly$ to obtain a list of combinations $\RemPoly_j=\VErrorPoly_j\cdot\VaultPoly+\WErrorPoly_j\cdot\WaultPoly$.
\item\label{partial_index} Let $j_0$ be such that $\deg(\WErrorPoly_{j_0})$ is minimal where $\deg(\WErrorPoly_{j_0})+\SecretSize>\deg(\RemPoly_{j_0})$ and $\RemPoly_{j_0}\neq 0$; if such an index does not exist, return \AlgFailure.
\item\label{partial_rem} Ensure that the degree of the remainder of $\VaultPoly$ divided by $\WErrorPoly_{j_0}$ is smaller than $\SecretSize$; otherwise, if it is greater than or equal $\SecretSize$, return $\AlgFailure$.
\item\label{partial_roots} Compute the roots $\VeatureSet_0$ and $\WeatureSet_0$ of $\WErrorPoly_{j_0}$ and $\VErrorPoly_{j_0}$, respectively; if $\WErrorPoly_{j_0}$ or $\VErrorPoly_{j_0}$ do not split into distinct linear factors, then return \AlgFailure; otherwise set $\NumOverlap=\VeatureSize-\deg(\WErrorPoly_{j_0})$ and return the triple $(\NumOverlap,\VeatureSet_0,\WeatureSet_0)$.
\end{enumerate}

In the following, we will state provable results in which case the attack's output is correct. 


The result of our analysis of Algorithm~\ref{Algo:partial_recovery} is summarized in the following. Its proof, which in particular uses a result from Gao \cite{bib:Gao2002}, 
\iffull
can be found in the appendix in Section \ref{sec:ProofMain}.
\else
can be found in \cite{bib:MerkleTams2013}.
\fi

\begin{thm}\label{thm:main}
Let $\VeatureSet,\WeatureSet\subset\FiniteField$ be feature sets of size $\VeatureSize$ and $\WeatureSize\leq\VeatureSize$, respectively, and let $\VaultSecret,\WaultSecret\in\FiniteField[\SYM]$ be of degree smaller than $\SecretSize$. Furthermore, let $\NumOverlap=|\VeatureSet\cap\WeatureSet|$.
\begin{itemize}
\item[a)] If the vaults $\VaultPoly=\VaultSecret+\CharPoly_\VeatureSet$ and $\WaultPoly=\WaultSecret+\CharPoly_\WeatureSet$ are input to Algorithm~\ref{Algo:partial_recovery} where $\NumOverlap\geq(\VeatureSize+\SecretSize)/2$, then the algorithm outputs $(\NumOverlap,\VeatureSet\setminus\WeatureSet,\WeatureSet\setminus\VeatureSet)$.
\item[b)]Suppose that for normalized polynomials $\VaultPoly$ and $\WaultPoly$ of degree $\VeatureSize$ and $\WeatureSize$, respectively, Algorithm \ref{Algo:partial_recovery} outputs $(\NumOverlap^*,\VeatureSet_0,\WeatureSet_0)$. Then there exists polynomials $\hat{\VaultSecret},\hat{\WaultSecret}\in\FiniteField[\SYM]$ of degree smaller than $\SecretSize$ and a polynomial $\CharPoly\in\FiniteField[\SYM]$ of degree $\NumOverlap^*$ such that $\VaultPoly=\hat{\VaultSecret}+\CharPoly\cdot\CharPoly_{\VeatureSet_0}$ and $\WaultPoly=\hat{\WaultSecret}+\CharPoly\cdot\CharPoly_{\WeatureSet_0}$.
\item[c)]Algorithm~\ref{Algo:partial_recovery} can be implemented using an expected number of $\Oh(\VeatureSize^2)+\SoftOh(\VeatureSize\cdot\log|\FiniteField|)$ operations in $\FiniteField$.
\end{itemize}
\end{thm}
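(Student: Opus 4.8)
The plan is to handle the three parts in turn, part~a) carrying almost all of the work. For part~a), write $\ZeatureSet:=\VeatureSet\cap\WeatureSet$, $\VeatureSet_0:=\VeatureSet\setminus\WeatureSet$, $\WeatureSet_0:=\WeatureSet\setminus\VeatureSet$, so that $\CharPoly_\VeatureSet=\CharPoly_\ZeatureSet\CharPoly_{\VeatureSet_0}$ and $\CharPoly_\WeatureSet=\CharPoly_\ZeatureSet\CharPoly_{\WeatureSet_0}$ with $\deg\CharPoly_{\VeatureSet_0}=\VeatureSize-\NumOverlap$ and $\deg\CharPoly_{\WeatureSet_0}=\WeatureSize-\NumOverlap$. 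Multiplying $\VaultPoly$ by $\CharPoly_{\WeatureSet_0}$ and $\WaultPoly$ by $\CharPoly_{\VeatureSet_0}$ the common factor $\CharPoly_\ZeatureSet\CharPoly_{\VeatureSet_0}\CharPoly_{\WeatureSet_0}$ cancels, giving the identity
\[
\CharPoly_{\WeatureSet_0}\cdot\VaultPoly-\CharPoly_{\VeatureSet_0}\cdot\WaultPoly=\CharPoly_{\WeatureSet_0}\cdot\VaultSecret-\CharPoly_{\VeatureSet_0}\cdot\WaultSecret=:\RemPoly^*.
\]
Since $\WeatureSize\leq\VeatureSize$ and $\deg\VaultSecret,\deg\WaultSecret<\SecretSize$, one gets $\deg\RemPoly^*\leq\VeatureSize-\NumOverlap+\SecretSize-1$, while $\deg\CharPoly_{\VeatureSet_0}=\VeatureSize-\NumOverlap$ and $\gcd(\CharPoly_{\VeatureSet_0},\CharPoly_{\WeatureSet_0})=1$ because $\VeatureSet_0\cap\WeatureSet_0=\emptyset$. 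The hypothesis $\NumOverlap\geq(\VeatureSize+\SecretSize)/2$ is precisely what makes $\deg\RemPoly^*<\NumOverlap=\VeatureSize-\deg\CharPoly_{\VeatureSet_0}$, so that $(\CharPoly_{\WeatureSet_0},-\CharPoly_{\VeatureSet_0},\RemPoly^*)$ is a coprime solution of the key equation $\RemPoly=\VErrorPoly\cdot\VaultPoly+\WErrorPoly\cdot\WaultPoly$ whose remainder has degree strictly below $\VeatureSize$ minus the $\WErrorPoly$-degree.

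Next I would invoke the uniqueness property of the extended Euclidean algorithm underlying Gao's decoder~\cite{bib:Gao2002}: any coprime pair $(\VErrorPoly,\WErrorPoly)$ with $\deg(\VErrorPoly\VaultPoly+\WErrorPoly\WaultPoly)<\VeatureSize-\deg\WErrorPoly$ is a scalar multiple of one of the triples $\RemPoly_j=\VErrorPoly_j\VaultPoly+\WErrorPoly_j\WaultPoly$ produced in step~\ref{partial_eea}; in particular the value $\VeatureSize-\NumOverlap$ actually occurs in the (strictly increasing) sequence of the $\deg\WErrorPoly_j$, and for that index $\RemPoly_j=\lambda\RemPoly^*$, $\WErrorPoly_j=-\lambda\CharPoly_{\VeatureSet_0}$, $\VErrorPoly_j=\lambda\CharPoly_{\WeatureSet_0}$. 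It then remains to check that this $j$ is the index $j_0$ selected in step~\ref{partial_index}, i.e.\ the smallest one with $\deg\RemPoly_{j_0}<\deg\WErrorPoly_{j_0}+\SecretSize$: the quantity $\deg\RemPoly_i+\deg\RemPoly_{i-1}$ (with $\RemPoly_{-1}:=\VaultPoly$) is strictly decreasing, equals $\deg\RemPoly_j+\NumOverlap\leq\VeatureSize+\SecretSize-1$ at $i=j$ and is at least $\min(\VeatureSize+\WeatureSize,\,2\NumOverlap+1)\geq\VeatureSize+\SecretSize$ at $i=j-1$. Once $j_0=j$ is pinned down, step~\ref{partial_rem} sees $\VaultPoly\bmod\WErrorPoly_{j_0}=\VaultSecret\bmod\CharPoly_{\VeatureSet_0}$, of degree $<\SecretSize$, so it passes; $\WErrorPoly_{j_0}$ and $\VErrorPoly_{j_0}$ split into distinct linear factors with root sets $\VeatureSet_0$ and $\WeatureSet_0$; and $\VeatureSize-\deg\WErrorPoly_{j_0}=\NumOverlap$, giving the claimed output. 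The one loose end is the degenerate case $\RemPoly^*=0$ (which forces $\CharPoly_{\VeatureSet_0}\mid\VaultSecret$ and $\CharPoly_{\WeatureSet_0}\mid\WaultSecret$), to be excluded or dispatched by hand.

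For part~b) I would run the same bookkeeping backwards. If step~\ref{partial_roots} returns $(\NumOverlap^*,\VeatureSet_0,\WeatureSet_0)$, then $\WErrorPoly_{j_0}=c_1\CharPoly_{\VeatureSet_0}$ and $\VErrorPoly_{j_0}=c_2\CharPoly_{\WeatureSet_0}$ for nonzero constants, with $\deg\WErrorPoly_{j_0}=\VeatureSize-\NumOverlap^*$. Dividing $\VaultPoly$ by $\WErrorPoly_{j_0}$ and using step~\ref{partial_rem} gives $\VaultPoly=\CharPoly\cdot\CharPoly_{\VeatureSet_0}+\hat{\VaultSecret}$ with $\deg\CharPoly=\NumOverlap^*$ and $\deg\hat{\VaultSecret}<\SecretSize$. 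To force the \emph{same} $\CharPoly$ onto $\WaultPoly$, compare leading terms in $\RemPoly_{j_0}=\VErrorPoly_{j_0}\VaultPoly+\WErrorPoly_{j_0}\WaultPoly$: both products have degree $\VeatureSize+\WeatureSize-\NumOverlap^*$ whereas $\deg\RemPoly_{j_0}<\VeatureSize-\NumOverlap^*+\SecretSize\leq\VeatureSize+\WeatureSize-\NumOverlap^*$, so the leading coefficients cancel, $c_1+c_2=0$; substituting the expression for $\VaultPoly$ into the key equation then shows $\CharPoly_{\VeatureSet_0}\mid\RemPoly_{j_0}-c_2\CharPoly_{\WeatureSet_0}\hat{\VaultSecret}$ and, after dividing, $\WaultPoly=\hat{\WaultSecret}+\CharPoly\cdot\CharPoly_{\WeatureSet_0}$ with the same $\CharPoly$ and $\deg\hat{\WaultSecret}<\SecretSize$ (the degree bound again from $\deg\RemPoly_{j_0}<\VeatureSize-\NumOverlap^*+\SecretSize$).

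Part~c) is a routine cost count. The extended Euclidean algorithm on $\VaultPoly,\WaultPoly$ (degrees $\leq\VeatureSize$) keeping all quotients and cofactors, plus the single division in step~\ref{partial_rem}, costs $\Oh(\VeatureSize^2)$ operations in $\FiniteField$, and finding $j_0$ is a linear scan of the degree sequence. The remaining work is step~\ref{partial_roots}: testing that $\WErrorPoly_{j_0}$ (and $\VErrorPoly_{j_0}$) split into distinct linear factors amounts to checking $\SYM^{|\FiniteField|}\equiv\SYM\pmod{\WErrorPoly_{j_0}}$ by repeated squaring --- $\Oh(\log|\FiniteField|)$ multiplications modulo a polynomial of degree $\leq\VeatureSize$, i.e.\ $\SoftOh(\VeatureSize\log|\FiniteField|)$ with fast polynomial arithmetic --- followed by equal-degree factorization to extract the roots, a Las Vegas procedure of expected cost $\SoftOh(\VeatureSize\log|\FiniteField|)$, which accounts for the word ``expected''. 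I expect the genuine obstacle to be the matching in part~a) between the purely degree-based stopping rule of step~\ref{partial_index} and the correct Euclidean step: one must rule out both stopping too early (some $\RemPoly_i$ dipping below the threshold prematurely) and missing the target because the cofactor degree $\VeatureSize-\NumOverlap$ is skipped in the remainder sequence, and both come down to squeezing the single inequality $\NumOverlap\geq(\VeatureSize+\SecretSize)/2$ together with $\gcd(\CharPoly_{\VeatureSet_0},\CharPoly_{\WeatureSet_0})=1$ through the degree arithmetic of the Euclidean remainders.
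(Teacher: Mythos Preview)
Your proposal is correct and follows essentially the same route as the paper's proof: for part~a) you set up the key identity $\CharPoly_{\WeatureSet_0}\VaultPoly-\CharPoly_{\VeatureSet_0}\WaultPoly=\RemPoly^*$, invoke Gao's EEA lemma to locate the relevant Euclidean step, and then argue that this step coincides with the index $j_0$ selected by the algorithm; for part~b) you reverse the bookkeeping; and part~c) is the same standard cost count.

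Two small differences worth noting. First, to identify $j_0$ with the Gao index, you use the monotone quantity $\deg\RemPoly_i+\deg\RemPoly_{i-1}$ (equivalently $\VeatureSize+\SecretSize$ versus the threshold), whereas the paper argues more abstractly: it shows that \emph{any} $j_0\in\CandidateIndicesList$ with $\deg\WErrorPoly_{j_0}\leq\deg\WErrorPoly_{j_\NumOverlap}$ forces $\VErrorPoly_{j_0}\WErrorPoly+\WErrorPoly_{j_0}\VErrorPoly=0$ via the factorization through $\CharPoly_{\VeatureSet\cap\WeatureSet}$, hence $(\VErrorPoly_{j_0},\WErrorPoly_{j_0})$ is already a scalar multiple of $(\VErrorPoly,\WErrorPoly)$. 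Your degree-sum argument is more concrete; the paper's is slightly slicker and avoids tracking the exact value $\deg\RemPoly_{j-1}=\NumOverlap$. Second, for part~b) you match leading coefficients to get $c_1+c_2=0$ and then substitute, while the paper simply \emph{defines} $\hat{\WaultSecret}:=\VErrorPoly^*(\VaultPoly-\hat{\VaultSecret})/\WErrorPoly^*+\WaultPoly$ and checks directly that $\deg\hat{\WaultSecret}<\SecretSize$ and $\VErrorPoly^*(\VaultPoly-\hat{\VaultSecret})+\WErrorPoly^*(\WaultPoly-\hat{\WaultSecret})=0$; the common factor $\CharPoly$ then drops out as $\gcd(\VaultPoly-\hat{\VaultSecret},\WaultPoly-\hat{\WaultSecret})$. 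Finally, your flagging of the degenerate case $\RemPoly^*=0$ is apt: the paper's proof also silently assumes $\RemPoly_{j_\NumOverlap}\neq 0$ when asserting $j_\NumOverlap\in\CandidateIndicesList$.
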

Note that, if the system uses a Reed-Solomon decoder, successful authentication requires the query feature set and the \enrolled{} feature sets to share at least $(\VeatureSize+\SecretSize)/2$ elements. Consequently, the probability that a query set and an \enrolled{} set overlap in $(\VeatureSize+\SecretSize)/2$ elements equals the system's {\em genuine acceptance rate} (GAR). Since, typically, during \enrollment{}, features are measured with at least the same accuracy as during verification, we can expect the average overlap between two \enrolled{} sets to be at least as high as between query sets and \enrolled{} sets. Therefore, by Statement a) in Theorem \ref {thm:main}, we can expect that Algorithm~\ref{Algo:partial_recovery} successfully links two vault records of the same individual, and uncovers the differences between the two \enrolled{} feature sets with a probability at least equal to the GAR. 

If Algorithm~\ref{Algo:partial_recovery} outputs a triple, then, by Statement b), we cannot necessarily assume that the result corresponds to feature sets protected by the vault. The attack might output a triple $(\NumOverlap^*,\VeatureSet_0,\WeatureSet_0)$ even if the input vaults protect feature sets sharing less than $(\VeatureSize+\SecretSize)/2$ elements. There are three possible cases: First, the output equals $(\NumOverlap,\VeatureSet\setminus\WeatureSet,\WeatureSet\setminus\VeatureSet)$ (in particular, $\NumOverlap<(\VeatureSize+\WeatureSize)/2$); second, there exists other feature sets $\VeatureSet',\WeatureSet'$ in $\FiniteField$ with $\NumOverlap^*=|\VeatureSet'\cap\WeatureSet'|$, $\VeatureSet_0=\VeatureSet'\setminus\WeatureSet'$, and $\WeatureSet_0=\WeatureSet'\setminus\VeatureSet'$; third, the polynomial $\CharPoly$ of Statement b) does not split into linear factors in $\FiniteField$ or has multiple roots. If the first case occurs, we call the output \emph{correct}; otherwise, we call the output \emph{spurious}. 
\iffull
We will see later by experiments (Section \ref{sec:PartRecsExtField}) that the third case definitely occurs  even if $\NumOverlap^*\geq(\VeatureSize+\SecretSize)/2$ is output.
\else
In \cite{bib:MerkleTams2013} it is experimentally shown that the case b) in Theorem \ref{thm:main} cannot be strengthened, \ie{} there are examples in which Algorithm~\ref{Algo:partial_recovery} outputs a triple that do not correspond to a solution in $\FiniteField$ even when $\NumOverlap^*\geq(\VeatureSize+\SecretSize)/2$ is output.
\fi

As spurious outputs do not reveal correct information, their occurrence are inadvertent cases for an adversary who is using the partial recovery attack to link vault records across different databases (or even break them). Therefore, if the probability of spurious outputs is reasonably high, the adversary might not gain much since he cannot necessarily distinguish correct from spurious outputs. However, in the next section we show by experiments that the likelihood of spurious outputs can be very small in practical circumstances. And even if spurious outputs occur frequently, an adversary having intercepted two vaults from which he knows that they are related, can hope that they sufficiently overlap and try to discover their differences explicitly via the partial recovery attack, which also eases subsequent attacks to fully recovery the features.

\subsection{Experiments}\label{sec:experiments}
We conducted experiments with an implementation of Algorithm \ref{Algo:partial_recovery} for various parameters comprising configurations that we expect to encounter in practice.\footnote{Our experiments can be reproduced using a C++ library THIMBLE of which source can be downloaded from \url{http://www.stochastik.math.uni-goettingen.de/biometrics/thimble}. Note that a tutorial on how to run the attack is contained in the \href{http://www.stochastik.math.uni-goettingen.de/biometrics/fileadmin/thimble/doc-2014.09.07/EEAAttack_8h.html\#sec_arm_improved_fv}{documentation}.} In a nutshell, we found that the partial recovery attack is computationally very efficient and outputs the differing elements if the feature sets overlap in $\NumOverlap\geq (\VeatureSize+\SecretSize)/2$ elements, and succeeds even for slightly smaller overlaps with non-negligible probability. Furthermore, we observed that, for $\NumOverlap< (\VeatureSize+\SecretSize)/2$ and typical parameters suggested for implementations of the (original) fuzzy vault, the likelihood of spurious outputs is very small. Consequently, we conclude that the partial recovery attack is a very efficient tool for an adversary who is attempting to find (and uncover elements from) related vault records from different application's databases.

\subsubsection{Partial Recovery Rates}

For the ease of reading, we use the following definitions. For a set $\NumOverlapInterval$ of non-negative integers, \eg{} an interval, let $\prob{out}(\NumOverlapInterval)$ and $\prob{cor}(\NumOverlapInterval)$ be the average probabilities among all $\NumOverlap\in\NumOverlapInterval$ that Algorithm \ref{Algo:partial_recovery} outputs a triple and that it outputs a correct triple, respectively; furthermore, let $\prob{out}(\NumOverlap)=\prob{out}(\{\NumOverlap\})$ and $\prob{cor}(\NumOverlap)=\prob{cor}(\{\NumOverlap\})$.

For each $(\FieldSize,\VeatureSize,\SecretSize,\NumOverlap)$ with $\FieldSize=2^8,2^9,\ldots,2^{16}$, $\VeatureSize=\WeatureSize=24,38,44$, $\SecretSize=2,\ldots,\VeatureSize-1$ and $\NumOverlap=0,\ldots,\VeatureSize$ we executed $10^5$ tests, where we randomly generated two feature sets $\VeatureSet,\WeatureSet\subset\FiniteField$ with $\NumOverlap=|\VeatureSet\cap\WeatureSet|$. Two random polynomials $\VaultSecret,\WaultSecret\in\FiniteField[\SYM]$ of degree exactly $\SecretSize-1$ have been generated and then the polynomials $\VaultPoly=\VaultSecret+\CharPoly_\VeatureSet$ and $\WaultPoly=\WaultSecret+\CharPoly_\WeatureSet$ were input to our implementation of our partial recovery attack (Algorithm \ref{Algo:partial_recovery}).

As predicted by Theorem \ref{thm:main}, the attack was always successful for $\NumOverlap\geq (\VeatureSize+\SecretSize)/2$. 

\newcommand{\figax}{0.55}
\newcommand{\figay}{0.95}
\newcommand{\figbx}{0.975}
\newcommand{\figby}{0.75}
\ifsingle
\renewcommand{\figax}{0.4}
\renewcommand{\figbx}{0.75}
\renewcommand{\figby}{0.95}
\fi
\ifdouble
\renewcommand{\figax}{0.35}
\renewcommand{\figbx}{0.6}
\renewcommand{\figby}{0.95}
\fi

\ifdouble
\begin{figure*}[t]
\else
\begin{figure}[!t]
\fi
\begin{center}
\subfigure[Plot of $\prob{cor}(\lceil(\VeatureSize+\SecretSize)/2-1\rceil)$ for varying $\SecretSize$ and different $\FieldSize$ at $\VeatureSize=24$.]{\label{fig:pcorjumps}\begin{tikzpicture}
\begin{axis}[
legend style={at={(\figax,\figay)}},
width=0.49\textwidth,
xmin=1,xmax=24,
ymin=0,ymax=0.013,
enlargelimits=false,
xtick={1,6,11,16,21},
yticklabels={$0.25\%$,$0.50\%$,$0.75\%$,$1.00\%$,$1.25\%$},
ytick={0.0025,0.005,0.0075,0.01,0.0125},
scaled ticks=false,
xlabel={$\SecretSize$},
grid=major
]
\addplot+[mark options={fill=white,draw=blue}] table[x={k},y={pcor8}]{plot24.dat}; \addlegendentry{$\FieldSize=2^8$}
\addplot+[mark=x] table[x={k},y={pcor9}]{plot24.dat}; \addlegendentry{$\FieldSize=2^{9}$}
\addplot+[mark=+] table[x={k},y={pcor10}]{plot24.dat}; \addlegendentry{$\FieldSize=2^{10}$}
\end{axis}
\end{tikzpicture}}
\hspace{0.02\textwidth}
\subfigure[Plot of $\prob{cor}(\lceil(\VeatureSize+\SecretSize)/2-1\rceil)$ and $\prob{out}(\lceil(\VeatureSize+\SecretSize)/2-1\rceil)$ for varying $\SecretSize$ at $\VeatureSize=24$ and $\FieldSize=2^8$.]{\label{fig:pcor_for_varying_secret_sizes}\begin{tikzpicture}
\begin{axis}[
legend style={at={(\figbx,\figby)}},
width=0.49\textwidth,
xmin=2,xmax=19,
ymin=0,ymax=0.03,
enlargelimits=false,
xtick={3,8,13,18},
yticklabels={$0.5\%$,$1\%$,$1.5\%$,$2\%$,$2.5\%$},
ytick={0.005,0.01,0.015,0.02,0.025},
scaled ticks=false,
xlabel={$\SecretSize$},
grid=major
]
\addplot+[mark options={fill=white,draw=blue}] table[x={k},y={pcor8}]{plot24.dat}; \addlegendentry{$\prob{cor}(\lceil(\VeatureSize+\SecretSize)/2-1\rceil)$}
\addplot+[mark=x] table[x={k},y={poutcor8},only marks]{plot24.dat}; \addlegendentry{$\prob{out}(\lceil(\VeatureSize+\SecretSize)/2-1\rceil)$}
\end{axis}
\end{tikzpicture}}
\caption{Plots for the frequencies at which Algorithm \ref{Algo:partial_recovery} outputs a triple and at which an output triple is correct for $\NumOverlap=\lceil(\VeatureSize+\SecretSize)/2-1\rceil$.}
\end{center}
\ifdouble
\end{figure*}
\else
\end{figure}
\fi

For $\NumOverlap<(\VeatureSize+\SecretSize)/2$, we observed correct outputs only when $\VeatureSize+\SecretSize$ was odd and $\NumOverlap$ was maximal, \ie{} $\NumOverlap=\lceil(\VeatureSize+\SecretSize)/2-1\rceil$ (see Figure \ref{fig:pcorjumps}), which is equivalent to $2\NumOverlap=\VeatureSize+\SecretSize-1$. Our experiments show, for $2\NumOverlap=\VeatureSize+\SecretSize-1$ and $\SecretSize$ not too close to $\VeatureSize$, that $\prob{cor}(\NumOverlap)\approx\prob{out}(\NumOverlap) \approx 1/\FieldSize$, \ie{} almost all outputs are correct and the probability of a (correct) output is inversely proportional to the size of the finite field (see Figures \ref{fig:pcorjumps} and \ref{fig:pcor_for_varying_secret_sizes}). 
\iffull
For example, for $\VeatureSize=38$ and $\FieldSize=2^{16}$, the first $\SecretSize$ for which we observed a spurious output was $\SecretSize=26$. 
\fi
As we show in Section \ref{sec:bounds_partial}, the success probability $1/\FieldSize$ for $2\NumOverlap=\VeatureSize+\SecretSize-1$ is asymptotically optimal up to a constant for fixed $\VeatureSize$ and $\FieldSize \rightarrow \infty$. The same success probability is achieved by an attack that, after guessing one element from $(\VeatureSet\cup\WeatureSet)\setminus (\VeatureSet\cap\WeatureSet)$, solves the equations of Blanton and Aliasgari (see Section \ref{sec:PartialRecoveryAttack}); however, Algorithm \ref{Algo:partial_recovery}, beside being computationally more efficient (see below), has the advantage that, unless $\SecretSize$ is close to $\VeatureSize$, due to $\prob{cor}(\NumOverlap)\approx\prob{out}(\NumOverlap)$ there is high assurance that the output is indeed correct.

When $\SecretSize$ approaches $\VeatureSize$, the probability of spurious (incorrect) output triples increases drastically, and, thus, most output triples become spurious, but, yet, the probability $\prob{cor}(\lceil(\VeatureSize+\SecretSize)/2-1\rceil)$ of a correct output is at least  
\iffull
$1/\FieldSize$. Precisely, $\prob{cor}(\lceil(\VeatureSize+\SecretSize)/2-1\rceil)$ approximates $1/\FieldSize$ for $\SecretSize=\VeatureSize-3$ and is slightly higher and dependent on $\VeatureSize$ for $\SecretSize=\VeatureSize-1$ (Figure \ref{fig:pcor_for_varying_field_size}).
\else 
$1/\FieldSize$.
\fi

We summarize that the partial recovery attack is a serious threat even for $2\NumOverlap=\VeatureSize+\SecretSize-1$, unless extremely large finite fields are used (which would render the scheme impractical).  

\iffull
\ifdouble
\begin{figure*}[t]
\else
\begin{figure}[!t]
\fi
\begin{center}
\subfigure[For $\SecretSize=\VeatureSize-3$, the average likelihood $\prob{cor}(\lceil(\VeatureSize+\SecretSize)/2-1\rceil)$ plotted versus the field size $\FieldSize$.]{\begin{tikzpicture}
\begin{axis}[
width=0.49\textwidth,
xmin=7,xmax=17,
ymin=0,ymax=0.0051,
enlargelimits=false,
xtick={8,12,16},
xticklabels={$2^8$,$2^{12}$,$2^{16}$},
yticklabels={$0\%$,$0.25\%$,$0.50\%$},
ytick={0,0.0025,0.005},
scaled ticks=false,
xlabel={$\FieldSize$},
grid=major
]
\addplot+[mark options={fill=white,draw=blue}] table[x={logn},y={pcor24_21}]{plotn.dat}; \addlegendentry{$\VeatureSize=24$}
\addplot+[mark=x] table[x={logn},y={pcor38_35}]{plotn.dat}; \addlegendentry{$\VeatureSize=38$}
\addplot+[mark=+] table[x={logn},y={pcor44_41}]{plotn.dat}; \addlegendentry{$\VeatureSize=44$}
\end{axis}
\end{tikzpicture}}
\hspace{0.02\textwidth}
\subfigure[For $\SecretSize=\VeatureSize-1$, the average likelihood $\prob{cor}(\lceil(\VeatureSize+\SecretSize)/2-1\rceil)$ plotted versus the field size $\FieldSize$.]{\begin{tikzpicture}
\begin{axis}[
width=0.49\textwidth,
xmin=7,xmax=17,
ymin=0,ymax=0.035,
enlargelimits=false,
xtick={8,12,16},
xticklabels={$2^8$,$2^{12}$,$2^{16}$},
yticklabels={$0\%$,$1\%$,$2\%$,$3\%$},
ytick={0,0.01,0.02,0.03},
scaled ticks=false,
xlabel={$\FieldSize$},
grid=major
]
\addplot+[mark options={fill=white,draw=blue}] table[x={logn},y={pcor24_23}]{plotn.dat}; \addlegendentry{$\VeatureSize=24$}
\addplot+[mark=x] table[x={logn},y={pcor38_37}]{plotn.dat}; \addlegendentry{$\VeatureSize=38$}
\addplot+[mark=+] table[x={logn},y={pcor44_43}]{plotn.dat}; \addlegendentry{$\VeatureSize=44$}
\end{axis}
\end{tikzpicture}}
\caption{Plots for $\prob{cor}(\NumOverlap)$ at the critical $\NumOverlap=\lceil(\VeatureSize+\SecretSize)/2-1\rceil$.}
\label{fig:pcor_for_varying_field_size}
\end{center}
\ifdouble
\end{figure*}
\else
\end{figure}
\fi
\fi

As stated before, for $\NumOverlap<\lceil(\VeatureSize+\SecretSize)/2-1\rceil$ no correct outputs were observed, but nevertheless, Algorithm \ref{Algo:partial_recovery} sometimes output a (spurious) triple. Our experiments indicate that $\prob{out}(\NumOverlap)$ is independent of $\NumOverlap$ as long as $\NumOverlap< \lceil(\VeatureSize+\SecretSize)/2-1\rceil$: For fixed $\FieldSize,\VeatureSize,\SecretSize$, the variance of the observed frequencies $\prob{out}(0),\hdots,\prob{out}(\lceil(\VeatureSize+\SecretSize)/2-2\rceil)$ was close to zero, specifically, smaller than $3.6\cdot 10^{-6}$. Therefore, we considered the probability $\prob{out}(\{\NumOverlap<\lceil(\VeatureSize+\SecretSize)/2-2\rceil\})$, which is averaged over all $\NumOverlap \in [0,\lceil(\VeatureSize+\SecretSize)/2-2\rceil]$. We observed that $\prob{out}(\{\NumOverlap<\lceil(\VeatureSize+\SecretSize)/2-1\rceil\})$ decreases exponentially as $\VeatureSize-\SecretSize$ increases, and does not depend on $\VeatureSize$ (Figure \ref{fig:pout_for_varying_t}) or on the field size $\FieldSize$ (Figure \ref{fig:pout_for_varying_field_sizes}).

\ifarxiv
\renewcommand{\figax}{0.55}
\renewcommand{\figay}{0.95}
\renewcommand{\figbx}{0.55}
\renewcommand{\figby}{0.75}
\fi
\ifsingle
\renewcommand{\figax}{0.4}
\renewcommand{\figay}{0.95}
\renewcommand{\figbx}{0.35}
\renewcommand{\figby}{0.95}
\fi
\ifdouble
\renewcommand{\figbx}{0.325}
\fi
\ifdouble
\begin{figure*}[t]
\else
\begin{figure}[!t]
\fi
\begin{center}
\subfigure[Plot of $\prob{out}(\{\NumOverlap<\lceil(\VeatureSize+\SecretSize)/2-1\rceil\})$ for varying $\SecretSize$ at $\FieldSize=2^{16}$.]{\label{fig:pout_for_varying_t}\begin{tikzpicture}
\begin{axis}[
legend style={at={(\figbx,\figby)}},
width=0.49\textwidth,
xmin=8,xmax=44,
ymin=0,ymax=1.1,
enlargelimits=false,
xtick={8,13,18,23,28,33,38,43},
yticklabels={$25\%$,$50\%$,$75\%$,$100\%$}, 
ytick={0.25,0.5,0.75,1},
scaled ticks=false,
xlabel={$\SecretSize$},
grid=major
]
\addplot+[mark options={fill=white,draw=blue}] table[x={k},y={pout16}]{plot24.dat}; \addlegendentry{$\VeatureSize=24$}
\addplot+[mark=x] table[x={k},y={pout16}]{plot38.dat}; \addlegendentry{$\VeatureSize=38$}
\addplot+[mark=+] table[x={k},y={pout16}]{plot44.dat}; \addlegendentry{$\VeatureSize=44$}
\end{axis}
\end{tikzpicture}}
\hspace{0.02\textwidth}
\subfigure[Plot of $\prob{out}(\{\NumOverlap<\lceil(\VeatureSize+\SecretSize)/2-1\rceil\})$ for varying $\SecretSize$ at $\VeatureSize=24$.]{\label{fig:pout_for_varying_field_sizes}\begin{tikzpicture}
\begin{axis}[
legend style={at={(\figax,\figay)}},
width=0.49\textwidth,
xmin=1,xmax=24,
ymin=0,ymax=1.1,
enlargelimits=false,
xtick={1,6,11,16,21},
yticklabels={$25\%$,$50\%$,$75\%$,$100\%$}, 
ytick={0.25,0.5,0.75,1},
scaled ticks=false,
xlabel={$\SecretSize$},
grid=major
]
\addplot+[mark options={fill=white,draw=blue}] table[x={k},y={pout8}]{plot24.dat}; \addlegendentry{$\FieldSize=2^8$}
\addplot+[mark=x] table[x={k},y={pout9},only marks]{plot24.dat}; \addlegendentry{$\FieldSize=2^{9}$}
\addplot+[mark=+] table[x={k},y={pout10},only marks]{plot24.dat}; \addlegendentry{$\FieldSize=2^{10}$}
\end{axis}
\end{tikzpicture}}
\caption{Plots for the average frequencies at which Algorithm \ref{Algo:partial_recovery} outputs a triple provided $\NumOverlap<\lceil(\VeatureSize+\SecretSize)/2-1\rceil$.}
\end{center}
\ifdouble
\end{figure*}
\else
\end{figure}
\fi

We also kept track of the computer time required to run the partial recovery attack. In particular, for each combination of $(\FieldSize,\VeatureSize,\SecretSize,\NumOverlap)$ that we tested, running a single partial recovery attack could be performed in less than $25ms$ in average on a single core of a $2.6$ GHz server. Moreover, in order to demonstrate the efficiency of our implementation of the attack for extreme parameters, we ran $10^5$ tests for randomly chosen vault pairs where $\FieldSize=2^{16}$, $\VeatureSize=\WeatureSize=256$, $\SecretSize=200$, and $\NumOverlap=228$ which consumed approximately $34ms$ in average on the same computer.

\subsubsection{Discussion}
Our experiments show that, unless $\SecretSize$ is close to $\VeatureSize$, our attack efficiently and reliably determines if two vault records are related, \ie{} if the protected feature sets overlap in at least $(\VeatureSize+\SecretSize)/2$ elements and, if true, recovers the feature elements in which the sets differ. In implementations of the (original) fuzzy vault for biometric template protection, typically, $\SecretSize < \VeatureSize/2$, which implies that almost all outputs of the attack are correct. For example, for the parameters $\VeatureSize=24$, $\SecretSize=9,\ldots,11$, and $\FieldSize=2^{16}$ used by a minutiae-based fuzzy vault in \cite{bib:NandakumarJainPankanti2007}, all output triples observed in our experiments were correct. The same holds true for the parameters $\VeatureSize=38$ and $\SecretSize=15$ used by \cite{bib:ClancyKiyavashLin2003} and $\VeatureSize=44$ and $\SecretSize=7,\ldots,12$ used in a minutiae-based implementation eligible for the improved fuzzy vault scheme \cite{bib:Tams2013prealign}.

We conclude that the partial recovery attack is a serious risk that must be taken into account when designing systems based on the improved fuzzy vault scheme. The recovered elements may even be used to ease full recovery of the templates. In Section \ref{sec:FullRecoveryAttack}, we develop a full recovery attack based on our partial recovery attack for which it can be even proven that it is optimal in an information theoretic sense.

\iffull
\subsubsection{Analysis of Spurious Outputs}\label{sec:PartRecsExtField}
Our experiments have shown that, for $\NumOverlap<(\VeatureSize+\SecretSize)/2$, Algorithm \ref{Algo:partial_recovery} sometimes returns incorrect outputs, in particular, if $\SecretSize$ is close to $\VeatureSize$. In order to verify that Statement b) of Theorem \ref{thm:main} cannot be strengthened, \ie{} that there are indeed outputs that do not correspond to a solution in the base field $\FiniteField$, we conducted additional experiments. In these experiments, we used the outputs $(\NumOverlap^*,\VeatureSet_0,\WeatureSet_0)$ of Algorithm \ref{Algo:partial_recovery} as a starting point for a full exhaustive search for sets $\hat{\VeatureSet}\supset \VeatureSet_0$ and $\hat{\WeatureSet}\supset \WeatureSet_0$ so that $\VaultPoly=\hat{\VaultSecret}+\CharPoly_{\hat{\VeatureSet}}$ and $\WaultPoly=\hat{\WaultSecret}+\CharPoly_{\hat{\WeatureSet}}$. 

For $\FieldSize=2^8$, $\VeatureSize=10$, $\SecretSize=3$, and $\NumOverlap=5$, we observed that in $\prob{out}(\NumOverlap)\approx 0.01034\%$ of $10^7$ tests the partial recovery attack output a triple $(\NumOverlap^*,\VeatureSet_0,\WeatureSet_0)$; furthermore, we found $\NumOverlap^*\geq(\VeatureSize+\SecretSize)/2$ for $94.39072\%$ of the output triple. However, our exhaustive search found no feature sets $\hat{\VeatureSet}\supset \VeatureSet_0$ and $\hat{\WeatureSet}\supset \WeatureSet_0$ in $\FiniteField$ protected by the vaults. Hence, we confirmed that Algorithm \ref{Algo:partial_recovery} can indeed output spurious triples that do not correspond to feature sets in $\FiniteField$ --- even if a triple with $\NumOverlap^*\geq(\VeatureSize+\SecretSize)/2$ is output.
\fi

\subsection{Full Recovery Attack}\label{sec:FullRecoveryAttack}
For $\NumOverlap \geq (\VeatureSize+\SecretSize)/2$, our partial recovery attack allows to efficiently determine from vault records of sufficiently overlapping feature sets those elements that are not in both sets. If the number $\WErrorSize=\VeatureSize-\NumOverlap$ of recovered elements of the feature set $\VeatureSet$ is at least $\SecretSize$, they can be used to ease full recovery of $\VeatureSet$. Note that for $\VeatureSize \geq 3\SecretSize + 2x$ with $x\geq 0$, we have $\WErrorSize \geq \SecretSize$ whenever $\NumOverlap \leq (\VeatureSize+\SecretSize)/2 - x$. 

In the case $\WErrorSize < \SecretSize$, although the known feature elements do not allow deterministic recovery of the complete feature sets, it can increase the success probability of full recovery attacks. In \cite{bib:BlantonAliasgari2013} it was suggested to guess the remaining feature elements, resulting in a success probability of $\tbinom{\FieldSize-\VErrorSize-\WErrorSize}{\VeatureSize-\WErrorSize}$ where $\VErrorSize$ is the number of elements recovered from $\WeatureSet$. However, due to the error correction capabilities of the scheme, it suffices to guess only $\SecretSize-\WErrorSize$ many feature elements, resulting in a success probability of $\tbinom{\NumOverlap}{\SecretSize-\WErrorSize} / \tbinom{\FieldSize}{\SecretSize-\WErrorSize}$. 

We extend this approach to smaller $\NumOverlap$. The basic idea is to guess a sufficient number of elements from $\VeatureSet\setminus\WeatureSet$ and $\WeatureSet\setminus\VeatureSet$ until the sets $\VeatureSet'$ and $\WeatureSet'$ of remaining elements satisfy the condition for the partial recovery attack with correspondingly reduced parameters. To compute the vault records $\VaultPoly'$ and $\WaultPoly'$ of the reduced sets from the original vault records, we need the following lemma. 

\begin{lemma}\label{lem:coeff_char_poly}
Let $\VeatureSet$ be a feature set of size $\VeatureSize$ and $\VaultCoeff_0, \ldots,\VaultCoeff_{\VeatureSize}$ be the coefficients of the characteristic polynomial $\CharPoly_{\VeatureSet}(\SYM)$ of $\VeatureSet$. Let $\VeatureElement\in\VeatureSet$ and $\WaultCoeff_0,\ldots,\WaultCoeff_{\VeatureSize-1}$ be the coefficients of the characteristic polynomial $\CharPoly_{\VeatureSet\setminus\{\VeatureElement\}}$. Then, for $m=\SecretSize,\ldots \VeatureSize-2$
\begin{equation}\label{eq:reduction_coeff0}
\WaultCoeff_m=\VaultCoeff_{m+1}+\VeatureElement\cdot\WaultCoeff_{m+1}
\end{equation}
In particular, the coefficients $\WaultCoeff_0, \ldots,\WaultCoeff_{\VeatureSize-1}$ are given by the equations
\begin{equation}\label{eq:reduction_coeff}
\WaultCoeff_m = \sum_{i=m+1}^{\VeatureSize} \VeatureElement^{i-m-1} \VaultCoeff_{i}.
\end{equation} 
\end{lemma}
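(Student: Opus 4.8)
The plan is to use the factorization $\CharPoly_{\VeatureSet}(\SYM)=(\SYM-\VeatureElement)\cdot\CharPoly_{\VeatureSet\setminus\{\VeatureElement\}}(\SYM)$, which holds because $\VeatureElement\in\VeatureSet$ and hence the linear factor $(\SYM-\VeatureElement)$ divides out of $\prod_{\absc\in\VeatureSet}(\SYM-\absc)$, and then simply compare coefficients on both sides. Writing $\CharPoly_{\VeatureSet}(\SYM)=\sum_{i=0}^{\VeatureSize}\VaultCoeff_i\SYM^i$ and $\CharPoly_{\VeatureSet\setminus\{\VeatureElement\}}(\SYM)=\sum_{j=0}^{\VeatureSize-1}\WaultCoeff_j\SYM^j$, I would first record the monic/boundary conventions $\VaultCoeff_{\VeatureSize}=1$, $\WaultCoeff_{\VeatureSize-1}=1$, and, for the bookkeeping, $\WaultCoeff_{\VeatureSize}=0$ (and $\WaultCoeff_{-1}=0$). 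Expanding $(\SYM-\VeatureElement)\sum_{j}\WaultCoeff_j\SYM^j$ and reading off the coefficient of $\SYM^{m+1}$ gives $\VaultCoeff_{m+1}=\WaultCoeff_m-\VeatureElement\WaultCoeff_{m+1}$ for all $m\in\{0,\ldots,\VeatureSize-1\}$; rearranging this is exactly (\ref{eq:reduction_coeff0}). I would remark that the identity in fact holds for every such $m$, and that the restriction $m\geq\SecretSize$ in the statement is only there so that all coefficients $\VaultCoeff_i$ that enter (those with $i\geq m+1\geq\SecretSize+1$) are among the published, un-blinded coefficients of the vault record.

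For (\ref{eq:reduction_coeff}) I would unroll the recursion (\ref{eq:reduction_coeff0}) by a finite downward induction on $\VeatureSize-1-m$, i.e.\ starting from $m=\VeatureSize-1$ and decreasing $m$. The base case $m=\VeatureSize-1$ is $\WaultCoeff_{\VeatureSize-1}=1=\VaultCoeff_{\VeatureSize}$, which matches the right-hand side of (\ref{eq:reduction_coeff}) (a single term $\VeatureElement^{0}\VaultCoeff_{\VeatureSize}$). For the inductive step, substitute the hypothesis for $\WaultCoeff_{m+1}$ into $\WaultCoeff_m=\VaultCoeff_{m+1}+\VeatureElement\WaultCoeff_{m+1}$:
\[
\WaultCoeff_m=\VaultCoeff_{m+1}+\VeatureElement\sum_{i=m+2}^{\VeatureSize}\VeatureElement^{\,i-m-2}\VaultCoeff_i=\VaultCoeff_{m+1}+\sum_{i=m+2}^{\VeatureSize}\VeatureElement^{\,i-m-1}\VaultCoeff_i=\sum_{i=m+1}^{\VeatureSize}\VeatureElement^{\,i-m-1}\VaultCoeff_i,
\]
which is the claimed closed form. (Alternatively one could bypass the induction and note that $(\SYM-\VeatureElement)$ times the polynomial $\sum_{m}\big(\sum_{i=m+1}^{\VeatureSize}\VeatureElement^{i-m-1}\VaultCoeff_i\big)\SYM^m$ telescopes to $\CharPoly_{\VeatureSet}(\SYM)-\CharPoly_{\VeatureSet}(\VeatureElement)=\CharPoly_{\VeatureSet}(\SYM)$, using $\CharPoly_{\VeatureSet}(\VeatureElement)=0$, and then divide; but the coefficient recursion is cleaner to present.)

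There is essentially no genuine obstacle in this lemma — it is a routine polynomial-division computation. The only points requiring a little care are the boundary conventions ($\VaultCoeff_{\VeatureSize}=\WaultCoeff_{\VeatureSize-1}=1$ and $\WaultCoeff_{\VeatureSize}=0$) and checking that the index ranges produced by the recursion line up with the ranges asserted in (\ref{eq:reduction_coeff0}) and (\ref{eq:reduction_coeff}); I would therefore state those conventions explicitly at the outset so that both the coefficient comparison and the induction go through without ambiguity.
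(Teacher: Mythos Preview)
Your proof is correct and follows essentially the same approach as the paper: use the factorization $\CharPoly_{\VeatureSet}(\SYM)=(\SYM-\VeatureElement)\CharPoly_{\VeatureSet\setminus\{\VeatureElement\}}(\SYM)$ to obtain the recursion (\ref{eq:reduction_coeff0}) by coefficient comparison, and then unroll it using $\WaultCoeff_{\VeatureSize-1}=\VaultCoeff_{\VeatureSize}=1$ to get the closed form (\ref{eq:reduction_coeff}). Your version is simply more explicit about the boundary conventions and the induction, which is fine.
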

\iffull
\BEGINPROOF
From $\CharPoly_{\VeatureSet}(\SYM) = (\SYM-\VeatureElement) \CharPoly_{\VeatureSet'}(\SYM)$, we get Equation (\ref{eq:reduction_coeff0}) for $m=\SecretSize,\ldots \VeatureSize - 2$. By recursively applying this equation and using $\WaultCoeff_{\VeatureSize - 1}= \VaultCoeff_{\VeatureSize}=1$ we obtain Equation (\ref{eq:reduction_coeff}).  
\ENDPROOF
\fi

\subsubsection{The Attack}	
	
We next describe the algorithmic of our full recovery attack.

\begin{alg}[Full Recovery Attack]
\label{Algo:full_recovery}
\AlgInput Two vault records $\VaultPoly(\SYM)=\sum_{j=0}^{\VeatureSize}\VaultCoeff_j\cdot\SYM^j$ and $\WaultPoly(\SYM)=\sum_{j=0}^{\WeatureSize}\WaultCoeff_j\cdot\SYM^j$ of sets $\VeatureSet$ and $\WeatureSet$, respectively, with $\VeatureSize\geq\WeatureSize$, and a natural number $\NumOverlap'$ as candidate for $\NumOverlap=|\VeatureSet\cap\WeatureSet|$.
\AlgOutput Either a candidate pair for $(\VeatureSet,\WeatureSet)$, or \AlgFailure. 
\end{alg}

\begin{enumerate}

\item Initialize $\ThreshDist\leftarrow\max\left(0,\lceil(\VeatureSize+\SecretSize)/2\rceil - \NumOverlap'\right)$ and $\VeatureSet^*\leftarrow\emptyset$. \label{full_initialize}

\item If $\ThreshDist>0$, reduce the vault records $\VaultPoly$ and $\WaultPoly$ as follows.\label{full_1}

\begin{enumerate}

\item Guess $\ThreshDist$ elements $\VeatureElement_1,\ldots, \VeatureElement_{\ThreshDist}$ from $\VeatureSet\setminus\WeatureSet$ and use (\ref{eq:reduction_coeff}) to iteratively compute the coefficients $(\bar{\VaultCoeff}_{\SecretSize-\ThreshDist}, \ldots, \bar{\VaultCoeff}_{\VeatureSize-\ThreshDist})$ of the characteristic polynomial $\CharPoly_{\bar{\VeatureSet}}$ of $\bar{\VeatureSet} = \VeatureSet \setminus \{\VeatureElement_1,\ldots,\VeatureElement_{\ThreshDist}\}$ from $(\VaultCoeff_{\SecretSize},\ldots,\VaultCoeff_{\VeatureSize})$
and add $\{\VeatureElement_1,\ldots,\VeatureElement_{\ThreshDist}\}$ to $\VeatureSet^*$.\label{full_fill1}

\item Guess $\ThreshDist$ elements $\WeatureElement_1,\ldots, \WeatureElement_{\ThreshDist}$ from $\WeatureSet\setminus\VeatureSet$ and use (\ref{eq:reduction_coeff}) to iteratively compute the coefficients $(\bar{\WaultCoeff}_{\SecretSize-\ThreshDist}, \ldots, \bar{\WaultCoeff}_{\WeatureSize-\ThreshDist})$ of the characteristic polynomial $\CharPoly_{\bar{\WeatureSet}}$ of $\bar{\WeatureSet} = \WeatureSet \setminus \{\WeatureElement_1,\ldots,\WeatureElement_{\ThreshDist}\}$ from $(\WaultCoeff_{\SecretSize},\ldots,\WaultCoeff_{\WeatureSize})$. 
\label{full_fill2}

\end{enumerate}

Otherwise, if $\ThreshDist=0$, let $(\bar{\VaultCoeff}_\SecretSize,\ldots,\bar{\VaultCoeff}_\VeatureSize)\leftarrow(\VaultCoeff_\SecretSize,\ldots,\VaultCoeff_\VeatureSize)$ and $(\bar{\WaultCoeff}_\SecretSize,\ldots,\bar{\WaultCoeff}_\WeatureSize)\leftarrow(\WaultCoeff_\SecretSize,\ldots,\VaultCoeff_\WeatureSize)$.

\item Invoke the partial recovery attack (Algorithm~\ref{Algo:partial_recovery}) with input 
$\sum_{j=\bar{\SecretSize}}^{\bar{\VeatureSize}}\bar{\VaultCoeff}_j\cdot\SYM^j$ and $\sum_{j=\bar{\SecretSize}}^{\bar{\WeatureSize}}\bar{\WaultCoeff}_j\cdot\SYM^j$ where $\bar{\VeatureSize}=\VeatureSize-\ThreshDist$, $\bar{\WeatureSize}=\WeatureSize-\ThreshDist$ and $\bar{\SecretSize}=\SecretSize-\ThreshDist$ (here, $\bar{\SecretSize}$ is the degree of the secret polynomials); if the partial recovery attack returns \AlgFailure, do so as well;
otherwise, if $(\NumOverlap^*,\VeatureSet_0,\WeatureSet_0)$ is the output of Algorithm~\ref{Algo:partial_recovery}, add $\VeatureSet^*\leftarrow\VeatureSet_0\cup\VeatureSet^*$ to $\VeatureSet^*$ and update $\NumOverlap'\leftarrow\NumOverlap^*$. \label{algo_ful:step_partial}

\item If $\WErrorSize'=\VeatureSize-\NumOverlap' < \SecretSize$, guess $\LastGuess=\SecretSize - \WErrorSize'$ many elements $\VeatureElement_{\WErrorSize'+1},\hdots,\VeatureElement_{\SecretSize}$ from $\VeatureSet \cap \WeatureSet$ and add them to $\VeatureSet^* $. \label{full_3}\label{full_complement}

\item Unlock $\VaultPoly$ using $\VeatureSet^*$: Compute the unique polynomial $\VaultSecret^*$ of degree smaller than $\SecretSize$ interpolating the pairs $(\VeatureElement,\VaultPoly(\VeatureElement))$ for all $\VeatureElement\in \VeatureSet^*$. If $\VaultPoly-\VaultSecret^*$ splits into distinct linear factors, set $\VeatureSet'$ as the set of its roots; otherwise return \AlgFailure.\label{full_unlock}

\item Set $\OverlapSet=\VeatureSet' \setminus (\VeatureSet_0 \cup \{\VeatureElement_1,\ldots,\VeatureElement_{\ThreshDist}\})$ as candidate for $\VeatureSet \cap \WeatureSet$, set $\WeatureSet'=\WeatureSet_0 \cup \{\WeatureElement_1,\ldots,\WeatureElement_{\ThreshDist}\} \cup \OverlapSet$, and output $(\VeatureSet', \WeatureSet')$.\label{full_last}

\end{enumerate}

If $\NumOverlap\geq(\VeatureSize+\SecretSize)/2$, the full recovery attack should be run with input $\NumOverlap'=\lceil(\VeatureSize+\SecretSize)/2\rceil$ which will be updated to the correct $\NumOverlap$ in Step \ref{algo_ful:step_partial}; otherwise, if $\NumOverlap<(\VeatureSize+\SecretSize)$, the full recovery attack requires that $\NumOverlap'=\NumOverlap$ has been guessed correctly; even if $\NumOverlap'=\NumOverlap$, the attack may return \AlgFailure~and this happens if the elements guessed in Step \ref{full_fill1}, \ref{full_fill2}, or \ref{full_complement} are incorrect of which probability $\prob{Full}$ can be lower bounded by the following theorem.

\subsubsection{Analysis}

\begin{thm}\label{thm:full}
Assume $\VeatureSize\geq\WeatureSize\geq \SecretSize$ and $\NumOverlap'=\NumOverlap$. Then Algorithm~\ref{Algo:full_recovery} outputs $(\VeatureSet, \WeatureSet)$ with probability 
\begin{equation*} 
\prob{Full} \geq 
\frac{\binom{\VeatureSize-\NumOverlap}{\ThreshDist}\binom{\WeatureSize-\NumOverlap}{\ThreshDist}\binom{\NumOverlap}{\LastGuess}}{\binom{\FieldSize}{\ThreshDist}^2 \binom{\FieldSize}{\LastGuess}}  
\end{equation*}
where $\ThreshDist= \max\left(0,\lceil(\VeatureSize+\SecretSize)/2\rceil - \NumOverlap\right)$ and $\LastGuess=\max\left(0,\SecretSize - \WErrorSize\right)$.

Furthermore, if $\NumOverlap > \max(\SecretSize,\VeatureSize/2)+\ParityBit(\VeatureSize+\SecretSize)$, where $\ParityBit(\SomeInteger)$ is the parity bit of an integer $\SomeInteger$, the number of elements that Algorithm~\ref{Algo:full_recovery} guesses in Step \ref{full_1} and \ref{full_3} is smaller than $\SecretSize$, and consequently, $\prob{Full} \in \Oh(\FieldSize^{\SecretSize-1})$ for fixed $\VeatureSize$.
\end{thm}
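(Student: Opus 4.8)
The plan is to reduce the whole analysis to a single \emph{correct-guess event} $\mathcal G$: that the $\ThreshDist$ elements drawn in Step~\ref{full_fill1} lie in $\VeatureSet\setminus\WeatureSet$, the $\ThreshDist$ elements drawn in Step~\ref{full_fill2} lie in $\WeatureSet\setminus\VeatureSet$, and the $\LastGuess$ elements drawn in Step~\ref{full_complement} lie in $\VeatureSet\cap\WeatureSet$. These three draws sample from pairwise disjoint subsets of $\FiniteField$ of sizes $\VeatureSize-\NumOverlap$, $\WeatureSize-\NumOverlap$ and $\NumOverlap$, so they are independent and
\[
\Pr[\mathcal G]=\frac{\binom{\VeatureSize-\NumOverlap}{\ThreshDist}\binom{\WeatureSize-\NumOverlap}{\ThreshDist}\binom{\NumOverlap}{\LastGuess}}{\binom{\FieldSize}{\ThreshDist}^2\binom{\FieldSize}{\LastGuess}}.
\]
Thus the first statement follows once we show that $\mathcal G$ forces Algorithm~\ref{Algo:full_recovery} to output $(\VeatureSet,\WeatureSet)$; when $\ThreshDist$ or $\LastGuess$ exceeds the size of its target set the right-hand side is $0$ and there is nothing to prove.

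Assume $\mathcal G$. Since $\CharPoly_\VeatureSet$ vanishes on $\VeatureSet$ and $\CharPoly_\WeatureSet$ on $\WeatureSet$, we have $\VaultPoly(a)=\VaultSecret(a)$ for $a\in\VeatureSet$ and $\WaultPoly(b)=\WaultSecret(b)$ for $b\in\WeatureSet$; this is used several times. By Lemma~\ref{lem:coeff_char_poly}, one application of (\ref{eq:reduction_coeff}) deletes a known element of a set from the top coefficients of its characteristic polynomial, so $\ThreshDist$ iterations produce exactly the coefficients of degrees $\SecretSize-\ThreshDist,\ldots,\VeatureSize-\ThreshDist$ of $\CharPoly_{\bar{\VeatureSet}}$ and of degrees $\SecretSize-\ThreshDist,\ldots,\WeatureSize-\ThreshDist$ of $\CharPoly_{\bar{\WeatureSet}}$, with $\bar{\VeatureSet}=\VeatureSet\setminus\{a_1,\ldots,a_{\ThreshDist}\}$ and $\bar{\WeatureSet}=\WeatureSet\setminus\{b_1,\ldots,b_{\ThreshDist}\}$; hence the two polynomials handed to Algorithm~\ref{Algo:partial_recovery} in Step~\ref{algo_ful:step_partial} are generalized vault records of $\bar{\VeatureSet}$ and $\bar{\WeatureSet}$ with secret degree bound $\bar{\SecretSize}=\SecretSize-\ThreshDist$. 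By $\mathcal G$ the removed elements lie outside $\VeatureSet\cap\WeatureSet$, so $|\bar{\VeatureSet}\cap\bar{\WeatureSet}|=\NumOverlap$, and $\ThreshDist=\max(0,\lceil(\VeatureSize+\SecretSize)/2\rceil-\NumOverlap)$ yields $\NumOverlap\ge\lceil(\VeatureSize+\SecretSize)/2\rceil-\ThreshDist\ge\tfrac12((\VeatureSize-\ThreshDist)+(\SecretSize-\ThreshDist))$; therefore Theorem~\ref{thm:main}\,a) applies to these reduced records and Algorithm~\ref{Algo:partial_recovery} returns $(\NumOverlap,\bar{\VeatureSet}\setminus\bar{\WeatureSet},\bar{\WeatureSet}\setminus\bar{\VeatureSet})$. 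After Step~\ref{algo_ful:step_partial} we then have $\NumOverlap'=\NumOverlap$ and $\VeatureSet^*=(\bar{\VeatureSet}\setminus\bar{\WeatureSet})\cup\{a_1,\ldots,a_{\ThreshDist}\}=\VeatureSet\setminus\WeatureSet$, and after Step~\ref{full_complement} the set $\VeatureSet^*$ contains at least $\SecretSize$ elements, all lying in $\VeatureSet$. Consequently the degree-$<\SecretSize$ interpolant $\VaultSecret^*$ of $\{(a,\VaultPoly(a)):a\in\VeatureSet^*\}$ equals $\VaultSecret$, so $\VaultPoly-\VaultSecret^*=\CharPoly_\VeatureSet$ splits into distinct linear factors and Step~\ref{full_unlock} recovers $\VeatureSet'=\VeatureSet$. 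Finally, in Step~\ref{full_last}, $\VeatureSet_0\cup\{a_1,\ldots,a_{\ThreshDist}\}=\VeatureSet\setminus\WeatureSet$ gives $\OverlapSet=\VeatureSet\cap\WeatureSet$, and $\WeatureSet_0\cup\{b_1,\ldots,b_{\ThreshDist}\}=\WeatureSet\setminus\VeatureSet$ gives $\WeatureSet'=\WeatureSet$, so the output is $(\VeatureSet,\WeatureSet)$, as required.

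For the second statement, the core fact to prove is the arithmetic implication: if $\NumOverlap>\max(\SecretSize,\VeatureSize/2)+\ParityBit(\VeatureSize+\SecretSize)$, then $2\ThreshDist+\LastGuess<\SecretSize$. Writing $p=\ParityBit(\VeatureSize+\SecretSize)$, so that $\lceil(\VeatureSize+\SecretSize)/2\rceil=(\VeatureSize+\SecretSize+p)/2$, $\ThreshDist=\max(0,(\VeatureSize+\SecretSize+p)/2-\NumOverlap)$ and $\LastGuess=\max(0,\SecretSize-\VeatureSize+\NumOverlap)$, a short case split suffices: if $\ThreshDist=0$ then $2\ThreshDist+\LastGuess=\LastGuess\le\SecretSize-\VeatureSize+\NumOverlap<\SecretSize$ (using $\NumOverlap<\VeatureSize$, i.e.\ $\VeatureSet\ne\WeatureSet$); if $\ThreshDist>0$ then $2\ThreshDist=\VeatureSize+\SecretSize+p-2\NumOverlap$, and either $\LastGuess=0$, in which case $\NumOverlap>\VeatureSize/2+p\ge(\VeatureSize+p)/2$ gives $2\ThreshDist<\SecretSize$, or $\LastGuess>0$, in which case $2\ThreshDist+\LastGuess=2\SecretSize+p-\NumOverlap<\SecretSize$ because $\NumOverlap>\SecretSize+p$. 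Given $2\ThreshDist+\LastGuess\le\SecretSize-1$, for fixed $\VeatureSize$ the numerator in the bound of the first statement is a positive constant while the denominator $\binom{\FieldSize}{\ThreshDist}^2\binom{\FieldSize}{\LastGuess}$ is $\Theta(\FieldSize^{2\ThreshDist+\LastGuess})=\Oh(\FieldSize^{\SecretSize-1})$, so $\prob{Full}=\Omega(\FieldSize^{1-\SecretSize})$, which is the claim.

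The main obstacle I expect is bookkeeping rather than ideas: verifying that the $\ThreshDist$-fold application of (\ref{eq:reduction_coeff}) produces precisely the coefficient block that a vault record of $\bar{\VeatureSet}$ with secret degree bound $\SecretSize-\ThreshDist$ should have, and carefully tracking the ceiling in the inequality $\NumOverlap\ge\tfrac12((\VeatureSize-\ThreshDist)+(\SecretSize-\ThreshDist))$ that invokes Theorem~\ref{thm:main}\,a). The interpolation/unlocking step, the independence of the three sets of guesses, and the parity case analysis are routine. One should also dispatch the degenerate corners --- $\VeatureSet=\WeatureSet$, $\WeatureSet\subseteq\VeatureSet$, or $\ThreshDist>\SecretSize$ forcing $\SecretSize-\ThreshDist<0$ --- in which the first-statement bound is vacuous or a mild extra assumption is needed; these do not affect the substance of either claim.
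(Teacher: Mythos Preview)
Your proof is correct and follows essentially the same approach as the paper's: bound the success probability below by the probability that all guesses in Steps~\ref{full_1} and~\ref{full_3} hit their targets, then verify via Theorem~\ref{thm:main}\,a) that correct guesses force the correct output; for the second claim, both you and the paper do the same four-way case split on whether $\ThreshDist$ and $\LastGuess$ vanish (you just organize it as $\ThreshDist=0$ versus $\ThreshDist>0$ first, the paper as $\LastGuess>0$ versus $\LastGuess=0$ first). Your bookkeeping is slightly more explicit than the paper's, and you correctly flag the degenerate $\VeatureSet=\WeatureSet$ corner and read the asymptotic conclusion as $\prob{Full}\in\Omega(\FieldSize^{1-\SecretSize})$.
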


\iffull
\BEGINPROOF
With probability 
\begin{equation*} 
\frac{\binom{\VeatureSize-\NumOverlap}{\ThreshDist}\binom{\WeatureSize-\NumOverlap}{\ThreshDist}}{\binom{\FieldSize}{\ThreshDist}^2}  
\end{equation*}
the guesses in Step \ref{full_1} are correct. In this case, $\NumOverlap\geq (\bar{\VeatureSize}+\bar{\SecretSize})/2$ and, hence, the partial recovery attack outputs 
$\VeatureSet_0  = \bar{\VeatureSet}\setminus \bar{\WeatureSet} = (\VeatureSet\setminus \WeatureSet)\setminus \{\VeatureElement_1,\ldots,\VeatureElement_{\ThreshDist}\}$
and $\WeatureSet_0  = (\WeatureSet\setminus \VeatureSet)\setminus \{\WeatureElement_1,\ldots,\WeatureElement_{\ThreshDist}\}$. 

Furthermore, the probability that the guesses in Step \ref{full_3} are correct is at least $\tbinom{\NumOverlap}{\LastGuess}/ \tbinom{\FieldSize}{\LastGuess}$. Assuming success in the previous steps, we have $\VeatureSet^*=\VeatureSet\setminus \WeatureSet$, which has $\WErrorSize$ elements. If, in Step \ref{full_3}, all $\SecretSize-\WErrorSize$ elements were correctly guessed from $\VeatureSet\cap \WeatureSet$, then $\VeatureSet^*$ contains $\SecretSize$ elements from $\VeatureSet$, and thus, we obtain $\VeatureSet'=\VeatureSet$. The probability that these guesses are correct is at least $\tbinom{\NumOverlap}{\LastGuess} / \tbinom{\FieldSize}{\LastGuess}$. 

Finally, provided that the guesses in Step \ref{full_1} were correct, we get $\OverlapSet = \VeatureSet\cap \WeatureSet$ and, therefore, $\WeatureSet'=\WeatureSet$. Multiplying all probabilities completes the proof of the first claim.

The number of elements guessed in Step \ref{full_fill1}, \ref{full_fill2} and \ref{full_3} is $2\ThreshDist+m$. Since both variables, $\ThreshDist$ and $\LastGuess$ are defined with a maximum function, we need to distinguish different cases. 

First consider the case $\WErrorSize < \SecretSize$, which implies $\LastGuess=\SecretSize - \WErrorSize$. In this case, if $\NumOverlap < \lceil(\VeatureSize+\SecretSize)/2\rceil$, we have $\ThreshDist = \lceil(\VeatureSize+\SecretSize)/2\rceil - \NumOverlap$, and, hence, $2\ThreshDist+\LastGuess$ evaluates to $2\SecretSize-\NumOverlap$ if $\VeatureSize+\SecretSize$ is even, and to $2\SecretSize-\NumOverlap+1$ if $\VeatureSize+\SecretSize$ is odd. Since $\NumOverlap > \SecretSize+\ParityBit(\VeatureSize+\SecretSize)$, in either case, the result is smaller than $\SecretSize$. On the other hand, if  $\NumOverlap \geq \lceil(\VeatureSize+\SecretSize)/2\rceil$, we have $\ThreshDist = 0$, and  $2\ThreshDist+\LastGuess$ evaluates to $\SecretSize - \WErrorSize$ which is obviously smaller than $\SecretSize$.

Now consider the case $\WErrorSize \geq \SecretSize$, which implies $\LastGuess=0$. In this case, if $\NumOverlap < \lceil(\VeatureSize+\SecretSize)/2\rceil$, we have $\ThreshDist = \lceil(\VeatureSize+\SecretSize)/2\rceil - \NumOverlap$, and, hence, $2\ThreshDist+\LastGuess$ evaluates to $\VeatureSize+\SecretSize-2\NumOverlap$ if $\VeatureSize+\SecretSize$ is even, and to $\VeatureSize+\SecretSize-2\NumOverlap+1$ if $\VeatureSize+\SecretSize$ is odd. Since $\NumOverlap > \VeatureSize/2+\ParityBit(\VeatureSize+\SecretSize)$, in either case, the result is smaller than $\SecretSize$. On the other hand, if  $\NumOverlap \geq \lceil(\VeatureSize+\SecretSize)/2\rceil$, we have $\ThreshDist = 0$, and $2\ThreshDist+\LastGuess$ evaluates to zero. 
\ENDPROOF
\fi

Theorem \ref{thm:full} shows that for  $\NumOverlap > \max(\SecretSize,\VeatureSize/2)+\ParityBit(\VeatureSize+\SecretSize)$ and large finite fields, our full recovery attack is
better than the obvious attack that guesses $\SecretSize$ elements of $\VeatureSet \cap \WeatureSet$ and uses them to recover both feature sets. For smaller $\NumOverlap$, however, it is more efficient  to guess $\min(\SecretSize,\NumOverlap)$ elements of $\VeatureSet \cap \WeatureSet$ and, if $\NumOverlap<\SecretSize$, additional $\SecretSize-\NumOverlap$ elements of each of the difference sets $\VeatureSet\setminus\WeatureSet$ and $\WeatureSet\setminus\VeatureSet$. 
\iffull
In Section \ref{sec:bounds}, we will show that our attack is, for $\NumOverlap \geq \min\left(\VeatureSize-\SecretSize, (\VeatureSize+\SecretSize)/2\right)$, even asymptotically optimal in an information theoretical sense, when considering its probability as a function in $\FieldSize$. 
\fi

\iffull
\subsubsection{Brute-Force Attack}\label{sec:FullRecoveryDiscussion}
While executing Algorithm~\ref{Algo:full_recovery} reveals the feature sets with the probability estimated in Theorem \ref{thm:full}, a real attacker typically wishes to increase the success probability at the cost of computation time. This can obviously be achieved by repeating Algorithm~\ref{Algo:full_recovery} with different guesses, if it has been unsuccessful. 

The attacker knows that the full recovery has failed as soon as the partial recovery attack invoked in Step \ref{algo_ful:step_partial} returns \AlgFailure~or when unlocking $\VaultPoly$ fails in Step \ref{full_unlock}. Furthermore, in some circumstances, additional information may be available to distinguish the correct from wrong outputs; for example, many implementations of the original fuzzy vault encode a CRC code into the secret polynomials \cite{bib:UludagPankantiJain2005,bib:UludagJain2006,bib:NandakumarJainPankanti2007,bib:NandakumarNagarJain2007,bib:Nagar2010} or store a cryptographic hash value of its coefficients \cite{bib:MerkleEtAl2011MultiFingerVault,bib:Tams2013prealign} to aid verification of its correctness. In such a case, most wrong outputs can be detected using an additional final check and the attack can be retried. Otherwise, \ie{} if such a check is not possible, the attacker can only output a list of all candidate feature sets, the size of which depends on the parameters (see \cite{bib:JuelsSudan2006}).

Clearly, it is not necessary to repeat the complete Algorithm~\ref{Algo:full_recovery}. Instead the following iteration approach can be applied. For each successful output of the partial recovery attack in Step \ref{algo_ful:step_partial}, the attacker can fix the output and successively try all tuples $(\VeatureElement_{\WErrorSize'+1},\hdots,\VeatureElement_{\SecretSize})$ in Step \ref{full_complement}. For fixed $\NumOverlap'\geq (\VeatureSize + \SecretSize)/2$, this already represents a complete exhaustive search attack. On the other hand, if $\NumOverlap'< (\VeatureSize + \SecretSize)/2$, the outcome in Step \ref{algo_ful:step_partial} depends on the guesses in Step \ref{full_1} and thus an attacker facing an error in Step \ref{algo_ful:step_partial} (if  Algorithm~\ref{Algo:partial_recovery} returns \AlgFailure), in Step \ref{full_unlock} (if unlocking $\VaultPoly$ fails), or in the additional final check for correctness (if implemented), can go back to Step \ref{full_initialize} and repeat the complete Algorithm~\ref{Algo:full_recovery} choosing in Step \ref{full_1} the next $\VeatureElement_1,\ldots, \VeatureElement_{\ThreshDist}$ and $\WeatureElement_1,\ldots, \WeatureElement_{\ThreshDist}$. 

If $\NumOverlap$ is unknown, the attack should start with $\NumOverlap'= \lceil (\VeatureSize + \SecretSize)/2 \rceil$ for two reasons. First, for this input, no guessing is done in Step \ref{full_1} resulting in an exponentially smaller search space as compared to smaller values of $\NumOverlap'$ (where we have $h >0$); second, if $\NumOverlap\geq \NumOverlap' = \lceil (\VeatureSize + \SecretSize)/2 \rceil$,  Algorithm~\ref{Algo:partial_recovery} is guaranteed to succeed in Step \ref{algo_ful:step_partial}. If trying $\NumOverlap'$ does not yield the correct result, the attack can proceed by successively decreasing $\NumOverlap'$, each decrementation resulting in an increase of the search space by a factor of $\FieldSize^2$. When the search space becomes too large, \ie{} at a certain value of $\NumOverlap'$, the attack has to abort. 

An exhaustive search algorithm can be obtained by replacing the random guesses in Algorithm~\ref{Algo:full_recovery} by systematically trying all tuples. However, a probabilistic method is more efficient, because, typically, there exist many tuples that will result in a correct output. If $\FullRecFraction$ is the fraction of tuples in the search space resulting in a correct output, a probabilistic search will, on average, succeed after $1 / \FullRecFraction$ trials and, according to \emph{Markov's inequality}, will succeed with probability at least $1-1/\FullRecConfidence$, for any $\FullRecConfidence > 1$, after $\FullRecConfidence / \FullRecFraction$ trials. For instance, in Step \ref{full_complement}, provided that the previous steps have been successful, $\tbinom{\VeatureSize-\NumOverlap}{\SecretSize-\WErrorSize}$ out of all possible tuples will result in a correct output. Thus, for $\FullRecProportion=\tbinom{\VeatureSize-\NumOverlap}{\SecretSize-\WErrorSize}/\FullRecConfidence>1$, trying only a random $\FullRecProportion$-fraction of all possible tuples, the attacker will gain a speed-up of $\FullRecProportion$ as compared to exhaustive search, while having an error error probability of at most $1/\FullRecConfidence$. 

On the other hand, an exhaustive search algorithm can be useful for \analyzing{} purposes. In Section \ref{sec:PartRecsExtField}, for example, we used exhaustive search to experimentally determine the frequency with which the outputs of the partial recovery attack corresponds to a solutions in $\FiniteField$.
\fi

\section{Theoretical Bounds for Recovery Attacks}\label{sec:bounds}
In this section, we derive bounds for the success probability of partial and full recovery attacks. These bounds are based on an upper bound for the information leakage, \ie{} a lower bound for the average min-entropy, of two vault records based on overlapping feature sets. Our bounds are based on information theoretical results and, thus, hold even for attackers with unlimited computational resources, \ie{} also for attackers with exponential running time. Of course, these results assume that the attacker is not able to verify the correctness of a candidate feature set by additional information, \eg{} by a stored hash value or a CRC code of the secret polynomial $\VaultSecret$, as often suggested in implementations of the original fuzzy vault \cite{bib:UludagPankantiJain2005,bib:UludagJain2006,bib:NandakumarJainPankanti2007,bib:MerkleEtAl2011MultiFingerVault,bib:Tams2013prealign}.

\subsection{Bounds for Full Recovery Attacks}\label{bound_full}
The following result bounds the amount of information leaked from two vault records computed from overlapping feature sets.
\iffull
The proof of the theorem is given in the appendix.
\fi
\begin{thm}\label{EntropyLoss}
Let $\VeatureSet$ and $\WeatureSet$ be two feature sets of size $\VeatureSize$ and $\WeatureSize\leq \VeatureSize$, respectively. Then the entropy loss of vault records $\VaultPoly$ and $\WaultPoly$ computed from $\VeatureSet$ and $\WeatureSet$, respectively, is at most $\min(\VeatureSize+\WeatureSize-2\SecretSize,\VeatureSize-\SecretSize + \diffVW) \log \FieldSize$,
\iffull
\ie{}
\ifdouble
\begin{align}
\begin{split}
\mathbf{\tilde{H}_{\infty}}(\VeatureSet,\WeatureSet | \VaultPoly,\WaultPoly) \geq &\mathbf{H_{\infty}}(\VeatureSet,\WeatureSet) \nonumber \\
&\; - \min\left(\VeatureSize+\WeatureSize-2\SecretSize,\VeatureSize-\SecretSize + \diffVW\right) \log \FieldSize, \nonumber 
\end{split}
\end{align}
\else
\begin{equation*}
\mathbf{\tilde{H}_{\infty}}(\VeatureSet,\WeatureSet | \VaultPoly,\WaultPoly) \geq \mathbf{H_{\infty}}(\VeatureSet,\WeatureSet) - \min\left(\VeatureSize+\WeatureSize-2\SecretSize,\VeatureSize-\SecretSize + \diffVW\right) \log \FieldSize,
\end{equation*}
\fi
\fi
where $\diffVW=|(\VeatureSet \cup \WeatureSet)\setminus (\VeatureSet \cap \WeatureSet)|$ is the set difference between $\VeatureSet$ and $\WeatureSet$.
\end{thm}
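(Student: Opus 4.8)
The plan is to mimic the single-record argument of Dodis \etal{} (which bounds the entropy loss of $\VeatureSet$ by $\VaultPoly$ by $(\VeatureSize-\SecretSize)\log\FieldSize$), but applied to a compact deterministic statistic of the \emph{pair} of records. First I would reduce to the ``high parts'' of the characteristic polynomials. In each variant of the scheme, $\VaultPoly$ and $\CharPoly_\VeatureSet$ agree in their top $\VeatureSize-\SecretSize$ coefficients, whereas the bottom $\SecretSize$ coefficients of $\VaultPoly$ are either identically $0$ (deterministic variant) or an independent uniformly random shift of those of $\CharPoly_\VeatureSet$ (probabilistic variant), and likewise for $\WaultPoly$; in the probabilistic case the two shift vectors are moreover independent of each other and of $(\VeatureSet,\WeatureSet)$. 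Hence, for the purpose of average min-entropy, conditioning on $(\VaultPoly,\WaultPoly)$ is the same as conditioning on $\Phi:=(\CharPoly_\VeatureSet\div\SYM^\SecretSize,\ \CharPoly_\WeatureSet\div\SYM^\SecretSize)$, so $\AverageMinEntropy(\VeatureSet,\WeatureSet|\VaultPoly,\WaultPoly)=\AverageMinEntropy(\VeatureSet,\WeatureSet|\Phi)$ and it suffices to bound the entropy loss inflicted by $\Phi$.

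The first term of the minimum is then immediate: $\Phi$ consists of a monic polynomial of degree $\VeatureSize-\SecretSize$ together with one of degree $\WeatureSize-\SecretSize$, hence takes at most $\FieldSize^{\VeatureSize+\WeatureSize-2\SecretSize}$ values, and by the standard fact of Dodis \etal{} that conditioning on a variable with at most $\FieldSize^{\ell}$ values lowers average min-entropy by at most $\ell\log\FieldSize$, the entropy loss is at most $(\VeatureSize+\WeatureSize-2\SecretSize)\log\FieldSize$.

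For the second term I would exploit that $\CharPoly_\VeatureSet=\CharPoly_{\VeatureSet\cap\WeatureSet}\cdot\CharPoly_{\VeatureSet\setminus\WeatureSet}$ and $\CharPoly_\WeatureSet=\CharPoly_{\VeatureSet\cap\WeatureSet}\cdot\CharPoly_{\WeatureSet\setminus\VeatureSet}$ share the factor $\CharPoly_{\VeatureSet\cap\WeatureSet}$. Put $T:=(\CharPoly_{\VeatureSet\setminus\WeatureSet},\CharPoly_{\WeatureSet\setminus\VeatureSet})$, a pair of monic polynomials of degrees $\VeatureSize-\NumOverlap$ and $\WeatureSize-\NumOverlap$, which therefore takes at most $\FieldSize^{\diffVW}$ values. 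The key claim is that $\Phi$ is a deterministic function of $(\CharPoly_\VeatureSet\div\SYM^\SecretSize,\ T)$: from the known top coefficients of $\CharPoly_\VeatureSet$ and the known $\CharPoly_{\VeatureSet\setminus\WeatureSet}$, the polynomial long-division recursion recovers the high-degree coefficients of $\CharPoly_{\VeatureSet\cap\WeatureSet}$ one at a time, and multiplying those by the known $\CharPoly_{\WeatureSet\setminus\VeatureSet}$ reproduces exactly the coefficients of $\CharPoly_\WeatureSet$ of degree $\ge\SecretSize$, \ie{} $\CharPoly_\WeatureSet\div\SYM^\SecretSize$ --- a degree count that works out precisely because $\WeatureSize\le\VeatureSize$. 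Since $(\CharPoly_\VeatureSet\div\SYM^\SecretSize,\ T)$ takes at most $\FieldSize^{(\VeatureSize-\SecretSize)+\diffVW}$ values and refines $\Phi$, conditioning on it (hence on $\Phi$) costs at most $(\VeatureSize-\SecretSize+\diffVW)\log\FieldSize$ of average min-entropy. Taking the minimum of the two bounds finishes the proof.

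The hard part is the key claim of the third paragraph: recognising that the two ``difference'' characteristic polynomials --- only $\diffVW$ field elements in total --- suffice to relay information from one record to the other by stripping off the common factor $\CharPoly_{\VeatureSet\cap\WeatureSet}$ via division, and pinning down the degree bookkeeping so that exactly the coefficients of $\CharPoly_\WeatureSet$ that $\WaultPoly$ actually reveals get reconstructed. Two minor points to watch: $\diffVW$ (equivalently $\NumOverlap=|\VeatureSet\cap\WeatureSet|$) should be treated as a fixed parameter of the distribution of $(\VeatureSet,\WeatureSet)$ (or the bound read with $\diffVW$ an almost-sure upper bound on the symmetric difference size), and $\SecretSize\le\WeatureSize\le\VeatureSize$ is assumed, as elsewhere in the paper, so that the first term is nonnegative.
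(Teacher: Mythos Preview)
Your proof is correct and follows the same high-level strategy as the paper: reduce to the deterministic ``high parts'' $\Phi$, then apply the Dodis~\etal{} fact that conditioning on a variable with at most $\FieldSize^{\ell}$ values costs at most $\ell\log\FieldSize$ of average min-entropy, obtaining the first term trivially and the second by bounding the range of $\Phi$.

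The difference is in how the bound $|\mathrm{range}(\Phi)|\le\FieldSize^{\VeatureSize-\SecretSize+\diffVW}$ is established. The paper proves a lemma stating that for a fixed deterministic record $\VaultPoly$ there are at most $\FieldSize^{\diffVW}$ compatible records $\WaultPoly$, and proves this by induction on $\diffVW$: the base cases $\diffVW=1$ (remove one element, using Lemma~\ref{lem:coeff_char_poly}) and $\diffVW=2$ (swap one element, via an explicit symmetric-polynomial identity), then the step peels off one element at a time. Your argument instead shows in one shot that $\Phi$ is a function of $(\CharPoly_\VeatureSet\div\SYM^\SecretSize,\ \CharPoly_{\VeatureSet\setminus\WeatureSet},\ \CharPoly_{\WeatureSet\setminus\VeatureSet})$ by dividing out $\CharPoly_{\VeatureSet\setminus\WeatureSet}$ to recover the top coefficients of $\CharPoly_{\VeatureSet\cap\WeatureSet}$ and then multiplying by $\CharPoly_{\WeatureSet\setminus\VeatureSet}$; the degree check $\WeatureSize\le\VeatureSize$ you note is exactly what makes the recovered range of coefficients of $\CharPoly_{\VeatureSet\cap\WeatureSet}$ sufficient. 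This is the same recursion the paper unwinds in its base cases, just applied all at once rather than element-by-element; it is a little cleaner and avoids the two-stage induction (first equal sizes, then unequal), at the small cost of requiring $\NumOverlap$ to be fixed up front---a caveat you already flag and which the paper implicitly assumes as well.
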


If the feature sets overlap in $\NumOverlap$ elements, we have $\diffVW=\VeatureSize+\WeatureSize-2\NumOverlap$. Thus, for $\NumOverlap\geq (\VeatureSize+\SecretSize)/2$, Theorem \ref{EntropyLoss} gives an estimate for the information leakage of $(2\VeatureSize+\WeatureSize-2\NumOverlap-\SecretSize)\log\FieldSize$, whereas, for $\NumOverlap\leq (\VeatureSize+\SecretSize)/2$, we obtain a bound of $(\VeatureSize+\WeatureSize-2\SecretSize)\log\FieldSize$. 

Since the lower $\SecretSize$ coefficients of the vault record are either zero (deterministic version) or independent of the features and the remaining coefficients are deterministic, the estimation provided by Theorem \ref{EntropyLoss} does not only hold in the average case for random vault records, but also for any fixed vault records $\VaultPoly,\WaultPoly$. Thus, we obtain the following corollary.

\begin{cor}\label{BoundFull}
Any algorithm that takes as input two  vault records $\VaultPoly$ and $\WaultPoly$ computed from feature sets $\VeatureSet$ and $\WeatureSet$ of size $\VeatureSize$ and $\WeatureSize\leq \VeatureSize$, respectively, overlapping in $\NumOverlap$ elements, and outputs $\VeatureSet, \WeatureSet$ has success probability  
\begin{equation}
\prob{Full} \leq 2^{-\mathbf{H_{\infty}}(\VeatureSet,\WeatureSet)}\cdot \FieldSize^{\min\left(\VeatureSize+\WeatureSize-2\SecretSize,\, 2\VeatureSize + \WeatureSize - 2\NumOverlap-\SecretSize\right)}.
\end{equation}
\end{cor}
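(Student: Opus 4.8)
The plan is to obtain Corollary~\ref{BoundFull} directly from Theorem~\ref{EntropyLoss} together with the elementary information-theoretic fact that average min-entropy is exactly the negative logarithm of the optimal prediction probability. Concretely, for any random variables $\RandVarX$ and $\RandVarY$ one has $2^{-\AverageMinEntropy(\RandVarX\mid\RandVarY)} = \Expectation_{\vary\leftarrow\RandVarY}[\max_{\varx}\Pr[\RandVarX=\varx\mid\RandVarY=\vary]]$, and the right-hand side is precisely the success probability of the best (even computationally unbounded) predictor of $\RandVarX$ from $\RandVarY$, namely the one that on input $\vary$ returns a maximizer of $\Pr[\RandVarX=\varx\mid\RandVarY=\vary]$; no predictor, randomized or not, can beat this conditional maximum on any individual value $\vary$, hence none can beat the expectation. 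Instantiating $\RandVarX = (\VeatureSet,\WeatureSet)$ and $\RandVarY = (\VaultPoly,\WaultPoly)$, every algorithm that reads the two vault records and outputs a guess for $(\VeatureSet,\WeatureSet)$ succeeds with probability at most $2^{-\AverageMinEntropy(\VeatureSet,\WeatureSet\mid\VaultPoly,\WaultPoly)}$.

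It then remains to insert the entropy-loss estimate and to rewrite the exponent in terms of $\NumOverlap$. Theorem~\ref{EntropyLoss} gives $\AverageMinEntropy(\VeatureSet,\WeatureSet\mid\VaultPoly,\WaultPoly)\ge\MinEntropy(\VeatureSet,\WeatureSet)-\min(\VeatureSize+\WeatureSize-2\SecretSize,\,\VeatureSize-\SecretSize+\diffVW)\log\FieldSize$. Since $\VeatureSet$ and $\WeatureSet$ overlap in $\NumOverlap$ elements, $|\VeatureSet\cup\WeatureSet| = \VeatureSize+\WeatureSize-\NumOverlap$ and $|\VeatureSet\cap\WeatureSet| = \NumOverlap$, so $\diffVW = |\VeatureSet\cup\WeatureSet|-|\VeatureSet\cap\WeatureSet| = \VeatureSize+\WeatureSize-2\NumOverlap$ and hence $\VeatureSize-\SecretSize+\diffVW = 2\VeatureSize+\WeatureSize-2\NumOverlap-\SecretSize$. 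Substituting this into the previous bound and exponentiating (using $2^{\log\FieldSize}=\FieldSize$) yields $\prob{Full}\le 2^{-\MinEntropy(\VeatureSet,\WeatureSet)}\cdot\FieldSize^{\min(\VeatureSize+\WeatureSize-2\SecretSize,\,2\VeatureSize+\WeatureSize-2\NumOverlap-\SecretSize)}$, which is the claim.

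The one point requiring a word of justification is that this bound is meant to apply to the concrete records produced by the scheme and to \emph{both} the probabilistic and the deterministic variant, not merely ``on average over the random polynomials $\VaultSecret,\WaultSecret$''. This is exactly the observation recorded in the paragraph preceding the corollary: the coefficients of $\VaultPoly$ of degree at least $\SecretSize$ are a deterministic function of $\VeatureSet$ (and likewise for $\WaultPoly$ and $\WeatureSet$), while the lower $\SecretSize$ coefficients are either identically zero (deterministic variant) or, conditioned on $\VeatureSet$, uniform and independent of everything else (probabilistic variant). Consequently the posterior distribution of $(\VeatureSet,\WeatureSet)$ given $(\VaultPoly,\WaultPoly)$ coincides with the posterior given only the high-degree coefficients, so the attacker's optimal success probability -- and with it the estimate of Theorem~\ref{EntropyLoss} -- is unaffected by the low-degree part and by the choice of $\VaultSecret,\WaultSecret$. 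I do not anticipate a genuine obstacle here, since Theorem~\ref{EntropyLoss} already carries the substantive content; the only care needed is the bookkeeping of the set-difference exponent and checking that the min-entropy/success-probability correspondence is valid against computationally unbounded adversaries.
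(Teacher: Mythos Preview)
Your proposal is correct and follows essentially the same route as the paper: the paper derives the corollary directly from Theorem~\ref{EntropyLoss} by substituting $\diffVW=\VeatureSize+\WeatureSize-2\NumOverlap$ and invoking the standard fact that $2^{-\AverageMinEntropy(\cdot\mid\cdot)}$ upper bounds any predictor's success probability, together with the observation that the low-degree coefficients carry no information about the feature sets so the bound holds for fixed (not just random) vault records. Your write-up makes the min-entropy/prediction correspondence and the deterministic-versus-probabilistic point more explicit than the paper does, but the argument is the same.
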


Observe that for $\VeatureSize \geq 3\SecretSize$, Theorem \ref{EntropyLoss} does not give positive bounds on the average min-entropy for any $\NumOverlap \in [2 \SecretSize,\VeatureSize - \SecretSize]$, and thus, Corollary \ref{BoundFull} does not give a meaningful upper bound for attacks. 
\iffull
In this case, $(\VeatureSize+\SecretSize)/2 \leq \VeatureSize -\SecretSize$. 
\fi
While for $\NumOverlap \in [\lceil(\VeatureSize+\SecretSize)/2\rceil,\VeatureSize - \SecretSize ]$, the vaults records can indeed be uncovered deterministically, \ie{} without any guessing, using Algorithm \ref{Algo:full_recovery}, it is not clear if the vault records become insecure for $\NumOverlap \in [2 \SecretSize,\lfloor(\VeatureSize+\SecretSize)/2\rfloor-1]$. 

We now \analyze{} to which extent and for which parameters the success probability of Algorithm \ref{Algo:full_recovery} is optimal. Obviously, this is the case, when it becomes deterministic, \ie{} for $\NumOverlap \in [\lceil(\VeatureSize+\SecretSize)/2\rceil,\VeatureSize - \SecretSize ]$. However, as soon as feature elements need to be guessed, no attack can have optimal success probability for arbitrary feature spaces, unless it takes into account the specific statistical distribution. Therefore, we subsequently focus on the ideal situation, in which the feature sets $\VeatureSet$ and $\WeatureSet$ have maximal entropy. For this case, the following corollary pins down the parameters, for which our full recovery attack's success probability is asymptotically optimal in $\FieldSize$ up to a constant. 

\begin{cor}\label{cor:optimal_full}
Assume that the elements in the feature sets are independently and uniformly chosen at random, so that the feature sets $\VeatureSet$ and $\WeatureSet$ overlap in $\NumOverlap$ elements, where $\NumOverlap \geq \VeatureSize-\SecretSize$ with $\VeatureSize+\SecretSize$ being even, or $\NumOverlap \geq (\VeatureSize+\SecretSize)/2$. Then the success probability of Algorithm \ref{Algo:full_recovery} is optimal for $\FieldSize \rightarrow \infty$ up to a constant. 
\end{cor}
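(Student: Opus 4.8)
The plan is to prove Corollary~\ref{cor:optimal_full} by matching the lower bound on $\prob{Full}$ from Theorem~\ref{thm:full} against the information-theoretic upper bound of Corollary~\ref{BoundFull}: I will show that, with $\VeatureSize,\WeatureSize,\SecretSize,\NumOverlap$ held fixed and $\FieldSize\to\infty$, both quantities are $\Theta(\FieldSize^{-E})$ for one and the same integer exponent $E\ge 0$, so that Algorithm~\ref{Algo:full_recovery} attains, up to a multiplicative constant, the best success probability possible for any algorithm. The case $E=0$ is the deterministic regime $\NumOverlap\in[\lceil(\VeatureSize+\SecretSize)/2\rceil,\VeatureSize-\SecretSize]$, where $\prob{Full}=1$ and the claim is trivial, so the substance is in $E>0$.

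First I would make the upper bound asymptotic. Under the ``maximal entropy'' hypothesis, conditioned on $|\VeatureSet\cap\WeatureSet|=\NumOverlap$ the pair $(\VeatureSet,\WeatureSet)$ is uniform over all pairs of sets of sizes $\VeatureSize,\WeatureSize$ meeting in exactly $\NumOverlap$ points, so
\begin{equation*}
\mathbf{H_{\infty}}(\VeatureSet,\WeatureSet)=\log\!\left(\binom{\FieldSize}{\NumOverlap}\binom{\FieldSize-\NumOverlap}{\VeatureSize-\NumOverlap}\binom{\FieldSize-\VeatureSize}{\WeatureSize-\NumOverlap}\right),
\end{equation*}
whose argument is a polynomial in $\FieldSize$ of degree $\VeatureSize+\WeatureSize-\NumOverlap$; hence $2^{-\mathbf{H_{\infty}}(\VeatureSet,\WeatureSet)}=\Theta(\FieldSize^{-(\VeatureSize+\WeatureSize-\NumOverlap)})$. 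Substituting into Corollary~\ref{BoundFull} and simplifying the exponent, every algorithm satisfies $\prob{Full}\le C_1\,\FieldSize^{\min(\NumOverlap-2\SecretSize,\ \VeatureSize-\NumOverlap-\SecretSize)}$ for a constant $C_1$; combining this with $\prob{Full}\le 1$ gives $\prob{Full}=\Oh(\FieldSize^{-E})$ with $E:=\max(0,\ 2\SecretSize-\NumOverlap,\ \NumOverlap+\SecretSize-\VeatureSize)$.

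Next I would read off the leading term of the Theorem~\ref{thm:full} bound: with the parameters fixed, its numerator is a constant and every factor $\binom{\FieldSize}{j}$ in its denominator equals $\FieldSize^{j}/j!+\Oh(\FieldSize^{j-1})$, so $\prob{Full}\ge c\,\FieldSize^{-(2\ThreshDist+\LastGuess)}$ (one only needs the numerator binomials to be non-zero, which is automatic when $\ThreshDist=0$ and otherwise follows under the stated hypotheses). It then remains to verify $2\ThreshDist+\LastGuess=E$, recalling $\ThreshDist=\max(0,\lceil(\VeatureSize+\SecretSize)/2\rceil-\NumOverlap)$ and $\LastGuess=\max(0,\SecretSize-\VeatureSize+\NumOverlap)$. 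If $\NumOverlap\ge(\VeatureSize+\SecretSize)/2$ then $\ThreshDist=0$, so $2\ThreshDist+\LastGuess=\max(0,\NumOverlap+\SecretSize-\VeatureSize)$, and since $\NumOverlap\ge(\VeatureSize+\SecretSize)/2$ forces $\NumOverlap+\SecretSize-\VeatureSize\ge 2\SecretSize-\NumOverlap$ this equals $E$. If instead $\VeatureSize+\SecretSize$ is even, $\NumOverlap\ge\VeatureSize-\SecretSize$ and $\NumOverlap<(\VeatureSize+\SecretSize)/2$ (which forces $\VeatureSize<3\SecretSize$), then $\ThreshDist=(\VeatureSize+\SecretSize)/2-\NumOverlap$ and $\LastGuess=\SecretSize-\VeatureSize+\NumOverlap$, so $2\ThreshDist+\LastGuess=2\SecretSize-\NumOverlap$; here $2\SecretSize-\NumOverlap>0$ (as $\NumOverlap<(\VeatureSize+\SecretSize)/2<2\SecretSize$) and $2\SecretSize-\NumOverlap\ge\NumOverlap+\SecretSize-\VeatureSize$ (as $\NumOverlap\le(\VeatureSize+\SecretSize)/2$), so again it equals $E$.

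Putting the two sides together, the success probability of Algorithm~\ref{Algo:full_recovery} is $\Theta(\FieldSize^{-E})$ while no algorithm can do better than $\Oh(\FieldSize^{-E})$, which is the asserted asymptotic optimality up to a constant (and exact optimality when $E=0$). I expect the only real obstacle to be the bookkeeping in the last step -- keeping the ceiling/floor conventions, the two $\max(0,\cdot)$ clamps in $\ThreshDist$ and $\LastGuess$, and the two hypothesis branches straight so that $2\ThreshDist+\LastGuess$ provably coincides with the exponent that drops out of the entropy bound of Theorem~\ref{EntropyLoss} -- together with pinning down the exact $\mathbf{H_{\infty}}$ under the maximal-entropy assumption; there is no conceptual difficulty once the binomial asymptotics are written down.
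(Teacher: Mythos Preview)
Your proposal is correct and follows essentially the same approach as the paper: both compute $2^{\mathbf{H_{\infty}}(\VeatureSet,\WeatureSet)}=\Theta(\FieldSize^{\VeatureSize+\WeatureSize-\NumOverlap})$, plug into Corollary~\ref{BoundFull} to get the upper-bound exponent, read off the lower-bound exponent $2\ThreshDist+\LastGuess$ from Theorem~\ref{thm:full}, and check the two exponents agree under the two hypothesis branches $\NumOverlap\ge(\VeatureSize+\SecretSize)/2$ and $\NumOverlap<(\VeatureSize+\SecretSize)/2$ with $\VeatureSize+\SecretSize$ even and $\NumOverlap\ge\VeatureSize-\SecretSize$. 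The only cosmetic difference is that you package the target exponent as $E=\max(0,\,2\SecretSize-\NumOverlap,\,\NumOverlap+\SecretSize-\VeatureSize)$ up front and then verify $2\ThreshDist+\LastGuess=E$, whereas the paper treats each case separately and compares the resulting $\Oh(\cdot)$ and $\Omega(\cdot)$ classes directly.
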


\iffull
\begin{proof}
For independently and uniformly chosen feature elements, we have 
\begin{equation*}
2^{\mathbf{H_{\infty}}(\VeatureSet,\WeatureSet)} =  \binom{\FieldSize}{\NumOverlap} \binom{\FieldSize-\NumOverlap}{\VeatureSize-\NumOverlap} \binom{\FieldSize-\VeatureSize}{\WeatureSize - \NumOverlap},
\end{equation*}
which is in $\Oh(\FieldSize^{\VeatureSize+\WeatureSize-\NumOverlap})$ for $\FieldSize \rightarrow \infty$.

Theorem \ref{EntropyLoss} gives different bounds for $\NumOverlap\geq (\VeatureSize+\SecretSize)/2$ and for $\NumOverlap< (\VeatureSize+\SecretSize)/2$. Consequently, we will distinguish these two main cases in our analysis. 

We first consider the case $\NumOverlap\geq (\VeatureSize+\SecretSize)/2$. In this case, Corollary \ref{BoundFull} gives an upper bound of $\prob{Full}  \leq  \FieldSize^{2\VeatureSize + \WeatureSize - 2\NumOverlap-\SecretSize}/2^{\mathbf{H_{\infty}}(\VeatureSet,\WeatureSet)}$ for the success probability $\prob{Full}$ of attackers trying to determine the feature sets from two vault records. This implies $\prob{Full} \in \Oh\left(\FieldSize^{-(\SecretSize - \WErrorSize) }\right)$, where, as before, $\WErrorSize = \VeatureSize - \NumOverlap$.

In comparison, for $\NumOverlap\geq (\VeatureSize+\SecretSize)/2$ (which implies $\ThreshDist= 0$), Theorem \ref{thm:full} estimates the success probability $\prob{Alg\ref{Algo:full_recovery}}$ of Algorithm \ref{Algo:full_recovery} as $\prob{Alg\ref{Algo:full_recovery}}\geq\tbinom{\NumOverlap}{\LastGuess} / \tbinom{\FieldSize}{\LastGuess}$. If $\NumOverlap \leq \VeatureSize-\SecretSize$, we have $\LastGuess=0$ and our full recovery attack is deterministic, which is obviously optimal. If, on the other hand, $\NumOverlap > \VeatureSize-\SecretSize$, the estimate of Theorem \ref{thm:full} evaluates to $\prob{Alg\ref{Algo:full_recovery}}\geq \tbinom{\NumOverlap}{\SecretSize-\WErrorSize} / \tbinom{\FieldSize}{\SecretSize-\WErrorSize}$, which implies $\prob{Alg\ref{Algo:full_recovery}}\in\Omega\left(\FieldSize^{-(\SecretSize - \WErrorSize) }\right)$.

Now consider $\NumOverlap< (\VeatureSize+\SecretSize)/2$. In this case, Corollary \ref{BoundFull} gives the upper bound 
$\prob{Full}  \leq  \FieldSize^{\VeatureSize+\WeatureSize-2\SecretSize} / 2^{\mathbf{H_{\infty}}(\VeatureSet,\WeatureSet)}$. 
This yields $\prob{Full}  \in \Oh\left( \FieldSize^{- (2\SecretSize-\NumOverlap)}\right)$.

On the other hand, for $\NumOverlap< (\VeatureSize+\SecretSize)/2$, we have $\ThreshDist= \lceil(\VeatureSize+\SecretSize)/2\rceil - \NumOverlap$. Thus, Theorem \ref{thm:full} gives an estimate of $\prob{Alg\ref{Algo:full_recovery}} \geq \binom{\VeatureSize-\NumOverlap}{\ThreshDist}^2 \binom{\NumOverlap}{\LastGuess} \binom{\FieldSize}{\ThreshDist}^{-2} \binom{\FieldSize}{\LastGuess}^{-1}$, which gives $\prob{Alg\ref{Algo:full_recovery}} \in \Omega\left(\FieldSize^{-(2 \ThreshDist + \LastGuess) }\right)$. By assumption, for $\NumOverlap< (\VeatureSize+\SecretSize)/2$, we have $\NumOverlap \geq \VeatureSize-\SecretSize$ and $\VeatureSize+\SecretSize$ being even, which implies 
$\LastGuess= \SecretSize+\NumOverlap-\VeatureSize$ and $2\ThreshDist=\VeatureSize+\SecretSize- 2\NumOverlap$. This gives the result for the case $\NumOverlap< (\VeatureSize+\SecretSize)/2$. 
\end{proof}
\fi

For $\NumOverlap \in [\lceil(\VeatureSize+\SecretSize)/2\rceil,\VeatureSize - \SecretSize ]$ with odd $\VeatureSize+\SecretSize$, the success probability of Algorithm \ref{Algo:full_recovery} is less optimal due to rounding of the value $\ThreshDist$. 
\iffull
However, solving the equations of Blanton and Aliasgari described in Section \ref{sec:PartialRecoveryAttack} after guessing $\VeatureSize+\SecretSize-2\NumOverlap$ elements from $(\VeatureSet\cup\WeatureSet)\setminus (\VeatureSet\cap\WeatureSet)$ gives an attack with optimal success probability even for odd $\VeatureSize+\SecretSize$, albeit being computationally inefficient for large $\VeatureSize$.

For $\NumOverlap < \min\left(\VeatureSize-\SecretSize, (\VeatureSize+\SecretSize)/2\right)$, Algorithm \ref{Algo:full_recovery} has success probability $\prob{} \in \Omega\left(\FieldSize^{-(\VeatureSize+\SecretSize)}\right)$ which is not asymptotically optimal. However, for $\NumOverlap \leq \SecretSize$, the obvious attack that guesses all $\NumOverlap$ elements from $\VeatureSet \cap \WeatureSet$ and $\SecretSize-\NumOverlap$ elements from each $\VeatureSet \setminus\WeatureSet$ and $\WeatureSet \setminus\VeatureSet$ has, for $\VeatureSize=\WeatureSize$,  success probability  
$\prob{Full} \geq \tbinom{\VeatureSize-\NumOverlap}{\SecretSize-\NumOverlap}^2 / \tbinom{\FieldSize}{\SecretSize-\NumOverlap}^{-2}\tbinom{\FieldSize}{\NumOverlap}$, which is tight with the asymptotic upper bound $\Oh\left( \FieldSize^{- (2\SecretSize-\NumOverlap)}\right)$ derived from Corollary \ref{BoundFull}. Thus, this simple attack is asymptotically optimal for $\NumOverlap \leq \SecretSize$.  
\fi

\subsection{Lower Bounds for Partial Recovery Attacks}\label{sec:bounds_partial}
As pointed out in \cite{bib:DodisSmith05}, bounds on the average min-entropy do not, in general, guarantee that no partial information can be extracted. Our partial recovery attack is an example, and, in fact, the amount of information we are able to extract ($\diffVW$ feature elements) is still smaller than the bound on the information leakage given in Theorem \ref{EntropyLoss}, which corresponds to at least $\VeatureSize -\SecretSize + \diffVW$ elements. Thus, the question arises if it was possible to recover even more feature elements from vault records of overlapping feature sets. In this section, for a wide range of parameters, we give a negative answers to this question when considering field-independent attacks and, hence, show that our partial recovery attack is indeed optimal with respect to the number of extracted elements. 

The following result shows that, for $\NumOverlap > \VeatureSize -\SecretSize$, no algorithm can determine elements from the intersection of two feature sets from corresponding vault records with probability asymptotically better than $\Oh(\FieldSize^{-1})$.

\begin{thm}\label{Thm_partial_1} 
For fixed $\VeatureSize\geq \WeatureSize \geq \NumOverlap > \VeatureSize -\SecretSize$, any algorithm that takes as input two vault records of uniformly chosen feature sets $\VeatureSet$ and $\WeatureSet$ of size $\VeatureSize$ and $\WeatureSize$, respectively, with $|\VeatureSet \cap \WeatureSet|=\NumOverlap$, and outputs an element $\VeatureElement \in \VeatureSet \cap \WeatureSet$, has success probability $\prob{}\in \Oh(\FieldSize^{-1})$. Specifically, 
\begin{equation*}
\prob{} \, \leq \,\FieldSize^{\VeatureSize-\SecretSize}\, \frac{\binom{\FieldSize}{\SecretSize-1}} 
{\binom{\FieldSize}{\VeatureSize}\binom{\VeatureSize-1}{\SecretSize -1}}.
\end{equation*}
\end{thm}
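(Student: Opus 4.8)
The plan is to prove the bound by a direct information-theoretic counting argument, analogous in spirit to the entropy-loss estimate of Theorem~\ref{EntropyLoss} but adapted to the extraction of a single common element (recall the remark following Theorem~\ref{EntropyLoss} that min-entropy bounds alone do not rule out partial leakage). First I would reduce to the deterministic variant of the scheme and to a deterministic attacker $\AlgA$: since the only information carried by a vault record lies in its coefficients $\VaultCoeff_{\SecretSize},\ldots,\VaultCoeff_{\VeatureSize}$, which agree in both variants and with those of $\CharPoly_{\VeatureSet}$, conditioning on $\VaultPoly$ is equivalent to conditioning on that coefficient tuple; as $\VaultCoeff_{\VeatureSize}=1$, there are at most $\FieldSize^{\VeatureSize-\SecretSize}$ such tuples, and likewise at most $\FieldSize^{\WeatureSize-\SecretSize}$ for $\WaultPoly$. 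Since $(\VeatureSet,\WeatureSet)$ is uniform over the $M=\binom{\FieldSize}{\VeatureSize}\binom{\VeatureSize}{\NumOverlap}\binom{\FieldSize-\VeatureSize}{\WeatureSize-\NumOverlap}$ ordered pairs of feature sets with $|\VeatureSet\cap\WeatureSet|=\NumOverlap$, I can write $\prob{}=M^{-1}\sum_{\VaultPoly,\WaultPoly} G(\VaultPoly,\WaultPoly)$, where the sum ranges over realizable coefficient tuples and $G(\VaultPoly,\WaultPoly)$ counts the pairs $(\VeatureSet,\WeatureSet)$ of overlap $\NumOverlap$ consistent with $\VaultPoly,\WaultPoly$ for which the deterministic output $\AlgA(\VaultPoly,\WaultPoly)$ lies in $\VeatureSet\cap\WeatureSet$.

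The crux --- and what I expect to be the main obstacle --- is a sufficiently sharp structural bound on $G(\VaultPoly,\WaultPoly)$. The hypothesis $\NumOverlap>\VeatureSize-\SecretSize$ is used exactly here: it forces $\deg\CharPoly_{\VeatureSet\setminus\WeatureSet}=\VeatureSize-\NumOverlap<\SecretSize$ and, since $\WeatureSize\le\VeatureSize$, also $\deg\CharPoly_{\WeatureSet\setminus\VeatureSet}<\SecretSize$, while the common factor $\CharPoly_{\VeatureSet\cap\WeatureSet}$ has degree $\NumOverlap$. For a pair counted by $G$, we have $\CharPoly_{\VeatureSet}=\CharPoly_{\VeatureSet\cap\WeatureSet}\cdot\CharPoly_{\VeatureSet\setminus\WeatureSet}$ agreeing with $\VaultPoly$ in its top $\VeatureSize-\SecretSize+1$ coefficients; because the cofactor $\CharPoly_{\VeatureSet\setminus\WeatureSet}$ has degree below $\SecretSize$, specifying $\VeatureSet\setminus\WeatureSet$ together with $\VaultPoly$ pins down, by comparison of coefficients (equivalently, a polynomial-division argument), all but at most $\SecretSize-(\VeatureSize-\NumOverlap)$ coefficients of $\CharPoly_{\VeatureSet\cap\WeatureSet}$, and the analogous constraint from $\WaultPoly$ on the $\WeatureSet$-side controls the remainder. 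The decisive point is that a successful output forces $\AlgA(\VaultPoly,\WaultPoly)$ to be a root of $\CharPoly_{\VeatureSet\cap\WeatureSet}$; this is one further, generically non-degenerate, linear condition on the remaining free coefficients, and it is precisely this single extra degree of freedom removed that turns the naive $\Oh(1)$ estimate into the $\Oh(\FieldSize^{-1})$ bound. Counting the admissible difference sets $\VeatureSet\setminus\WeatureSet$ and $\WeatureSet\setminus\VeatureSet$ (which contribute binomial factors, as they avoid the distinguished element) together with the residual field-element freedom, I would obtain a uniform bound $G(\VaultPoly,\WaultPoly)\le$ an explicit product of a binomial coefficient with a power of $\FieldSize$. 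The delicate part is the bookkeeping of the interaction of the two records through the shared factor $\CharPoly_{\VeatureSet\cap\WeatureSet}$ so as not to lose or double-count the crucial $\FieldSize^{-1}$, and handling the sub-cases $\NumOverlap\ge\SecretSize$ versus $\NumOverlap<\SecretSize$ as well as the parity of the relevant degrees.

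Finally, I would substitute the estimate for $G$ and the counts $\FieldSize^{\VeatureSize-\SecretSize}$, $\FieldSize^{\WeatureSize-\SecretSize}$ of realizable coefficient tuples into $\prob{}=M^{-1}\sum G$, and simplify the resulting quotient of binomial coefficients and powers of $\FieldSize$ using standard identities such as $\binom{\FieldSize}{\VeatureSize}\binom{\VeatureSize}{\SecretSize}=\binom{\FieldSize}{\SecretSize}\binom{\FieldSize-\SecretSize}{\VeatureSize-\SecretSize}$ and $\binom{\VeatureSize-1}{\SecretSize-1}=\tfrac{\SecretSize}{\VeatureSize}\binom{\VeatureSize}{\SecretSize}$, which should collapse everything to the stated closed form $\prob{}\le\FieldSize^{\VeatureSize-\SecretSize}\binom{\FieldSize}{\SecretSize-1}/\bigl(\binom{\FieldSize}{\VeatureSize}\binom{\VeatureSize-1}{\SecretSize-1}\bigr)$. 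The asymptotic claim $\prob{}\in\Oh(\FieldSize^{-1})$ then follows by rewriting this expression as $\VeatureSize\,(\VeatureSize-\SecretSize)!\,\FieldSize^{\VeatureSize-\SecretSize}/\prod_{i=\SecretSize-1}^{\VeatureSize-1}(\FieldSize-i)$ and noting that the denominator is $\sim\FieldSize^{\VeatureSize-\SecretSize+1}$ for fixed $\VeatureSize,\SecretSize$ as $\FieldSize\to\infty$.
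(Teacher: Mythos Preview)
Your approach is genuinely different from the paper's. The paper does \emph{not} count directly; instead it gives a reduction to the single-record bound~(\ref{max_prob}). From an algorithm $\AlgA$ extracting an element of $\VeatureSet\cap\WeatureSet$, it builds an algorithm $\AlgB$ that breaks a \emph{single} vault $\VaultPoly$ of $\VeatureSet$: guess $\VeatureSize-\NumOverlap$ elements of $\VeatureSet$, use Lemma~\ref{lem:coeff_char_poly} to strip them off and then to \emph{synthesize} a companion record $\WaultPoly$ for an artificial set $\WeatureSet$ with $|\VeatureSet\cap\WeatureSet|=\NumOverlap$, call $\AlgA$ on $(\VaultPoly,\WaultPoly)$ to obtain one further element of $\VeatureSet$, guess $\SecretSize-(\VeatureSize-\NumOverlap)-1$ more, and unlock. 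The guessing probability collapses to $\binom{\VeatureSize-1}{\SecretSize-1}/\binom{\FieldSize}{\SecretSize-1}$, and combining with $\prob{Single}\le\FieldSize^{\VeatureSize-\SecretSize}/\binom{\FieldSize}{\VeatureSize}$ gives the stated inequality in one line. This explains at once why the bound depends neither on $\WeatureSize$ nor on $\NumOverlap$: the second record is manufactured, not given.

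Your direct-counting route is plausible for the asymptotic statement $\prob{}\in\Oh(\FieldSize^{-1})$, and your key observation---that $\CharPoly_{\VeatureSet\cap\WeatureSet}(\AlgA(\VaultPoly,\WaultPoly))=0$ is a single linear constraint on the $\SecretSize-(\VeatureSize-\NumOverlap)$ undetermined low-order coefficients of $\CharPoly_{\VeatureSet\cap\WeatureSet}$---is correct in spirit. But there is a real gap in the sketch. For an \emph{upper} bound on $G(\VaultPoly,\WaultPoly)$ you may ignore the ``splits into distinct linear factors'' constraint, but you still have to control the consistency condition linking the two records: once $\VeatureSet\setminus\WeatureSet$ is chosen, the top coefficients of $\CharPoly_{\VeatureSet\cap\WeatureSet}$ are fixed, and the choice of $\WeatureSet\setminus\VeatureSet$ must reproduce the \emph{same} top coefficients from $\WaultPoly$; you have not indicated how to count these consistent choices, and without that the bookkeeping does not close. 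Moreover, even granting a clean bound on $G$, your sum over at most $\FieldSize^{\VeatureSize+\WeatureSize-2\SecretSize}$ coefficient tuples divided by $M$ will naturally produce an expression carrying $\WeatureSize$ and $\NumOverlap$, and it is not clear how the promised binomial identities eliminate them to reach the paper's $\NumOverlap$-free closed form. The reduction argument avoids all of this.
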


\iffull
\BEGINPROOF
Assume that $\AlgA$ is an algorithm that takes as input two vault records $\VaultPoly$ and $\WaultPoly$ of feature sets $\VeatureSet$ and $\WeatureSet$ of size $\VeatureSize$ and $\WeatureSize\leq \VeatureSize$, respectively, overlapping in $\NumOverlap > \VeatureSize -\SecretSize$ elements, and outputs an element $\VeatureElement \in \VeatureSet \cap \WeatureSet$ with probability $\prob{\AlgA}$. We will use $\AlgA$ to construct an algorithm $\AlgB$, that takes as input a single vault record $\VaultPoly$ of a feature set $\VeatureSet$  of size $\VeatureSize$ and outputs $\VeatureSet$. Then, using the bound on the average min-entropy of single vault records from \cite{bib:DodisEtAl2004}, we will derive a bound for $\prob{\AlgA}$.

Algorithm $\AlgB$ works as follows.
\begin{enumerate}

\item Guess $\WErrorSize=\VeatureSize - \NumOverlap$ many elements $\VeatureElement_1,\ldots,\VeatureElement_{\WErrorSize}$ from $\VeatureSet$ and use (\ref{eq:reduction_coeff}) to iteratively compute from the coefficients $\VaultCoeff_{\SecretSize},\ldots,\VaultCoeff_{\VeatureSize}$ of $\VaultPoly$ the coefficients $(\ZaultCoeff_{\SecretSize-\WErrorSize}, \ldots, \ZaultCoeff_{\NumOverlap})$ of the characteristic polynomial $\CharPoly_{\OverlapSet}$ of $\OverlapSet = \VeatureSet \setminus \{\VeatureElement_1,\ldots,\VeatureElement_{\WErrorSize}\}$.

\item Select $\VErrorSize=\WeatureSize - \NumOverlap$ many random elements $\{\WeatureElement_1,\ldots,\WeatureElement_{\VErrorSize}\}$ from $\FiniteField \setminus \OverlapSet$, and use Equation (\ref{eq:reduction_coeff0}) in Lemma \ref{lem:coeff_char_poly} to iteratively compute from the coefficients $(\ZaultCoeff_{\SecretSize-\WErrorSize}, \ldots, \ZaultCoeff_{\NumOverlap})$ the coefficients $\WaultCoeff_{\SecretSize},\ldots,\WaultCoeff_{\WeatureSize}$ of the characteristic polynomial $\CharPoly_{\WeatureSet}$ of $\WeatureSet = \OverlapSet  \cup \{\WeatureElement_1,\ldots,\WeatureElement_{\VErrorSize}\}$.

\item Invoke Algorithm $\AlgA$ with input $\VaultPoly$ and $\WaultPoly=(\WaultCoeff_{\SecretSize},\ldots,\WaultCoeff_{\WeatureSize})$. Let $\VeatureElement_0$ be the output of $\AlgA$.

\item Guess $\SecretSize - \WErrorSize -1$ many elements $\VeatureElement_{\WErrorSize+1},\ldots,\VeatureElement_{\SecretSize-1}$ of $\VeatureSet\setminus \{\VeatureElement_0,\VeatureElement_1,\ldots,\VeatureElement_{\WErrorSize}\}$.

\item Unlock $\VeatureSet$: compute the unique polynomial $\VaultSecret^*$ of degree smaller than $\SecretSize$ interpolating the pairs $(\VeatureElement_j,\VaultPoly(\VeatureElement_j))$ for $j=0,\ldots, \SecretSize-1$. If $\VaultPoly-\VaultSecret^*$ splits into distinct linear factors, output $\VeatureSet$ as the set of its roots; otherwise return \AlgFailure. 

\end{enumerate}
The probability that all elements $\VeatureElement_1,\ldots,\VeatureElement_{\SecretSize-1}$ are guessed correctly, is at least 
\begin{equation*}
\binom{\VeatureSize}{\WErrorSize} \binom{\VeatureSize-\WErrorSize-1}{\SecretSize-\WErrorSize-1}\binom{\FieldSize}{\SecretSize-1}^{-1} \leq \binom{\VeatureSize-1}{\SecretSize -1} \binom{\FieldSize}{\SecretSize-1}^{-1}, 
\end{equation*}
and with probability $\prob{\AlgA}$ algorithm $\AlgA$ outputs an $\VeatureElement_0 \in \VeatureSet  \cap \WeatureSet = \VeatureSet  \setminus \{\VeatureElement_1,\ldots,\VeatureElement_{\WErrorSize}\}$. The claim now follows directly from the lower bound (\ref{max_prob}) for the success probability $\prob{Single}$ of attacks on single vault records.  
\ENDPROOF

\fi

Note, that the condition $\NumOverlap> \VeatureSize -\SecretSize$ in Theorem \ref{Thm_partial_1} cannot be relaxed, because, as explained in Section \ref{sec:FullRecoveryAttack}, for $\NumOverlap \leq \VeatureSize -\SecretSize$, our partial recovery attack allows to completely recover the feature set $\VeatureSet$ with probability $1$.

For $\VeatureSize=\WeatureSize =\NumOverlap$, we have only one feature set and, thus, we obtain the following corollary which limits the success probability of partial recovery attacks on single vault records.
\begin{cor}
For fixed $\VeatureSize$, any algorithm that takes as input a vault record of a uniformly chosen feature set $\VeatureSet$ of size $\VeatureSize$ and outputs an element $\VeatureElement \in \VeatureSet$, has success probability $\prob{}\in \Oh(\FieldSize^{-1})$. Specifically, 
\begin{equation*}
\prob{} \, \leq \,\FieldSize^{\VeatureSize-\SecretSize}\, \frac{\binom{\FieldSize}{\SecretSize-1}} 
{\binom{\FieldSize}{\VeatureSize}\binom{\VeatureSize-1}{\SecretSize -1}}.
\end{equation*}
\end{cor}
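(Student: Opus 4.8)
The plan is to recognize this statement as the degenerate instance $\VeatureSize=\WeatureSize=\NumOverlap$ of Theorem~\ref{Thm_partial_1}. First I would verify that the hypotheses of that theorem are automatically met in this regime: two feature sets of size $\VeatureSize$ overlapping in $\NumOverlap=\VeatureSize$ elements must coincide, so the phrase ``two vault records'' collapses to a single vault record of $\VeatureSet=\WeatureSet$; and since $\SecretSize\geq 1$, the condition $\NumOverlap>\VeatureSize-\SecretSize$ required by Theorem~\ref{Thm_partial_1} reads $\VeatureSize>\VeatureSize-\SecretSize$ and is always true. Hence an algorithm $\AlgA$ that takes one vault record $\VaultPoly$ and outputs some $\VeatureElement\in\VeatureSet$ is, after padding its input with a redundant second copy of $\VaultPoly$, exactly an algorithm of the kind bounded by Theorem~\ref{Thm_partial_1} (its output lies in $\VeatureSet\cap\WeatureSet=\VeatureSet$), and therefore inherits the bound $\FieldSize^{\VeatureSize-\SecretSize}\binom{\FieldSize}{\SecretSize-1}/\bigl(\binom{\FieldSize}{\VeatureSize}\binom{\VeatureSize-1}{\SecretSize-1}\bigr)$.

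Equivalently — and this is the self-contained route I would actually write out — I would rerun the reduction in the proof of Theorem~\ref{Thm_partial_1} with $\WErrorSize=\VErrorSize=0$. From $\AlgA$ build $\AlgB$ that, on input $\VaultPoly$, calls $\AlgA$ to obtain one element $\VeatureElement_0\in\VeatureSet$, guesses $\SecretSize-1$ further elements $\VeatureElement_1,\ldots,\VeatureElement_{\SecretSize-1}$ of $\VeatureSet$, interpolates the unique polynomial $\VaultSecret^*$ of degree $<\SecretSize$ through the pairs $(\VeatureElement_j,\VaultPoly(\VeatureElement_j))$ for $j=0,\ldots,\SecretSize-1$, and outputs the root set of $\VaultPoly-\VaultSecret^*$. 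Conditioned on $\AlgA$ succeeding, the $\SecretSize-1$ auxiliary guesses are all correct with probability at least $\binom{\VeatureSize-1}{\SecretSize-1}/\binom{\FieldSize}{\SecretSize-1}$, and in that event $\VaultPoly-\VaultSecret^*=\CharPoly_\VeatureSet$, so $\AlgB$ returns $\VeatureSet$ exactly. Thus $\AlgB$ succeeds with probability at least $\prob{}\cdot\binom{\VeatureSize-1}{\SecretSize-1}/\binom{\FieldSize}{\SecretSize-1}$, which by the single-record recovery bound $\prob{Single}\leq\FieldSize^{\VeatureSize-\SecretSize}/\binom{\FieldSize}{\VeatureSize}$ from $(\ref{max_prob})$ must be at most $\FieldSize^{\VeatureSize-\SecretSize}/\binom{\FieldSize}{\VeatureSize}$; rearranging yields the claimed inequality, and $\binom{\FieldSize}{m}=\Theta(\FieldSize^{m})$ for fixed $m$ turns the right-hand side into $\Theta(\FieldSize^{-1})$.

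I do not anticipate any genuine obstacle: the corollary is essentially a restatement of the already-proved theorem. The only points needing care are the bookkeeping of the degenerate case — confirming that with $\WErrorSize=\VErrorSize=0$ the guessing and set-adjustment steps of that reduction become vacuous, so no additional randomness or failure mode is introduced — and noting that the interpolation step in $\AlgB$ cannot fail when all $\SecretSize$ chosen points are genuine, since then they lie on the graph of the true secret polynomial and $\VaultPoly-\VaultSecret^*$ splits into $\VeatureSize$ distinct linear factors.
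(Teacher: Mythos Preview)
Your proposal is correct and takes essentially the same approach as the paper: the corollary is obtained directly from Theorem~\ref{Thm_partial_1} by specializing to $\VeatureSize=\WeatureSize=\NumOverlap$, so that the two feature sets coincide and the two vault records collapse to a single one. The paper in fact gives no separate proof beyond this one-line observation; your self-contained rerun of the reduction with $\WErrorSize=\VErrorSize=0$ is a faithful unfolding of that specialization and adds nothing new.
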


The next result states that, for $\VeatureSize -\SecretSize \leq \NumOverlap < (t+k)/2$ or $\NumOverlap \leq \SecretSize$, no partial recovery attack can determine any element $\VeatureElement \in \VeatureSet$ with non-negligible probability for all field sizes. 
\iffull
The proof is given in the appendix. 
\fi

\begin{thm}\label{Thm_partial_2} 
Let $\VeatureSize\geq \WeatureSize$ and $(\VeatureSize+\SecretSize)/2 > \NumOverlap \geq \VeatureSize -\SecretSize$ or $\NumOverlap \leq \SecretSize$. Then, any algorithm that takes as input two vault records of uniformly chosen feature sets $\VeatureSet$ and $\WeatureSet$ of size $\VeatureSize$ and $\WeatureSize$, respectively, with $|\VeatureSet \cap \WeatureSet|=\NumOverlap$, and outputs an element $\VeatureElement \in (\VeatureSet \cup \WeatureSet)\setminus(\VeatureSet \cap \WeatureSet)$, has success probability $\prob{}\in \Oh(\FieldSize^{-1})$. 
\end{thm}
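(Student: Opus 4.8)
The plan is to prove the theorem in the same spirit as Theorem~\ref{Thm_partial_1}, by turning a hypothetical successful attack into an attack against a \emph{single} vault record and invoking the single-record bound~(\ref{max_prob}) (or, for the large-overlap range, the full-recovery bound of Corollary~\ref{BoundFull}, which is itself a consequence of~(\ref{max_prob})). Write $\AlgA$ for a field-independent algorithm that, on two records of uniformly random feature sets $\VeatureSet,\WeatureSet$ of sizes $\VeatureSize\geq\WeatureSize$ with $|\VeatureSet\cap\WeatureSet|=\NumOverlap$ in the stated range, returns an element of $(\VeatureSet\cup\WeatureSet)\setminus(\VeatureSet\cap\WeatureSet)$ with probability $\prob{}$; the aim is to show $\prob{}\in\Oh(\FieldSize^{-1})$. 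The ranges $\NumOverlap\leq\SecretSize$ and $\NumOverlap>\SecretSize$ can be handled separately; they overlap only when $\VeatureSize\leq 2\SecretSize$, where either argument applies.

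In the range $\NumOverlap\leq\SecretSize$ I would build $\AlgB$ as follows: given a record $\VaultPoly=\VaultSecret+\CharPoly_{\ZeatureSet}$ of a uniformly random set $\ZeatureSet$ of size $\VeatureSize-\NumOverlap$ and secret degree $\SecretSize-\NumOverlap$, $\AlgB$ draws fresh, uniformly random, pairwise disjoint sets $\VeatureSet_0$ of size $\NumOverlap$ and $\WeatureSet_0$ of size $\WeatureSize-\NumOverlap$, and on the probability-$(1-o(1))$ event that both are disjoint from $\ZeatureSet$ the sets $\VeatureSet:=\ZeatureSet\cup\VeatureSet_0$ and $\WeatureSet:=\VeatureSet_0\cup\WeatureSet_0$ are distributed exactly as a uniform pair of sizes $\VeatureSize,\WeatureSize$ overlapping in $\NumOverlap$ elements. $\AlgB$ forms the record of $\WeatureSet$ from scratch and the record of $\VeatureSet$ by multiplying the visible coefficients of $\VaultPoly$ by the \emph{known} polynomial $\CharPoly_{\VeatureSet_0}$ -- this determines all coefficients of $\CharPoly_{\VeatureSet}$ of index $\geq\SecretSize$ precisely because $|\VeatureSet_0|=\NumOverlap\leq\SecretSize$ -- and re-randomises (or fixes) the low coefficients according to the scheme variant. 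Since the symmetric difference of $(\VeatureSet,\WeatureSet)$ is $\ZeatureSet\cup\WeatureSet_0$ and $\AlgB$ knows $\WeatureSet_0$, any output of $\AlgA$ lying outside $\WeatureSet_0$ is a genuine, previously unknown element of $\ZeatureSet$; running $\AlgA$ also on the order-swapped pair (padding $\WeatureSet$ up to size $\VeatureSize$ first when $\VeatureSize>\WeatureSize$) removes any bias towards one argument, so $\AlgB$ obtains such an element with probability $\Omega(\prob{})$, after which it guesses $\SecretSize-\NumOverlap-1$ further elements of $\ZeatureSet$ and unlocks $\VaultPoly$. By~(\ref{max_prob}) applied to the size-$(\VeatureSize-\NumOverlap)$, secret-degree-$(\SecretSize-\NumOverlap)$ record, this recovers $\ZeatureSet$ with probability $\Oh(\FieldSize^{\NumOverlap-\SecretSize})$, and since the guesses succeed with probability $\Theta(\FieldSize^{-(\SecretSize-\NumOverlap-1)})$ this forces $\prob{}\in\Oh(\FieldSize^{-1})$.

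For the range $\VeatureSize-\SecretSize\leq\NumOverlap<(\VeatureSize+\SecretSize)/2$ with $\NumOverlap>\SecretSize$ the common part is too large to adjoin as known roots, so I would instead use that $\NumOverlap$ is only $h_0:=\VeatureSize+\SecretSize-2\NumOverlap\geq 1$ short of the partial-recovery threshold and that the difference sets are small ($\VeatureSize-\NumOverlap=\WErrorSize\leq\SecretSize$). Starting from a genuine pair, apply $\AlgA$ repeatedly: each returned element of $\VeatureSet\setminus\WeatureSet$ (respectively $\WeatureSet\setminus\VeatureSet$) is removed from $\VaultPoly$ (respectively $\WaultPoly$) via Lemma~\ref{lem:coeff_char_poly}, and $\VeatureSet$ is re-padded back up to size $\VeatureSize$ with fresh known elements so that $\AlgA$ is always invoked with parameters $(\VeatureSize,\WeatureSize,\NumOverlap)$ -- the padding is legal because $\NumOverlap>\SecretSize$ forces $h_0<\SecretSize$. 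After $h_0$ elements of $\VeatureSet\setminus\WeatureSet$ have been removed the reduced $\VeatureSet$ has size $2\NumOverlap-\SecretSize$, so $\NumOverlap\geq(|\VeatureSet|+\SecretSize)/2$ and Statement~a) of Theorem~\ref{thm:main} recovers the remaining differing elements deterministically; thus the full symmetric difference is obtained with probability $\gtrsim\prob{}^{\,h_0}$. Dividing the now-known factor $\CharPoly_{\VeatureSet\setminus\WeatureSet}$ out of the visible coefficients of $\VaultPoly$ leaves a record of $\VeatureSet\cap\WeatureSet$ of secret degree $\SecretSize-\WErrorSize$, from which guessing $\SecretSize-\WErrorSize$ of its elements (probability $\Theta(\FieldSize^{-(\SecretSize-\WErrorSize)})$) and unlocking both original records yields $(\VeatureSet,\WeatureSet)$. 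Hence $\prob{Full}\gtrsim\prob{}^{\,h_0}\cdot\FieldSize^{-(\SecretSize-\WErrorSize)}$, while Corollary~\ref{BoundFull} gives $\prob{Full}\leq\Oh(\FieldSize^{\NumOverlap-2\SecretSize})$; since $\NumOverlap-2\SecretSize+(\SecretSize-\WErrorSize)=2\NumOverlap-\VeatureSize-\SecretSize=-h_0$, we get $\prob{}^{\,h_0}\in\Oh(\FieldSize^{-h_0})$, i.e.\ $\prob{}\in\Oh(\FieldSize^{-1})$.

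The step I expect to be the main obstacle is the iteration in the second range: one must verify that the instances successively fed to $\AlgA$ remain legitimate (uniform, with the prescribed parameters) despite the remove-and-re-pad manipulations -- this is where the bound $h_0<\SecretSize$ is essential and where $\AlgA$'s possible bias between the two difference sets must be controlled at every step -- and that the cumulative probability loss is only a field-independent power, so that $\prob{}^{\,h_0}\lesssim\FieldSize^{-h_0}$ still yields $\prob{}\lesssim\FieldSize^{-1}$. Secondary points requiring care are making the ``division by $\CharPoly_{\VeatureSet\setminus\WeatureSet}$'' precise for both scheme variants, pinning down the disjointness-conditioning and bias-removal constants in the first range, and handling the boundary values $\NumOverlap=\SecretSize$ and $\NumOverlap=\VeatureSize-\SecretSize$.
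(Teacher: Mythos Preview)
Your overall reduction strategy differs from the paper's in both ranges. The paper, in all three sub-cases ($\NumOverlap<\SecretSize$, $\NumOverlap=\SecretSize$, and $\VeatureSize-\SecretSize\leq\NumOverlap<(\VeatureSize+\SecretSize)/2$), invokes $\AlgA$ exactly \emph{once}, flips an unbiased coin to guess which side of the symmetric difference the output lies in (this guess is correct with probability exactly $1/2$ regardless of any bias of $\AlgA$), and then \emph{guesses} all remaining elements needed to run Algorithm~\ref{Algo:partial_recovery} and unlock both records. The resulting full-recovery probability is then compared against the two-record bound of Corollary~\ref{BoundFull}. This one-shot use of $\AlgA$ is the crucial simplification: it eliminates both the bias question and any correlation between successive calls.

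For $\NumOverlap\leq\SecretSize$, your reduction to a single record of $\ZeatureSet$ is a genuinely different route, and the arithmetic closes; however, the sentence ``running $\AlgA$ also on the order-swapped pair \ldots\ removes any bias'' is not justified. If $\AlgA$ always outputs, say, the lexicographically smallest element of the symmetric difference, swapping the order of the inputs does not move the output to the other side; and for $\VeatureSize>\WeatureSize$ the padded swapped pair has parameters $(\VeatureSize,\VeatureSize,\NumOverlap)$ rather than $(\VeatureSize,\WeatureSize,\NumOverlap)$, so $\AlgA$ need not even apply. The clean repair is to bound $\Pr[\AlgA\text{'s output}\in\VeatureSet\setminus\WeatureSet]$ and $\Pr[\AlgA\text{'s output}\in\WeatureSet\setminus\VeatureSet]$ by two separate single-record reductions (placing the unknown challenge on the $\VeatureSet$-side in one and on the $\WeatureSet$-side in the other), each yielding $\Oh(\FieldSize^{-1})$; then $\prob{}$ is their sum.

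For $\VeatureSize-\SecretSize\leq\NumOverlap<(\VeatureSize+\SecretSize)/2$ with $\NumOverlap>\SecretSize$, the iteration is a genuine gap, and it is precisely the obstacle you flag. After one successful call to $\AlgA$, the conditional distribution of the remaining feature elements is no longer uniform (success and output of $\AlgA$ can be arbitrarily correlated with the rest of $\VeatureSet,\WeatureSet$), so there is no reason the second call succeeds with probability $\gtrsim\prob{}$; in addition, your re-padding inserts \emph{known} elements into the symmetric difference, which $\AlgA$ may legitimately return on subsequent calls, so $h_0$ ``successful'' calls need not yield $h_0$ new genuine elements. There is no evident way to rescue the $\prob{}^{\,h_0}$ lower bound. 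The paper avoids all of this by using $\AlgA$ once to save a single guess among the $2\ThreshDist-1$ symmetric-difference elements and the $\LastGuess=\SecretSize-\VeatureSize+\NumOverlap$ intersection elements that are otherwise guessed; since Corollary~\ref{BoundFull} gives $\prob{Full}\in\Oh(\FieldSize^{\NumOverlap-2\SecretSize})$ and the guessing costs a factor $\Theta(\FieldSize^{-(2\ThreshDist-1+\LastGuess)})=\Theta(\FieldSize^{-(2\SecretSize-\NumOverlap-1)})$, one obtains $\prob{}\in\Oh(\FieldSize^{-1})$ directly.
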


Unfortunately, we are not able to extend Theorem \ref{Thm_partial_2} to all $\NumOverlap < \VeatureSize-\SecretSize$. A reason for this is that, for $\VeatureSize \geq 3\SecretSize$, Theorem \ref{EntropyLoss} does not give positive bounds on the average min-entropy for any $\NumOverlap \in [2 \SecretSize,\VeatureSize - \SecretSize]$. Note that $\VeatureSize \geq 3\SecretSize$ implies $\VeatureSize - \SecretSize \geq (\VeatureSize+\SecretSize)/2$ and, thus, the claim of Theorem \ref{Thm_partial_2} becomes empty in that case. 

At least for $\SecretSize\geq\VeatureSize/2$, Theorem \ref{Thm_partial_2} covers all $\NumOverlap < (\VeatureSize + \SecretSize)/2$. Therefore, we obtain the following corollary. 

\begin{cor}
Let $\VeatureSize\geq \WeatureSize$, $\SecretSize\geq\VeatureSize/2$ and $\NumOverlap < (\VeatureSize+\SecretSize)/2$ . Then, any algorithm that takes as input two vault records of uniformly chosen feature sets $\VeatureSet$ and $\WeatureSet$ of size $\VeatureSize$ and $\WeatureSize$, respectively, with $|\VeatureSet \cap \WeatureSet|=\NumOverlap$, and outputs an element $\VeatureElement \in (\VeatureSet \cup \WeatureSet)\setminus(\VeatureSet \cap \WeatureSet)$, has success probability $\prob{}\in \Oh(\FieldSize^{-1})$. 
\end{cor}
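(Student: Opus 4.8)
The plan is to obtain this corollary as an immediate specialization of Theorem~\ref{Thm_partial_2}. That theorem already delivers the conclusion $\prob{}\in\Oh(\FieldSize^{-1})$ under the hypothesis $\VeatureSize\geq\WeatureSize$ (assumed here) together with the disjunction that either $(\VeatureSize+\SecretSize)/2 > \NumOverlap \geq \VeatureSize-\SecretSize$ or $\NumOverlap\leq\SecretSize$. So the only thing left to verify is that the extra standing assumption $\SecretSize\geq\VeatureSize/2$, combined with $\NumOverlap<(\VeatureSize+\SecretSize)/2$, forces one of these two cases.

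First I would split on whether $\NumOverlap\leq\SecretSize$. If it is, we are directly in the second case of Theorem~\ref{Thm_partial_2} and nothing more is needed. Otherwise $\NumOverlap>\SecretSize$, and then I would invoke $\SecretSize\geq\VeatureSize/2$ to get the chain $\VeatureSize-\SecretSize\leq\VeatureSize/2\leq\SecretSize<\NumOverlap$, hence $\NumOverlap\geq\VeatureSize-\SecretSize$. Together with the hypothesis $\NumOverlap<(\VeatureSize+\SecretSize)/2$ this puts us in the first case $(\VeatureSize+\SecretSize)/2 > \NumOverlap \geq \VeatureSize-\SecretSize$. In either case Theorem~\ref{Thm_partial_2} applies verbatim and yields the claimed bound on the success probability.

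The argument is a pure case distinction over the integer parameters, so there is essentially no obstacle; the one point deserving a sentence of care is that the two cases are genuinely exhaustive under the present hypotheses, which is exactly what the displayed inequality chain establishes — no range of $\NumOverlap$ below $(\VeatureSize+\SecretSize)/2$ slips through. For context one might also note that $\SecretSize\geq\VeatureSize/2$ is precisely the condition making $\VeatureSize-\SecretSize\leq(\VeatureSize+\SecretSize)/2$, i.e. the intervals $[\VeatureSize-\SecretSize,(\VeatureSize+\SecretSize)/2)$ and $(-\infty,\SecretSize]$ jointly cover $\{\,\NumOverlap : \NumOverlap<(\VeatureSize+\SecretSize)/2\,\}$; this is also why the corollary is not obtained this way when $\SecretSize<\VeatureSize/2$, consistent with the remark preceding it that Theorem~\ref{Thm_partial_2} cannot be extended to all $\NumOverlap<\VeatureSize-\SecretSize$.
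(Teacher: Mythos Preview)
Your proposal is correct and is precisely the argument the paper intends: the paper states the corollary immediately after the remark ``At least for $\SecretSize\geq\VeatureSize/2$, Theorem~\ref{Thm_partial_2} covers all $\NumOverlap < (\VeatureSize+\SecretSize)/2$,'' and your case split with the chain $\VeatureSize-\SecretSize\leq\VeatureSize/2\leq\SecretSize<\NumOverlap$ is exactly the verification of that coverage. There is nothing to add.
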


\section{Preventions}\label{sec:preventions}
In this section, we discuss possible preventions to avoid an adversary from running the partial recovery attack in order to cross-match the vaults or even recover the feature elements.

\subsection{Use the Original Fuzzy Vault Scheme}
The partial recovery attack (Algorithm \ref{Algo:partial_recovery}) presented in this paper makes use of the representation of the vault records as polynomials. Thus, the attack apparently cannot be applied to the original fuzzy vault scheme. However, to ensure that the original fuzzy vault resists correlation attacks \cite{bib:SimoensEtAl2009}, it is necessary to use a number of chaff points that is close to the maximum, \ie{} to set almost all unoccupied points as chaff; this measure however, implies a significant increase of memory each vault record consumes. 

\subsection{Protection by Additional Secrets}
An obvious countermeasure is to encrypt the vaults by a system-wide or user-specific secret key \cite{bib:BlantonAliasgari2012}, or to combine the features with a user password \cite{bib:NandakumarNagarJain2007}. However, with this measure, the security and robustness of the scheme relies on the secrecy and availability, respectively, of the key or password, exactly the constraint that biometric cryptosystems aim to remove \cite{bib:UludagEtAl2004}; in fact, the fuzzy vault scheme has been promoted as a key-less template protection scheme \cite{bib:KevenaarEtAl2010}. 

\subsection{Re-Ordering of the Feature Encodings}

Analogously to the random basis transformations suggested to prevent record multiplicity attacks against the fuzzy commitment scheme \cite{bib:KelkboomEtAl2011}, the (public) encoding of the feature space into the finite field $\FiniteField$ may be randomized. The encoding may be chosen record-specific or system-wide, depending on whether users may perform several \enrollments{} or not. The random encoding can be compared to a \emph{password salt} in that it does not need to be kept secret but aims at distorting the stored reference information. 

Obviously, if $\FieldSize \gg \VeatureSize$, randomly chosen encodings ensure that, with high probability, the representations (encodings) of
two arbitrary feature sets share only a small number of field elements. Hence, it is reasonable to assume that the random permutation between the two encodings destroys any algebraic similarities that could be exploited by attacks. However, further research is required to confirm this assumption.  

When using random record-specific encodings, a full code table must be stored along with the vault record. However, using pseudo-random functions, encodings can be efficiently generated from small seeds, \eg{} from counters, greatly reducing the storage requirements for the records. 

Due to its simplicity, effectiveness, and the absence of clear drawbacks, the random encoding is a very promising approach.

\subsection{Additional Randomness}
Roughly speaking, the partial recovery attack (Algorithm \ref{Algo:partial_recovery}) processes the upper coefficients of the improved fuzzy vault scheme. Therefore, an approach to thwart the attack is to add additional randomness to the upper coefficients. However, in order to allow recovery of the secret polynomial in genuine verification attempts and to prevent it in impostor attempts, the randomization should preserve the property that $\VaultPoly(\absc)=\VaultSecret(\absc)$ holds exactly for the genuine feature elements. This can be achieved by multiplying the characteristic polynomial $\CharPoly_{\VeatureSet}(\SYM)$ of $\VeatureSet$ with a random polynomial having no roots in $\FiniteField$, before adding it to the secret polynomial $\VaultPoly$. 

One method to efficiently choose such a polynomial is to select a random \emph{blending set} $\AdditionalVeatureSet$ with elements in $\FiniteFieldExt\setminus \FiniteField$, where $\FiniteFieldExt$ is an extension field of $\FiniteField$, and to use the characteristic polynomial $\CharPoly_\AdditionalVeatureSet$. Precisely, for a set $\VeatureSet$ of size $\VeatureSize$ of features encoded as elements in $\FiniteField$, select the secret polynomial $\VaultSecret\in\FiniteFieldExt[\SYM]$ of degree smaller than $\SecretSize$,\footnote{As before, $\VaultSecret$ can be chosen at random, resulting in a probabilistic version, or set to the negative of the lowest $\SecretSize$ terms of  $\CharPoly_{\AdditionalVeatureSet}\cdot\CharPoly_\VeatureSet$ to obtain a deterministic version with reduced record size.} generate a random blending set $\AdditionalVeatureSet$ of size $\AdditionalVeatureSize$ with elements from $\FiniteFieldExt\setminus \FiniteField$ and compute the vault record as the polynomial
\begin{align}
\begin{split}
\VaultPoly(\SYM)
&=\VaultSecret(\SYM)+\CharPoly_\AdditionalVeatureSet(\SYM)\cdot\CharPoly_\VeatureSet(\SYM).
\end{split}
\end{align}

To see how this randomization helps to prevent our attack, suppose that we are given a second vault with additional randomness $\WaultPoly=\WaultSecret+\CharPoly_\AdditionalWeatureSet\cdot\CharPoly_\WeatureSet$ where $\WaultSecret\in\FiniteFieldExt[\SYM]$ is of degree smaller than $\SecretSize$, $\WeatureSet\subset\FiniteField$ encodes $\WeatureSize\leq\VeatureSize$ features, and randomly chosen $\AdditionalWeatureSet\subset\FiniteFieldExt\setminus\FiniteField$ of size $\AdditionalWeatureSize$. For simplicity, we assume that $\AdditionalWeatureSize+\WeatureSize=\AdditionalVeatureSize+\VeatureSize$ such that $\VaultPoly$ and $\WaultPoly$ are of the same degree. To successfully apply Algorithm \ref{Algo:partial_recovery} (in the extension field), the requirement 
\begin{equation}
|\AdditionalVeatureSet\cap\AdditionalWeatureSet|+\NumOverlap\geq(\AdditionalVeatureSize+\VeatureSize+\SecretSize)/2,
\label{eq:DecAttackRequirement}
\end{equation}
with $\NumOverlap=|\VeatureSet\cap\WeatureSet|$, must be fulfilled; the requirement $\NumOverlap\geq(\VeatureSize+\SecretSize)/2$ is neither sufficient nor necessary for (\ref{eq:DecAttackRequirement}).

Since $\NumOverlap\leq\VeatureSize$, the requirement (\ref{eq:DecAttackRequirement}) is fulfilled only if  
\begin{equation*}
|\AdditionalVeatureSet\cap\AdditionalWeatureSet|\geq(\AdditionalVeatureSize-\VeatureSize+\SecretSize)/2.
\end{equation*}
This yields a criterion how the new parameters $\AdditionalVeatureSize$ and $|\FiniteFieldExt|$ has to be chosen to effectively thwart the partial recovery attack.

The described modification increases the size of the vault records from $\VeatureSize \log |\FiniteField|$ to $(\VeatureSize + \AdditionalVeatureSize)\log |\FiniteFieldExt|$ (in the probabilistic version).

\section{Discussion}\label{sec:discussion}
In this paper, we presented a new attack via record multiplicity against the improved fuzzy vault scheme of Dodis \etal{} \cite{bib:DodisEtAl2004,bib:DodisEtAl2008}. Using the extended Euclidean algorithm, our attack links two vault records of the same individual with high probability (equal to the genuine acceptance rate for systems based on a Reed-Solomon decoder) and uncovers the elements, \eg{} the biometric data, in which the feature sets protected by the records differ. The algorithm is very efficient running in random polynomial time and can be proved to succeed provided that the two feature sets overlap sufficiently. We thereby disproved the conjecture of Blanton and Aliasgari \cite{bib:BlantonAliasgari2013} that solving the equations given by two related vault records is computationally infeasible if the parameters are not small. The experiments we conducted empirically verified the effectiveness and reliability of the attack for parameters that can be expected in practice; in particular, we demonstrated that, typically, the probability of spurious outputs, \ie{} false positives, is very small.

For certain parameters, specifically for $\VeatureSize \gg \SecretSize$ and $\NumOverlap\geq(\VeatureSize+\SecretSize)/2$ not too large, the recovery of the differing feature elements already allows full recovery of the protected feature sets. For the general case, we showed how the partial recovery of the feature by our attack can be extended to a probabilistic full recovery method, resulting in a significantly improved brute-force attack as compared to previous approaches \cite{bib:MihailescuMunkTams2009}. 

\iffull
On the other hand, we derived bounds for the information leakage from multiple vault records, which allows to bound the success probability of attacks via record multiplicity. Using these bounds we were able to show that our partial and full recovery attacks are asymptotically optimal for a wide range of parameters, and could even prove the general result that related vault records do not reveal any elements from the overlap of the protected feature sets.   
\fi

Our work confirms that additional countermeasures are necessary in order to allow secure reuse of the improved fuzzy vault scheme. We presented several approaches to thwart record multiplicity attacks without modifying the general properties of the scheme; in particular, (pseudo-)random embedding of the features into the finite field seems to be a promising approach. However, further research is needed to give theoretical evidence for its effectiveness. 

As a final remark, we stress that, generally, attacks against biometric cryptosystems should not only be considered from an information theoretic but also from a complexity theoretic point of view, which is currently of greater practical relevance for modalities such as fingerprints (\eg{} see \cite{bib:MerkleEtAl2010ProvableSecurity}).

\bibliographystyle{IEEEtran}
\bibliography{IEEEabrv,literature}

\iffull
\begin{appendix}

\ifarxiv
\section{Appendix}
\fi

\subsection{Proof of Theorem \ref{thm:main}}\label{sec:ProofMain}

\subsubsection*{Proof of a)}
We show that Algorithm \ref{Algo:partial_recovery} outputs a correct triple if there exist feature sets $\VeatureSet$ and $\WeatureSet$ protected by $\VaultPoly$ and $\WaultPoly$ sharing at least $(\VeatureSize+\SecretSize)/2$ elements. Therefore, let $\NumOverlap=|\VeatureSet\cap\WeatureSet|$ and set
\begin{align}
\begin{split}
\VErrorPoly(\SYM)&=\CharPoly_{\WeatureSet\setminus\VeatureSet}(\SYM)=\prod_{\absc\in\WeatureSet\setminus\VeatureSet}(\SYM-\absc)\quad\text{and}\\
\WErrorPoly(\SYM)&=\CharPoly_{\VeatureSet\setminus\WeatureSet}(\SYM)=\prod_{\absc\in\VeatureSet\setminus\WeatureSet}(\SYM-\absc)
\label{eq:ErrorPolys}
\end{split}
\end{align}
Furthermore, we write
\begin{equation*}
\WErrorSize=\deg(\WErrorPoly)\quad\text{and}\quad\VErrorSize=\deg(\VErrorPoly).
\end{equation*}
We prove the desired statement with the following.
\begin{lemma}
\label{lemma:main}
Suppose that $\NumOverlap\geq(\VeatureSize+\SecretSize)/2$ and let 
\begin{equation*}
\RemPoly_j=\VErrorPoly_j\cdot\VaultPoly+\WErrorPoly_j\cdot\WaultPoly
\end{equation*} 
be the sequence in the extended Euclidean algorithm applied to $\VaultPoly$ and $\WaultPoly$ (\eg{} Algorithm 3.6 in \cite{bib:vzGathen2003}). Set 
\begin{equation*}
\CandidateIndicesList = \{~j~|~0 \leq \deg(\RemPoly_j) < \deg(\WErrorPoly_j)+\SecretSize~\}
\end{equation*}
and $j_0 = \argmin_{j \in\CandidateIndicesList} \left(\deg(\WErrorPoly_j) \right)$.
Then $\VErrorPoly=\EEAConstant\cdot\VErrorPoly_{j_0}$ and $\WErrorPoly=-\EEAConstant\cdot\WErrorPoly_{j_0}$ for a non-zero $\EEAConstant\in \FiniteField$.
\end{lemma}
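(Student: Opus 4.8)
The plan is to collapse the whole statement to a single polynomial identity and then quote Gao's analysis of the extended Euclidean algorithm. First I would use that $\mathbf A=(\mathbf A\cap\mathbf B)\sqcup(\mathbf A\setminus\mathbf B)$ and $\mathbf B=(\mathbf A\cap\mathbf B)\sqcup(\mathbf B\setminus\mathbf A)$, so that the characteristic polynomials split as $\chi_{\mathbf A}=\chi_{\mathbf A\cap\mathbf B}\cdot Q$ and $\chi_{\mathbf B}=\chi_{\mathbf A\cap\mathbf B}\cdot P$ with $P,Q$ as in (\ref{eq:ErrorPolys}). Substituting into $V=f+\chi_{\mathbf A}$ and $W=g+\chi_{\mathbf B}$ and forming $P\cdot V-Q\cdot W$ cancels the $\chi_{\mathbf A\cap\mathbf B}$-part and yields the key equation $P\cdot V+(-Q)\cdot W=Pf-Qg=:\rho$. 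Here $\deg P=\delta=s-\omega$, $\deg Q=\epsilon=t-\omega$ (hence $\delta\le\epsilon$ because $s\le t$), $\deg\rho\le\epsilon+k-1$, and $\gcd(P,Q)=1$ because $\mathbf A\setminus\mathbf B$ and $\mathbf B\setminus\mathbf A$ are disjoint. The hypothesis $\omega\ge(t+k)/2$ turns into $\epsilon\le\lfloor(t-k)/2\rfloor$ and $\deg\rho\le(t+k)/2-1<(t+k)/2$.

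Next I would invoke the standard degree bookkeeping of the EEA on $(V,W)$ with $\deg V=t\ge\deg W=s$: the $\deg R_j$ strictly decrease from $t$, and $\deg Q_j+\deg R_{j-1}=t$ for $j\ge2$, so $\deg R_j-\deg Q_j=\deg R_j+\deg R_{j-1}-t$ is strictly decreasing in $j$; consequently $\mathcal L=\{j:0\le\deg R_j<\deg Q_j+k\}$ — once shown nonempty — is an upper set of indices, all with $R_j\ne0$, and $j_0=\min\mathcal L$ is the first index at which $\deg R_j-\deg Q_j$ drops below $k$ (indices $0$ and $1$ are automatically excluded, since $\deg R_0=t\ge k$ and $\deg R_1=s\ge\omega\ge k$).

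The core step is Gao's EEA lemma \cite{bib:Gao2002}: since $(P,-Q,\rho)$ solves $P\cdot V+(-Q)\cdot W=\rho$ with "error locator" of degree $\deg(-Q)=\epsilon\le\lfloor(t-k)/2\rfloor$ and remainder of degree $\deg\rho<(t+k)/2$, the EEA must contain, at the unique index $j^\ast$ with $\deg R_{j^\ast}<(t+k)/2\le\deg R_{j^\ast-1}$, a triple with $(P_{j^\ast},Q_{j^\ast},R_{j^\ast})=\alpha^{-1}(P,-Q,\rho)$ for some non-zero $\alpha\in\mathbf F$. In particular $R_{j^\ast}=\alpha^{-1}\rho\ne0$ and $\deg R_{j^\ast}=\deg\rho<\epsilon+k=\deg Q_{j^\ast}+k$, so $j^\ast\in\mathcal L$ and $\mathcal L\ne\emptyset$. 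It then remains to identify $j^\ast$ with $j_0$: from $\deg R_{j^\ast}<(t+k)/2\le\omega\le s=\deg R_1$ we get $j^\ast\ge2$, so the relation $\deg Q_{j^\ast}+\deg R_{j^\ast-1}=t$ gives $\deg R_{j^\ast-1}=t-\epsilon=\omega$; hence for every $j<j^\ast$ with $R_j\ne0$ we have $\deg R_j\ge\omega$ and $\deg R_{j-1}\ge\omega+1$, so $\deg R_j+\deg R_{j-1}\ge2\omega+1>t+k$ (using $2\omega\ge t+k$), i.e. $\deg R_j\ge\deg Q_j+k$ and $j\notin\mathcal L$. Combined with $0,1\notin\mathcal L$ this forces $j_0=j^\ast$, whence $P=\alpha P_{j_0}$ and $Q=-\alpha Q_{j_0}$.

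The obstacle I anticipate is the degenerate case $\rho=0$: then Gao's lemma gives nothing, $\mathcal L$ can even be empty, and instead $\gcd(V,W)=\tilde h+\chi_{\mathbf A\cap\mathbf B}$ for the common factor $\tilde h$ with $f=Q\tilde h$, $g=P\tilde h$ (which, unless $\omega>t-k$, forces $f=g=0$, i.e. trivially insecure vaults); this case either needs a separate ad hoc argument or is ruled out by restricting to vaults in general position, which holds with overwhelming probability in the probabilistic variant. A minor bookkeeping wrinkle is $s=t$, where $\deg W=\deg V$ and one applies Gao's lemma only after a single normalizing reduction step. Apart from these points, the argument is just the key identity followed by a direct appeal to Gao's result.
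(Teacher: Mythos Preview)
Your approach is correct and shares the paper's core ingredient: both reduce to the key identity $P\cdot V-Q\cdot W=Pf-Qg$ and then invoke Gao's EEA lemma to locate an index $j^\ast$ (the paper calls it $j_\omega$) at which $(P_{j^\ast},Q_{j^\ast})$ is a scalar multiple of $(P,-Q)$. Where you diverge is in the final identification of $j_0$. The paper does \emph{not} argue that $j_0=j_\omega$; instead it takes whatever $j_0$ realises the minimum, writes $R_{j_0}=\chi_{\mathbf A\cap\mathbf B}(P_{j_0}Q+Q_{j_0}P)+(P_{j_0}f+Q_{j_0}g)$, and uses the degree bounds $\deg R_{j_0},\deg(P_{j_0}f+Q_{j_0}g)<\omega=\deg\chi_{\mathbf A\cap\mathbf B}$ to force $P_{j_0}Q+Q_{j_0}P=0$, whence coprimality and $\deg Q_{j_0}\le\deg Q$ give the scalar relation. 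Your route instead uses the monotonicity of $\deg R_j-\deg Q_j=\deg R_j+\deg R_{j-1}-t$ to show directly that no $j<j^\ast$ lies in $\mathcal L$, so $j_0=j^\ast$. Both arguments are valid; yours is a touch more combinatorial and avoids the explicit decomposition of $R_{j_0}$, while the paper's is slightly more self-contained in that it does not rely on the specific degree identity $\deg Q_j+\deg R_{j-1}=t$.

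On your flagged obstacles: the degenerate case $\rho=Pf-Qg=0$ is a genuine gap, but note that the paper's proof has exactly the same gap (it needs $R_{j_\omega}\ne0$ to conclude $j_\omega\in\mathcal L$), so you are not missing anything the paper supplies; your remark that this is a measure-zero event in the probabilistic variant is the right way to dismiss it. Your second worry, the case $s=t$, is not actually an issue: the EEA and Gao's lemma work verbatim when $\deg V=\deg W$ (the first quotient is a nonzero constant since both polynomials are monic, and the identity $\deg Q_j+\deg R_{j-1}=t$ continues to hold from $j=1$ on), so no normalizing step is needed.
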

To show that the lemma in fact holds, we apply a lemma of Gao (Lemma 3.2 in \cite{bib:Gao2002}) which we state first using our notation such it easily applies to our situation. 
\begin{lemma}[Gao 2002]
Let $\VaultPoly(\SYM)=\CharPoly(\SYM)\VErrorPoly(\SYM)+\VaultSecret(\SYM)$ and $\WaultPoly(\SYM)=\CharPoly(\SYM)\WErrorPoly(\SYM)+\WaultSecret(\SYM)$, with $\gcd(\VErrorPoly,\WErrorPoly)=1$ and
\begin{equation*}
\deg(\VErrorPoly),\deg(\WErrorPoly)\leq\WErrorSize,\quad\quad\deg(\VaultSecret),\deg(\WaultSecret)\leq\SecretSize-1.
\end{equation*}
Suppose that $\NumOverlap$ satisfies
\begin{equation*}
\deg(\CharPoly)\geq\NumOverlap>\SecretSize-1+\WErrorSize.
\end{equation*}
Apply the extended Euclidean algorithm to $\VaultPoly$ and $\WaultPoly$ to obtain the sequence
\begin{equation*}
\RemPoly_j = \VErrorPoly_j\cdot\VaultPoly+\WErrorPoly_j\cdot\WaultPoly.
\end{equation*}
Let $j_{\NumOverlap}$ be minimal such that $\deg(\RemPoly_{j_{\NumOverlap}})<\NumOverlap$. Then there exists a non-zero $\EEAConstant\in\FiniteField$ with $\VErrorPoly_{j_{\NumOverlap}}=\EEAConstant\cdot\VErrorPoly$ and $\WErrorPoly_{j_{\NumOverlap}}=-\EEAConstant\cdot\WErrorPoly$.
\label{lemma:Gao}
\end{lemma}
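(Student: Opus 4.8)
The plan is to produce one explicit \emph{short} cofactor relation between $\VaultPoly$ and $\WaultPoly$ and then use the minimality built into the extended Euclidean algorithm (EEA) to force the row indexed by $j_{\NumOverlap}$ to be a scalar multiple of it. First I would cancel the common factor $\CharPoly$: from $\VaultPoly=\CharPoly\VErrorPoly+\VaultSecret$ and $\WaultPoly=\CharPoly\WErrorPoly+\WaultSecret$ a direct computation gives the witness relation $\WErrorPoly\cdot\VaultPoly-\VErrorPoly\cdot\WaultPoly=\WErrorPoly\VaultSecret-\VErrorPoly\WaultSecret=:\RemPoly^{\ast}$. By the degree hypotheses $\deg(\RemPoly^{\ast})\le\WErrorSize+\SecretSize-1<\NumOverlap$, so $(\WErrorPoly,-\VErrorPoly)$ is a cofactor pair whose combination already has remainder degree below $\NumOverlap$. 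Since $\deg(\CharPoly)\ge\NumOverlap>\SecretSize-1\ge\deg(\VaultSecret),\deg(\WaultSecret)$, no leading-term cancellation occurs and $\deg(\VaultPoly)=\deg(\CharPoly)+\deg(\VErrorPoly)$, $\deg(\WaultPoly)=\deg(\CharPoly)+\deg(\WErrorPoly)$; moreover $\gcd(\VaultPoly,\WaultPoly)$ divides $\RemPoly^{\ast}$ and hence has degree $<\NumOverlap$, so $j_{\NumOverlap}$ is well defined with $\RemPoly_{j_{\NumOverlap}}\neq0$. Without loss of generality I assume $\deg(\VaultPoly)\ge\deg(\WaultPoly)$ (equivalently $\deg(\VErrorPoly)\ge\deg(\WErrorPoly)$), the opposite case being symmetric.

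Next I would invoke the standard structural facts about the EEA applied to $\VaultPoly,\WaultPoly$ (see \cite{bib:vzGathen2003}): writing $\RemPoly_j=\VErrorPoly_j\VaultPoly+\WErrorPoly_j\WaultPoly$, the remainder degrees strictly decrease, the cofactor degree relation $\deg(\WErrorPoly_j)=\deg(\VaultPoly)-\deg(\RemPoly_{j-1})$ holds, and consecutive cofactor pairs are coprime, $\gcd(\VErrorPoly_j,\WErrorPoly_j)=1$, because the cumulative transition matrix has constant nonzero determinant. Minimality of $j_{\NumOverlap}$ gives $\deg(\RemPoly_{j_{\NumOverlap}-1})\ge\NumOverlap$, whence $\deg(\WErrorPoly_{j_{\NumOverlap}})\le\deg(\VaultPoly)-\NumOverlap$ and $\deg(\RemPoly_{j_{\NumOverlap}})\le\NumOverlap-1$.

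The core step is a determinant (Cramer) argument. I would read the two identities $\VErrorPoly_{j_{\NumOverlap}}\VaultPoly+\WErrorPoly_{j_{\NumOverlap}}\WaultPoly=\RemPoly_{j_{\NumOverlap}}$ and $\WErrorPoly\VaultPoly-\VErrorPoly\WaultPoly=\RemPoly^{\ast}$ as a linear system in the unknowns $\VaultPoly,\WaultPoly$, whose coefficient determinant is $\Delta=-\bigl(\VErrorPoly_{j_{\NumOverlap}}\VErrorPoly+\WErrorPoly_{j_{\NumOverlap}}\WErrorPoly\bigr)$. Cramer's rule yields $\Delta\cdot\VaultPoly=-\bigl(\RemPoly_{j_{\NumOverlap}}\VErrorPoly+\RemPoly^{\ast}\WErrorPoly_{j_{\NumOverlap}}\bigr)$. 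I then bound the right-hand side strictly below $\deg(\VaultPoly)$: the first summand has degree $\le(\NumOverlap-1)+\deg(\VErrorPoly)<\deg(\CharPoly)+\deg(\VErrorPoly)=\deg(\VaultPoly)$, and the second has degree $\le(\WErrorSize+\SecretSize-1)+\bigl(\deg(\VaultPoly)-\NumOverlap\bigr)<\deg(\VaultPoly)$, where the final strict inequality is exactly the hypothesis $\NumOverlap>\SecretSize-1+\WErrorSize$. As $\Delta\neq0$ would force $\deg(\Delta\cdot\VaultPoly)\ge\deg(\VaultPoly)$, we conclude $\Delta=0$.

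Finally, $\Delta=0$ reads $\VErrorPoly_{j_{\NumOverlap}}\VErrorPoly=-\WErrorPoly_{j_{\NumOverlap}}\WErrorPoly$; since $\gcd(\VErrorPoly,\WErrorPoly)=1$, $\WErrorPoly$ divides $\VErrorPoly_{j_{\NumOverlap}}$ and we obtain $(\VErrorPoly_{j_{\NumOverlap}},\WErrorPoly_{j_{\NumOverlap}})=\EEAConstant\cdot(\WErrorPoly,-\VErrorPoly)$ for some $\EEAConstant\in\FiniteField[\SYM]$. The coprimality $\gcd(\VErrorPoly_{j_{\NumOverlap}},\WErrorPoly_{j_{\NumOverlap}})=1$ together with $\gcd(\VErrorPoly,\WErrorPoly)=1$ forces $\EEAConstant$ to be a unit, i.e.\ a nonzero constant in $\FiniteField$ (nonzero because $\RemPoly_{j_{\NumOverlap}}\neq0$), which is the asserted scalar identification of the minimal EEA cofactor pair with the error-locator polynomials $(\VErrorPoly,\WErrorPoly)$ up to $\EEAConstant$. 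I expect the degree bookkeeping of the third paragraph to be the main obstacle --- tracking which cofactor inherits which bound and isolating the single place where $\NumOverlap>\SecretSize-1+\WErrorSize$ is consumed --- while the coprimality fact in the last paragraph is what upgrades a polynomial multiple to a constant multiple.
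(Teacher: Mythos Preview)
The paper does not prove this lemma at all; it is quoted from Gao \cite{bib:Gao2002} and used as a black box in the proof of Lemma~\ref{lemma:main}. Your argument is a correct self-contained proof along the classical lines one finds in treatments of the key equation for Reed--Solomon decoding: produce the short witness relation $\WErrorPoly\cdot\VaultPoly-\VErrorPoly\cdot\WaultPoly=\WErrorPoly\VaultSecret-\VErrorPoly\WaultSecret$, invoke the EEA degree identity $\deg(\WErrorPoly_{j})=\deg(\VaultPoly)-\deg(\RemPoly_{j-1})$ and the coprimality of consecutive B\'ezout coefficients, and force the determinant $\Delta=-(\VErrorPoly_{j_{\NumOverlap}}\VErrorPoly+\WErrorPoly_{j_{\NumOverlap}}\WErrorPoly)$ to vanish by a degree count. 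The place where the hypothesis $\NumOverlap>\SecretSize-1+\WErrorSize$ is consumed is exactly where you locate it. Two minor remarks: your claim ``$\RemPoly_{j_{\NumOverlap}}\neq 0$'' can fail in the degenerate case $\WErrorPoly\VaultSecret=\VErrorPoly\WaultSecret$ (then $\gcd(\VaultPoly,\WaultPoly)$ has degree $\deg(\CharPoly)\geq\NumOverlap$ and the first remainder below $\NumOverlap$ is zero), but your Cramer step and the coprimality argument go through unchanged in that case, so the conclusion survives; and the parenthetical ``nonzero because $\RemPoly_{j_{\NumOverlap}}\neq 0$'' is superfluous, since $\gcd(\VErrorPoly_{j_{\NumOverlap}},\WErrorPoly_{j_{\NumOverlap}})=1$ already forces $\EEAConstant$ to be a unit.

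One point worth flagging: from the stated hypotheses $\VaultPoly=\CharPoly\VErrorPoly+\VaultSecret$ and $\WaultPoly=\CharPoly\WErrorPoly+\WaultSecret$ your computation yields $(\VErrorPoly_{j_{\NumOverlap}},\WErrorPoly_{j_{\NumOverlap}})=\EEAConstant\,(\WErrorPoly,-\VErrorPoly)$, whereas the lemma as printed asserts $(\VErrorPoly_{j_{\NumOverlap}},\WErrorPoly_{j_{\NumOverlap}})=(\EEAConstant\,\VErrorPoly,-\EEAConstant\,\WErrorPoly)$. Your version is the mathematically correct one under the stated hypotheses (the printed conclusion would give $\RemPoly_{j_{\NumOverlap}}=\EEAConstant\,\CharPoly(\VErrorPoly^2-\WErrorPoly^2)+\cdots$, of large degree in general). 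If you inspect how the paper actually \emph{applies} the lemma in the proof of Lemma~\ref{lemma:main}, you will see that there $\VaultPoly=\VaultSecret+\CharPoly_{\VeatureSet\cap\WeatureSet}\cdot\CharPoly_{\VeatureSet\setminus\WeatureSet}=\VaultSecret+\CharPoly\cdot\WErrorPoly$, i.e.\ the roles of $\VErrorPoly$ and $\WErrorPoly$ are swapped relative to the lemma's displayed hypotheses, so the application is internally consistent with your (correct) conclusion. In short: your proof is right; the discrepancy is a labeling slip in the transcription of Gao's lemma, not an error in your reasoning.
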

\BEGINPROOF[Proof of Lemma \ref{lemma:main}]
We first verify that the requirements of Lemma \ref{lemma:Gao} are fulfilled for $\CharPoly=\CharPoly_{\VeatureSet\cap\WeatureSet}$. Therefore, note that the polynomials $\VErrorPoly$ and $\WErrorPoly$ are in fact co-prime. Furthermore, $\WErrorSize\geq\VErrorSize$ which is implied by the relation $\VeatureSize+\VErrorSize=\WeatureSize+\WErrorSize$; thus, $\WErrorSize$ is an upper bound for $\deg(\VErrorPoly)$ and $\deg(\WErrorPoly)$. Note that $\WErrorSize=\VeatureSize-\NumOverlap$ and that $2\NumOverlap\geq\VeatureSize+\SecretSize$. Hence
\begin{equation*}
\label{eq:IsSmallerThanNumOverlap}
\SecretSize-1+\WErrorSize = \SecretSize+\VeatureSize-\NumOverlap-1 < \NumOverlap
\end{equation*}
and thus the requirements of Lemma \ref{lemma:Gao} are fulfilled.

Consequently, there exists a $j_{\NumOverlap}$ with $\VErrorPoly_{j_{\NumOverlap}}=\EEAConstant\cdot\VErrorPoly$ and $\WErrorPoly_{j_{\NumOverlap}}=-\EEAConstant\cdot\WErrorPoly$ for a non-zero field element $\EEAConstant$, and $\deg(\RemPoly_{j_{\NumOverlap}}) < \NumOverlap$. It follows that $\VErrorPoly_{j_{\NumOverlap}} \cdot\CharPoly_{\VeatureSet} + \WErrorPoly_{j_{\NumOverlap}}\cdot \CharPoly_{\WeatureSet} = 0$ and thus, 
\begin{equation*}
\RemPoly_{j_{\NumOverlap}} = \EEAConstant\left( \VErrorPoly \cdot \VaultSecret - \WErrorPoly \cdot\WaultSecret \right), 
\end{equation*}
which implies $j_{\NumOverlap}\in\CandidateIndicesList$ (and, in particular, that $\CandidateIndicesList$ is not empty). 

By definition of $j_0$, we have $\deg(\WErrorPoly_{j_0}) \leq \deg(\WErrorPoly_{j_{\NumOverlap}}) < \VeatureSize-\NumOverlap$. On the other hand, since $\deg(\RemPoly_{j_0}) < \VeatureSize$, both summands $\VErrorPoly_{j_0}\cdot\VaultPoly$ and $\WErrorPoly_{j_0}\cdot\WaultPoly$ of $\RemPoly_{j_0}$ must have the same degree and thus $\deg(\VErrorPoly_{j_0}) \leq \deg(\WErrorPoly_{j_0}) < \VeatureSize-\NumOverlap$. Furthermore, since $j_0 \in\CandidateIndicesList$, we have $\deg(\RemPoly_{j_0})< \deg(\WErrorPoly_{j_0})+\SecretSize$. Therefore, both the polynomial 
\begin{equation*}
\RelationErrorPoly = \VErrorPoly_{j_0} \cdot \VaultSecret - \WErrorPoly_{j_0} \cdot\WaultSecret
\end{equation*}
and $\RemPoly_{j_0}$ have degree strictly smaller than $\VeatureSize - \NumOverlap + \SecretSize$, which is at most $\NumOverlap$.  

On the other hand, 
\begin{align}
\begin{split}
\RemPoly_{j_0} &= \VErrorPoly_{j_0}\cdot\VaultPoly+\WErrorPoly_{j_0}\cdot\WaultPoly \\
 &= \VErrorPoly_{j_{0}} \cdot\CharPoly_{\VeatureSet} + \WErrorPoly_{j_{0}}\cdot \CharPoly_{\WeatureSet} + \RelationErrorPoly \\
 &= \CharPoly_{\VeatureSet\cap\WeatureSet} \left( \VErrorPoly_{j_{0}} \cdot\WErrorPoly + \WErrorPoly_{j_{0}}\cdot \VErrorPoly\right) + \RelationErrorPoly
\label{Split_Eukl_Rel}
\end{split}
\end{align}
Since $\deg(\CharPoly_{\VeatureSet\cap\WeatureSet})=\NumOverlap$ and  $\deg(\RemPoly_{j_0}),\deg(\RelationErrorPoly)<\NumOverlap$, we can conclude from (\ref{Split_Eukl_Rel}) that $\VErrorPoly_{j_{0}} \cdot\WErrorPoly + \WErrorPoly_{j_{0}}\cdot \VErrorPoly=0$. As $\VErrorPoly$ and $\WErrorPoly$ are co-prime, this implies that $\VErrorPoly_{j_{0}}=\EEAConstant' \VErrorPoly$ and $\WErrorPoly_{j_{0}}=-\EEAConstant' \WErrorPoly$. On the other hand, $\deg(\WErrorPoly_{j_{0}})\leq \deg(\WErrorPoly_{j_{\NumOverlap}})=\deg( \WErrorPoly)$ and, consequently, $\EEAConstant' \in \FiniteField$. By definition of $\CandidateIndicesList$, $\RemPoly_{j_0}\neq 0$ and thus, $\EEAConstant'$ is non-zero. 
This proves the statement of Lemma \ref{lemma:main}.
\ENDPROOF

Consider the following.
\begin{cor}\label{cor:EfficientRecovery}
Suppose that $\NumOverlap\geq(\VeatureSize+\SecretSize)/2$ and $\RemPoly_{j_0}=\VErrorPoly_{j_0}\cdot\VaultPoly+\WErrorPoly_{j_0}\cdot\WaultPoly$ are as in Step \ref{partial_index} of Algorithm \ref{Algo:partial_recovery}. Then
\begin{equation*}
\VaultSecret\equiv\VaultPoly\mod\WErrorPoly_{j_0}.
\end{equation*}
\end{cor}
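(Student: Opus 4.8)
The plan is to read the result off Lemma~\ref{lemma:main}, since that lemma already determines $\WErrorPoly_{j_0}$ up to a nonzero scalar. First I would note that the index $j_0$ in the statement, namely the one picked in Step~\ref{partial_index} of Algorithm~\ref{Algo:partial_recovery}, is the same as the index $j_0=\argmin_{j\in\CandidateIndicesList}\deg(\WErrorPoly_j)$ of Lemma~\ref{lemma:main}: the defining set $\CandidateIndicesList=\{\,j : 0\le\deg(\RemPoly_j)<\deg(\WErrorPoly_j)+\SecretSize\,\}$ and the minimality requirement on $\deg(\WErrorPoly_{j_0})$ in Step~\ref{partial_index} are literally the same condition (using that $\RemPoly_j\neq 0$ is equivalent to $\deg(\RemPoly_j)\ge 0$). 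Hence, since we are in the situation of part~a) of Theorem~\ref{thm:main} where $\NumOverlap\ge(\VeatureSize+\SecretSize)/2$, Lemma~\ref{lemma:main} applies and yields a nonzero $\EEAConstant\in\FiniteField$ with $\WErrorPoly=-\EEAConstant\cdot\WErrorPoly_{j_0}$, where $\WErrorPoly=\CharPoly_{\VeatureSet\setminus\WeatureSet}$ by~(\ref{eq:ErrorPolys}).

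Next I would use that $\EEAConstant$ is a unit in $\FiniteField[\SYM]$, so that $\WErrorPoly_{j_0}$ and $\CharPoly_{\VeatureSet\setminus\WeatureSet}$ are associates; consequently, congruence modulo $\WErrorPoly_{j_0}$ coincides with congruence modulo $\CharPoly_{\VeatureSet\setminus\WeatureSet}$, and it suffices to prove $\VaultPoly\equiv\VaultSecret\pmod{\CharPoly_{\VeatureSet\setminus\WeatureSet}}$. For this, I would invoke the defining identity $\VaultPoly=\VaultSecret+\CharPoly_{\VeatureSet}$ together with the factorization $\CharPoly_{\VeatureSet}=\CharPoly_{\VeatureSet\cap\WeatureSet}\cdot\CharPoly_{\VeatureSet\setminus\WeatureSet}$, which holds because $\VeatureSet$ is the disjoint union of $\VeatureSet\cap\WeatureSet$ and $\VeatureSet\setminus\WeatureSet$. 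Thus $\CharPoly_{\VeatureSet\setminus\WeatureSet}$ divides $\VaultPoly-\VaultSecret$, which is exactly the desired congruence.

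I do not expect a genuine obstacle here: the only fine point is that the corollary asserts a congruence and therefore needs no hypothesis relating $\deg(\VaultSecret)$ to $\deg(\WErrorPoly_{j_0})$. (In the application to Step~\ref{partial_rem}, one would additionally observe that, since $\deg(\VaultSecret)<\SecretSize$, the remainder of $\VaultPoly$ on division by $\WErrorPoly_{j_0}$ has degree $<\SecretSize$ --- it equals $\VaultSecret$ when $\deg(\WErrorPoly_{j_0})\ge\SecretSize$, and a proper remainder of $\VaultSecret$, of even smaller degree, otherwise.)
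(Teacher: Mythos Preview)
Your proposal is correct and follows exactly the same route as the paper: invoke Lemma~\ref{lemma:main} to see that $\WErrorPoly_{j_0}$ is a nonzero scalar multiple of $\CharPoly_{\VeatureSet\setminus\WeatureSet}$, hence divides $\CharPoly_{\VeatureSet}=\VaultPoly-\VaultSecret$. The paper's own proof is simply the one-line version of what you wrote out in detail.
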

\BEGINPROOF
By Lemma \ref{lemma:main} we see that $\WErrorPoly_{j_0}$ divides $\CharPoly_{\VeatureSet}$ and thus the corollary follows.
\ENDPROOF
In particular, assuming $\NumOverlap\geq(\VeatureSize+\SecretSize)/2$, Algorithm \ref{Algo:partial_recovery} passes the test made in Step \ref{partial_rem} and together with Lemma \ref{lemma:main} the desired Statement a) of Theorem \ref{thm:main} follows.

\subsubsection*{Proof of b)}
To prove Statement b), note that the polynomials $\VErrorPoly_{j}$ and $\WErrorPoly_{j}$ in Algorithm \ref{Algo:partial_recovery} must be co-prime; this is a necessary property of the coefficients in the extended Euclidean algorithm (\eg{} see the proof of Lemma 3.8 in \cite{bib:vzGathen2003}). Consider the following.
\begin{lemma}\label{lemma:b_implies_a}
Assume $\SecretSize\leq\WeatureSize\leq\VeatureSize$ and let $\VaultPoly,\WaultPoly\in\FiniteField[\SYM]$ be of degree $\VeatureSize$ and $\WeatureSize$, respectively. Suppose that there exist co-prime polynomials $\VErrorPoly^*,\WErrorPoly^*\in\FiniteField[\SYM]$ with $\WErrorPoly^*\neq 0$ such that $\RemPoly=\VErrorPoly^*\cdot\VaultPoly+\WErrorPoly^*\cdot\WaultPoly$ is non-zero with degree smaller than $\deg(\WErrorPoly^*)+\SecretSize$; furthermore, assume that $\hat{\VaultSecret}=\VaultPoly\rem\WErrorPoly^*$ is of degree smaller than $\SecretSize$.\footnote{By $\VaultPoly\rem\WErrorPoly^*$ we denote the remainder of $\VaultPoly$ divided by $\WErrorPoly^*$.}

Then there exist $\hat{\WaultSecret},\CharPoly\in\FiniteField[\SYM]$ and a non-zero $\EEAConstant\in\FiniteField$ such that:
\begin{enumerate}
\item $\deg(\hat{\WaultSecret})<\SecretSize$;
\item $\VaultPoly=\hat{\VaultSecret}+\CharPoly\cdot\hat{\WErrorPoly}$ and $\WaultPoly=\hat{\WaultSecret}+\CharPoly\cdot\hat{\VErrorPoly}$ with $\hat{\VErrorPoly}=\EEAConstant\cdot\VErrorPoly^*$ and $\hat{\WErrorPoly}=-\EEAConstant\cdot\WErrorPoly^*$;
\item $\deg(\CharPoly)=\VeatureSize-\deg(\WErrorPoly^*)$.
\end{enumerate}

\end{lemma}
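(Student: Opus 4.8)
The plan is to read off the witnesses $\CharPoly$, $\hat{\WaultSecret}$, $\EEAConstant$ directly from a single polynomial division together with the defining relation of $\RemPoly$, and then verify the three listed properties. First I would let $\CharPoly'$ be the quotient of $\VaultPoly$ on division by $\WErrorPoly^*$; since by hypothesis $\hat{\VaultSecret}$ is the corresponding remainder, this yields the exact identity $\VaultPoly=\hat{\VaultSecret}+\CharPoly'\cdot\WErrorPoly^*$. As $\deg(\hat{\VaultSecret})<\SecretSize\le\WeatureSize\le\VeatureSize=\deg(\VaultPoly)$, the leading coefficient of $\VaultPoly$ is not affected, so $\deg(\CharPoly'\WErrorPoly^*)=\VeatureSize$ and hence $\deg(\CharPoly')=\VeatureSize-\deg(\WErrorPoly^*)$. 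This already gives property~3, up to the scalar by which $\CharPoly$ will differ from $\CharPoly'$.

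Next I would substitute $\VaultPoly=\hat{\VaultSecret}+\CharPoly'\WErrorPoly^*$ into $\RemPoly=\VErrorPoly^*\cdot\VaultPoly+\WErrorPoly^*\cdot\WaultPoly$, obtaining $\RemPoly=\VErrorPoly^*\hat{\VaultSecret}+\WErrorPoly^*\cdot(\CharPoly'\VErrorPoly^*+\WaultPoly)$, that is, $\WErrorPoly^*\cdot(\CharPoly'\VErrorPoly^*+\WaultPoly)=\RemPoly-\VErrorPoly^*\hat{\VaultSecret}$. I would then define $\hat{\WaultSecret}:=\CharPoly'\VErrorPoly^*+\WaultPoly$, so that $\WaultPoly=\hat{\WaultSecret}-\CharPoly'\VErrorPoly^*$ and $\WErrorPoly^*\hat{\WaultSecret}=\RemPoly-\VErrorPoly^*\hat{\VaultSecret}$. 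Finally, fixing any nonzero $\EEAConstant\in\FiniteField$ and setting $\hat{\VErrorPoly}:=\EEAConstant\VErrorPoly^*$, $\hat{\WErrorPoly}:=-\EEAConstant\WErrorPoly^*$, $\CharPoly:=-\EEAConstant^{-1}\CharPoly'$, one computes $\CharPoly\cdot\hat{\WErrorPoly}=\CharPoly'\WErrorPoly^*$ and $\CharPoly\cdot\hat{\VErrorPoly}=-\CharPoly'\VErrorPoly^*$, which turns the two identities above into $\VaultPoly=\hat{\VaultSecret}+\CharPoly\cdot\hat{\WErrorPoly}$ and $\WaultPoly=\hat{\WaultSecret}+\CharPoly\cdot\hat{\VErrorPoly}$ (property~2), with $\deg(\CharPoly)=\deg(\CharPoly')=\VeatureSize-\deg(\WErrorPoly^*)$ (property~3).

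The one substantive point, and the step I expect to be the main obstacle, is property~1, namely $\deg(\hat{\WaultSecret})<\SecretSize$. From $\WErrorPoly^*\hat{\WaultSecret}=\RemPoly-\VErrorPoly^*\hat{\VaultSecret}$ it suffices to bound both summands on the right by $\deg(\WErrorPoly^*)+\SecretSize$; for $\RemPoly$ this is exactly the hypothesis, and for $\VErrorPoly^*\hat{\VaultSecret}$, since $\deg(\hat{\VaultSecret})\le\SecretSize-1$, it reduces to the claim $\deg(\VErrorPoly^*)\le\deg(\WErrorPoly^*)$. I would prove this claim by contradiction: if $\deg(\VErrorPoly^*)>\deg(\WErrorPoly^*)$, then using $\VeatureSize\ge\WeatureSize$ we get $\deg(\VErrorPoly^*\VaultPoly)=\deg(\VErrorPoly^*)+\VeatureSize>\deg(\WErrorPoly^*)+\WeatureSize=\deg(\WErrorPoly^*\WaultPoly)$, so in $\RemPoly=\VErrorPoly^*\VaultPoly+\WErrorPoly^*\WaultPoly$ the leading term of $\VErrorPoly^*\VaultPoly$ survives, forcing $\deg(\RemPoly)=\deg(\VErrorPoly^*)+\VeatureSize\ge\deg(\WErrorPoly^*)+1+\WeatureSize>\deg(\WErrorPoly^*)+\SecretSize$ (using $\WeatureSize\ge\SecretSize$), contradicting the degree hypothesis on $\RemPoly$. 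The same hypothesis also forces $\VErrorPoly^*\ne 0$ (otherwise $\RemPoly=\WErrorPoly^*\WaultPoly$ would have degree $\ge\deg(\WErrorPoly^*)+\WeatureSize\ge\deg(\WErrorPoly^*)+\SecretSize$), which secures $\hat{\VErrorPoly}\ne 0$. Once $\deg(\VErrorPoly^*)\le\deg(\WErrorPoly^*)$ is available, $\deg(\VErrorPoly^*\hat{\VaultSecret})\le\deg(\WErrorPoly^*)+\SecretSize-1<\deg(\WErrorPoly^*)+\SecretSize$, hence $\deg(\WErrorPoly^*\hat{\WaultSecret})<\deg(\WErrorPoly^*)+\SecretSize$, and therefore $\deg(\hat{\WaultSecret})<\SecretSize$ (the case $\hat{\WaultSecret}=0$ being immediate). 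I note that the co-primality of $\VErrorPoly^*$ and $\WErrorPoly^*$ is not used in this argument; it is recorded in the hypotheses only because that is the property actually supplied by the extended Euclidean algorithm when the lemma is invoked.
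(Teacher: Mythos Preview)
Your argument is correct and follows the same overall route as the paper: define $\hat{\WaultSecret}$ so that $\VErrorPoly^*\hat{\VaultSecret}+\WErrorPoly^*\hat{\WaultSecret}=\RemPoly$, use $\deg(\VErrorPoly^*)\le\deg(\WErrorPoly^*)$ (derived from the degree bound on $\RemPoly$ together with $\SecretSize\le\WeatureSize\le\VeatureSize$) to get $\deg(\hat{\WaultSecret})<\SecretSize$, and read off the decompositions of $\VaultPoly$ and $\WaultPoly$. Your $\hat{\WaultSecret}=\CharPoly'\VErrorPoly^*+\WaultPoly$ is exactly the paper's $\hat{\WaultSecret}=\VErrorPoly^*(\VaultPoly-\hat{\VaultSecret})/\WErrorPoly^*+\WaultPoly$, since $(\VaultPoly-\hat{\VaultSecret})/\WErrorPoly^*$ is precisely your quotient $\CharPoly'$.

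The one genuine difference is how $\CharPoly$ is produced. The paper sets $\VaultSpurChar=\VaultPoly-\hat{\VaultSecret}$, $\WaultSpurChar=\WaultPoly-\hat{\WaultSecret}$, takes $\CharPoly=\gcd(\VaultSpurChar,\WaultSpurChar)$, and then must argue that $\VaultSpurChar/\CharPoly$ and $\WaultSpurChar/\CharPoly$ are scalar multiples of $\WErrorPoly^*$ and $\VErrorPoly^*$; this last step is where the coprimality of $\VErrorPoly^*$ and $\WErrorPoly^*$ is actually invoked. You bypass this by taking $\CharPoly$ to be (a scalar multiple of) the quotient $\CharPoly'=(\VaultPoly-\hat{\VaultSecret})/\WErrorPoly^*$ directly and verifying the two decompositions by substitution. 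That is cleaner, and your remark that coprimality is then unused is correct: the lemma in fact holds without that hypothesis, and the paper's gcd detour is the only place it enters. Your side observation that $\VErrorPoly^*\neq 0$ is also correct, though the statement itself does not require it.
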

\begin{proof}
From $\RemPoly=\VErrorPoly^*\cdot\VaultPoly+\WErrorPoly^*\cdot\WaultPoly$ and $\deg(\RemPoly)<\deg(\WErrorPoly^*)+\SecretSize\leq \deg(\WErrorPoly^*)+\WeatureSize$, we can conclude $\deg(\VErrorPoly^*)+\VeatureSize=\deg(\WErrorPoly^*)+\WeatureSize$ and, in particular, $\deg(\VErrorPoly^*)\leq\deg(\WErrorPoly^*)$.

Set $\hat{\WaultSecret}=\VErrorPoly^*\cdot(\VaultPoly-\hat{\VaultSecret})/\WErrorPoly^*+\WaultPoly$. Since $\WErrorPoly^*$ divides $\VaultPoly-\VaultSecret$, $\hat{\WaultSecret}$ is a (non-fractional) polynomial in $\FiniteField[\SYM]$. It follows that
\begin{equation}\label{lem4_eq_PfQg}
\VErrorPoly^*\cdot\hat{\VaultSecret}+\WErrorPoly^*\cdot\hat{\WaultSecret}=\VErrorPoly^*\cdot\VaultPoly+\WErrorPoly^*\cdot\WaultPoly=\RemPoly.
\end{equation}
Since, by assumption, $\deg(\RemPoly)<\deg(\WErrorPoly^*)+\SecretSize$, $\deg(\hat{\VaultSecret})<k$, and $\deg(\VErrorPoly^*)\leq\deg(\WErrorPoly^*)$, we may conclude from (\ref{lem4_eq_PfQg}) that $\deg(\hat{\WaultSecret})<\SecretSize$. 

Set $\VaultSpurChar=\VaultPoly-\hat{\VaultSecret}$ and $\WaultSpurChar=\WaultPoly-\hat{\WaultSecret}$. Then
\begin{equation*}
\VErrorPoly^*\cdot\VaultSpurChar+\WErrorPoly^*\cdot\WaultSpurChar=0.
\end{equation*}
Furthermore, since $\deg(\hat{\VaultSecret}),\deg(\hat{\WaultSecret})<\SecretSize\leq\WeatureSize\leq\VeatureSize$, $\VaultSpurChar$ and $\WaultSpurChar$ must be of degree $\VeatureSize$ and $\WeatureSize$, respectively.

Write $\CharPoly=\gcd(\VaultSpurChar,\WaultSpurChar)$, $\hat{\WErrorPoly}=\VaultSpurChar/\CharPoly$, and $\hat{\VErrorPoly}=\WaultSpurChar/\CharPoly$.  Note that
\begin{equation*}
0=\VErrorPoly^*\cdot\VaultSpurChar+\WErrorPoly^*\cdot\WaultSpurChar=\CharPoly\cdot\underbrace{(\VErrorPoly^*\cdot\hat{\WErrorPoly}+\WErrorPoly^*\cdot\hat{\VErrorPoly})}_{=0}
\end{equation*}
Since, by construction, $\hat{\VErrorPoly}$ and $\hat{\WErrorPoly}$ are co-prime, we may conclude that there exists a non-zero 
$\EEAInverse\in\FiniteField[\SYM]$ with $\VErrorPoly^*=\EEAInverse\cdot\hat{\VErrorPoly}$ and $\WErrorPoly^*=-\EEAInverse\cdot\hat{\WErrorPoly}$. Since $\VErrorPoly^*$ and $\WErrorPoly^*$ are co-prime, $\EEAInverse \in \FiniteField$ and, hence, setting $\EEAConstant=\EEAInverse^{-1}$ yields the statement of the lemma.
\end{proof} 

Now, Statement b) follows as a corollary. More specifically, if Algorithm \ref{Algo:partial_recovery} outputs $(\NumOverlap^*,\VeatureSet_0,\WeatureSet_0)$, then there exist polynomials $\RemPoly,\VErrorPoly^*,\WErrorPoly^*$ such that the requirements in Lemma \ref{lemma:b_implies_a} are fulfilled, \ie{} the polynomials $\RemPoly_{j_0}$, $\VErrorPoly_{j_0}$ and $\WErrorPoly_{j_0}$ such that $\WErrorPoly^*$ passes the test made in Step \ref{partial_rem} of Algorithm \ref{Algo:partial_recovery}. It follows with the lemma that there exists a polynomial $\CharPoly\in\FiniteField[\SYM]$ of degree $\NumOverlap^*=\VeatureSize-\deg(\WErrorPoly^*)$ and polynomials $\hat{\VaultSecret},\hat{\WaultSecret}\in\FiniteField[\SYM]$ of degree smaller than $\SecretSize$ such that $\VaultPoly=\hat{\VaultSecret}+\CharPoly\CharPoly_{\VeatureSet_0}$ and $\WaultPoly=\hat{\WaultSecret}+\CharPoly\CharPoly_{\WeatureSet_0}$.

\subsubsection*{Proof of c)}
In this section, we estimate the running time of the partial recovery attack. For the running times of the used sub-algorithms, we refer to \cite{bib:vzGathen2003}.

\begin{lemma}\label{lemma:RunningTime}
Algorithm \ref{Algo:partial_recovery} can be implemented using an expected number of 
\begin{equation*}
\Oh(\VeatureSize^2)+\SoftOh(\VeatureSize\cdot\log\FieldSize)
\end{equation*}
operations in $\FiniteField$.
\end{lemma}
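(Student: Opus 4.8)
The plan is to bound the cost of each of the four steps of Algorithm~\ref{Algo:partial_recovery} separately, charging the classical polynomial arithmetic of Steps~\ref{partial_eea}--\ref{partial_rem} to the $\Oh(\VeatureSize^2)$ term and the root-finding of Step~\ref{partial_roots} to the $\SoftOh(\VeatureSize\cdot\log\FieldSize)$ term; for the running times of the individual subroutines we rely on the estimates collected in \cite{bib:vzGathen2003}. In Step~\ref{partial_eea} the inputs $\VaultPoly,\WaultPoly$ have degree at most $\VeatureSize$, and the classical extended Euclidean algorithm applied to two polynomials of degree at most $\VeatureSize$ produces the entire sequence of rows $(\RemPoly_j,\VErrorPoly_j,\WErrorPoly_j)$ together with the list of their degrees using $\Oh(\VeatureSize^2)$ operations in $\FiniteField$. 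In Step~\ref{partial_index} one uses the standard facts that along the Euclidean sequence the degrees $\deg(\RemPoly_j)$ are strictly decreasing while the degrees $\deg(\WErrorPoly_j)$ are non-decreasing; hence the condition $\deg(\WErrorPoly_j)+\SecretSize>\deg(\RemPoly_j)$, once satisfied, remains satisfied, and $j_0$ is simply the first index (among the at most $\VeatureSize+1$ rows) at which it holds together with $\RemPoly_j\neq 0$. Determining $j_0$ is therefore a single linear scan over the already-computed degrees, costing $\Oh(\VeatureSize)$ comparisons and no field arithmetic.

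For Step~\ref{partial_rem} we perform one polynomial division with remainder of $\VaultPoly$ (of degree at most $\VeatureSize$) by $\WErrorPoly_{j_0}$ (of degree at most $\VeatureSize$); classically this costs $\Oh(\VeatureSize^2)$ operations in $\FiniteField$, after which comparing the degree of the remainder with $\SecretSize$ is free. Thus Steps~\ref{partial_eea}--\ref{partial_rem} together contribute $\Oh(\VeatureSize^2)$ operations in $\FiniteField$, and in particular no factor $\log\FieldSize$ enters here.

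The remaining Step~\ref{partial_roots} is where a factor $\log\FieldSize$ appears, and the point is to arrange that it multiplies only $\SoftOh(\VeatureSize)$ rather than $\Oh(\VeatureSize^2)$, which is achieved by using \emph{fast} polynomial multiplication, division and gcd. For each of the two polynomials $f\in\{\WErrorPoly_{j_0},\VErrorPoly_{j_0}\}$, of degree at most $\VeatureSize$, we test whether $f$ splits into distinct linear factors over $\FiniteField$ and, if so, list its roots, as follows. First compute $\SYM^{\FieldSize}\bmod f$ by repeated squaring, i.e.\ $\Oh(\log\FieldSize)$ multiplications modulo $f$, each costing $\SoftOh(\VeatureSize)$ with fast arithmetic, for a total of $\SoftOh(\VeatureSize\cdot\log\FieldSize)$. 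Then $g=\gcd(\SYM^{\FieldSize}-\SYM,\,f)$, computable in $\SoftOh(\VeatureSize)$ operations, is exactly the product of the distinct linear factors of $f$; since $\SYM^{\FieldSize}-\SYM$ is squarefree, $f$ splits into distinct linear factors over $\FiniteField$ if and only if $\deg(g)=\deg(f)$, so the splitting test is a single degree comparison and no separate squarefreeness test is needed. If the test succeeds, the roots of $f$ (hence the sets $\VeatureSet_0$, $\WeatureSet_0$ and the value $\NumOverlap=\VeatureSize-\deg(\WErrorPoly_{j_0})$) are obtained by equal-degree (Cantor--Zassenhaus) splitting of $g$ into linear factors in an expected $\SoftOh(\VeatureSize\cdot\log\FieldSize)$ operations in $\FiniteField$. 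Carrying this out for the two polynomials $\WErrorPoly_{j_0}$ and $\VErrorPoly_{j_0}$ leaves the bound $\SoftOh(\VeatureSize\cdot\log\FieldSize)$ unchanged. Summing the contributions of Steps~\ref{partial_eea}--\ref{partial_roots} gives the claimed expected $\Oh(\VeatureSize^2)+\SoftOh(\VeatureSize\cdot\log\FieldSize)$ operations in $\FiniteField$.

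The main obstacle — really the only nonroutine point — is the bookkeeping in Step~\ref{partial_roots}: one must invoke fast polynomial arithmetic so that the modular exponentiation $\SYM^{\FieldSize}\bmod f$ and the equal-degree factorization cost $\SoftOh(\VeatureSize\cdot\log\FieldSize)$ rather than $\Oh(\VeatureSize^2\cdot\log\FieldSize)$, and one must observe that the predicate ``$f$ has distinct linear factors over $\FiniteField$'' collapses to the single comparison $\deg\gcd(\SYM^{\FieldSize}-\SYM,\,f)=\deg f$ because $\SYM^{\FieldSize}-\SYM$ is squarefree. Everything else reduces to classical polynomial arithmetic on objects of degree $\Oh(\VeatureSize)$, whose $\Oh(\VeatureSize^2)$ cost is standard.
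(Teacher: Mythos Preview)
Your argument is correct and follows essentially the same decomposition as the paper's proof: charge the extended Euclidean algorithm and the division in Step~\ref{partial_rem} to the $\Oh(\VeatureSize^2)$ term, and the root-finding in Step~\ref{partial_roots} to the $\SoftOh(\VeatureSize\cdot\log\FieldSize)$ term. The only difference is that the paper simply cites Corollary~14.16 of \cite{bib:vzGathen2003} for the root-finding cost, whereas you spell out the underlying algorithm (repeated squaring to compute $\SYM^{\FieldSize}\bmod f$, the gcd test for splitting into distinct linear factors, and Cantor--Zassenhaus equal-degree factorization), which is a welcome elaboration but not a different approach.
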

\begin{proof}
In Step \ref{partial_eea} of the algorithm the sequence in the extended Euclidean algorithm is computed which consumes $\Oh(\VeatureSize^2)$ finite field operations and which dominates the time needed to compute the remainder in Step \ref{partial_rem}. If an index $j_0$ is found in Step \ref{partial_index}, the polynomials' root have to be computed in Step \ref{partial_roots}. This can be performed using an expected number of $\SoftOh(\VeatureSize\cdot\log|\FiniteField|)$ operations in $\FiniteField$ (Corollary 14.16 in \cite{bib:vzGathen2003}). Thus, Algorithm \ref{Algo:partial_recovery} can be implemented using the claimed number of operations in $\FiniteField$.
\end{proof}

The above lemma states that the partial recovery attack can be implemented to run in non-deterministic polynomial time. It is in fact not known whether the algorithm can be implemented to run in deterministic polynomial time. The question whether there exists a deterministic polynomial-time algorithm depends on whether there exists a deterministic polynomial-time algorithm for finding the roots of a polynomial in a finite field. For a variety of special cases, assuming the \emph{extended Riemann hypothesis}, it is known that deterministic algorithms factoring polynomials whose running times are bounded by a polynomial exists. On the other hand, factoring polynomials can be performed very fast using very efficient randomized algorithms. For a survey as well as further references we refer to \cite{bib:vzGathenPanario2001}.

\iffull
\subsection{Proof of Theorem \ref{EntropyLoss}}
In order to prove the theorem, we need an estimation for the number of values that a pair $(\VaultPoly, \WaultPoly)$ of vault records can take, so that $\VaultPoly$ and $\WaultPoly$ correspond to (\ie{} can be constructed from) feature sets $\VeatureSet$ and $\WeatureSet$, respectively, with given set distance $\SetDiff \left(\VeatureSet,\WeatureSet\right)=\diffVW$. We do this by means of the following lemma which limits the number of values for the upper $\GenuineSize-\SecretSize$ coefficients of the vault records, \ie{} the number of vault records in the deterministic version of the improved fuzzy vault. 

\begin{lemma}\label{Lemma_Loss}\label{lem_Entropy}
For a given deterministic vault record $\VaultPoly=(\VaultCoeff_\SecretSize,\ldots,\VaultCoeff_{\VeatureSize})$ let $\VaultSurr^{\WeatureSize}_{\diffVW}(\VaultPoly)$ denote the set of deterministic vault records $\WaultPoly=(\WaultCoeff_\SecretSize,\ldots, \WaultCoeff_{\WeatureSize})$ so that $\VaultPoly$ and $\WaultPoly$ can be constructed from feature sets $\VeatureSet$ and $\WeatureSet$, respectively, with $\SetDiff \left(\VeatureSet,\WeatureSet\right)=\diffVW$. 
Then for any vault record $\VaultPoly=(\VaultCoeff_\SecretSize,\ldots,\VaultCoeff_{\VeatureSize})$, $\WeatureSize \leq \VeatureSize$ and $0\leq \diffVW \leq \WeatureSize + \VeatureSize$, the cardinality of $\VaultSurr^{\WeatureSize}_{\diffVW }(\VaultPoly)$ is limited by $\FieldSize^{\diffVW}$.\end{lemma}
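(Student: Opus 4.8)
The plan is to show that, once the record $V=(v_{k},\dots,v_{t})$ is fixed, any companion record $W$ at set distance $d$ is \emph{completely determined} by the two ``edge'' polynomials $\chi_{\mathbf{A}\setminus\mathbf{B}}$ and $\chi_{\mathbf{B}\setminus\mathbf{A}}$, whose degrees sum to exactly $d$. Counting the choices for this pair of polynomials then gives at most $n^{d}$ possibilities for $W$.

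First I would fix $W\in\mathcal{U}^{s}_{d}(V)$ together with a witness: feature sets $\mathbf{A}$ of size $t$ having record $V$ and $\mathbf{B}$ of size $s$ having record $W$, with $\mathrm{dist}(\mathbf{A},\mathbf{B})=d$. Set $\omega=|\mathbf{A}\cap\mathbf{B}|$, $\delta=|\mathbf{A}\setminus\mathbf{B}|=t-\omega$ and $\epsilon=|\mathbf{B}\setminus\mathbf{A}|=s-\omega$; these are non-negative, satisfy $\delta+\epsilon=d$ and $\delta-\epsilon=t-s\ge 0$, and are in fact forced by $t,s,d$ (if $t+s-d$ is odd or $d<t-s$ the set $\mathcal{U}^{s}_{d}(V)$ is empty and the bound is trivial). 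Put $P=\chi_{\mathbf{A}\setminus\mathbf{B}}$ and $Q=\chi_{\mathbf{B}\setminus\mathbf{A}}$, monic of degrees $\delta$ and $\epsilon$. Partitioning $\mathbf{A}\cup\mathbf{B}$ in two ways, $\mathbf{A}\sqcup(\mathbf{B}\setminus\mathbf{A})=\mathbf{A}\cup\mathbf{B}=\mathbf{B}\sqcup(\mathbf{A}\setminus\mathbf{B})$, yields the key identity $\chi_{\mathbf{A}}\cdot Q=\chi_{\mathbf{A}\cup\mathbf{B}}=\chi_{\mathbf{B}}\cdot P$, a polynomial of degree $u:=t+\epsilon=s+\delta$.

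Next I would use that the record $V$ carries precisely the top coefficients of $\chi_{\mathbf{A}}$: writing $\chi_{\mathbf{A}}=V'+R$ with $V'=X^{t}+v_{t-1}X^{t-1}+\cdots+v_{k}X^{k}$ determined by $V$ and $\deg R<k$, we have $\deg(RQ)<k+\epsilon$, so the coefficients of $\chi_{\mathbf{A}}Q$ in degrees $k+\epsilon,\dots,u$ equal those of $V'Q$ and hence depend only on $V$ and $Q$. By the identity these are the top $t-k+1$ coefficients of $\chi_{\mathbf{B}}P$. Then I would recover the coefficients $w_{s},w_{s-1},\dots$ of $\chi_{\mathbf{B}}$ top-down: the degree-$(u-\ell)$ coefficient of $\chi_{\mathbf{B}}P$ equals $w_{s-\ell}+\sum_{j=0}^{\ell-1}w_{s-j}\,p_{\delta-\ell+j}$ (where $p_{i}$ is the coefficient of $X^{i}$ in $P$, with $p_{i}=0$ for $i<0$ and $p_{\delta}=1$), so knowing $P$, that coefficient, and $w_{s},\dots,w_{s-\ell+1}$ determines $w_{s-\ell}$. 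This recursion runs for $\ell=0,1,\dots,t-k$ — exactly the range for which the product coefficient is known from $V$ and $Q$ — and since $s\le t$ it recovers all of $w_{k},\dots,w_{s}$, i.e.\ $W$ itself, as an explicit function of the triple $(V,P,Q)$.

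It then follows that, for fixed $V$, $W$ lies in the image of a map defined on pairs $(P,Q)$ of monic polynomials of degrees $\delta$ and $\epsilon$; since there are $n^{\delta}$ monic polynomials of degree $\delta$ and $n^{\epsilon}$ of degree $\epsilon$, we obtain $|\mathcal{U}^{s}_{d}(V)|\le n^{\delta+\epsilon}=n^{d}$. The main obstacle I expect is not conceptual but the bookkeeping in the top-down recursion — in particular verifying that the window of product coefficients \emph{forced} by $(V,Q)$, namely degrees $k+\epsilon$ through $t+\epsilon$ (width $t-k+1$), is wide enough to pin down all of $w_{k},\dots,w_{s}$ (width $s-k+1$); this is exactly where the hypothesis $s\le t$ enters, and the remaining degree and index checks are routine.
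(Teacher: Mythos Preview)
Your argument is correct and, in fact, more direct than the paper's. The paper proves the bound by induction on $d$: it first handles the equal-size case $t=s$, establishing the base case $d=2$ via explicit elementary-symmetric-polynomial identities (showing $W$ is determined by $V$ and the two differing elements $a,b$), and then reaches general even $d$ by chaining through intermediate sets at distance $2$; the unequal-size case is treated similarly, with base case $d=1$ (using Lemma~\ref{lem:coeff_char_poly}) and an inductive step that peels off one element at a time. Your approach replaces this chain of one- or two-element reductions by the single identity $\chi_{\mathbf A}\,Q=\chi_{\mathbf B}\,P$ and the observation that, since $P$ is monic, the map from the top coefficients of $\chi_{\mathbf B}$ to the top coefficients of $\chi_{\mathbf B}P$ is unit upper-triangular and hence invertible on the window you need. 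This is cleaner, avoids the equal/unequal-size case split, and yields the bound in one stroke; the paper's proof, on the other hand, makes the connection to Lemma~\ref{lem:coeff_char_poly} and to the Blanton--Aliasgari equations more explicit. Your check that the window of known product coefficients (degrees $k+\epsilon$ through $t+\epsilon$) suffices to recover $w_k,\dots,w_s$ precisely because $s\le t$ is exactly the right observation, and the remaining index bookkeeping is sound.
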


\BEGINPROOF We first show the result for the case $\VeatureSize=\WeatureSize$ and afterwards generalize it to arbitrary feature set sizes. Both cases, \ie{} $\VeatureSize=\WeatureSize$ and the general case, are proven by induction. 

\paragraph{The case of equal feature set sizes ($\mathbf{\VeatureSize=\WeatureSize}$).} Note that the set difference between sets of equal size is always even. For $\diffVW = 0$, the claim is trivial. Let $\VeatureSet,\WeatureSet$ be two feature sets of size $\VeatureSize$ with  $\diffVW = \SetDiff \left(\VeatureSet,\WeatureSet\right)=2$, \ie{} $\VeatureSet=(\VeatureSet \cap \WeatureSet) \cup \{\VeatureElement\}$ and $\WeatureSet=(\VeatureSet \cap \WeatureSet) \cup \{\WeatureElement\}$. 

Since the vault records $\VaultPoly$ and $\WaultPoly$ are the coefficients of the characteristic polynomial of $\VeatureSet$ and $\WeatureSet$, respectively, they can be written as $\VaultCoeff_m=\sigma_{\VeatureSize - m}(\VeatureSet)$ and $\WaultCoeff_m=\sigma_{\VeatureSize - m}(\WeatureSet)$ for $m=\SecretSize,\ldots,\VeatureSize$, 
where $\sigma_{m}$, as defined in (\ref{eq:elsympol}), denotes the $m$-th elementary symmetric polynomial \cite{bib:LidlNiederreiter}. 

From the general equation 
\ifdouble
\begin{align}
\begin{split}
\sigma_{m}(\SYM_1,\ldots,\SYM_n)=&\SYM_1 \cdot \sigma_{m-1}(\SYM_2,\ldots,\SYM_n) \nonumber \\
&\; + \sigma_{m}(\SYM_2, \ldots, \SYM_n)\nonumber 
\end{split}
\end{align}

\else
\begin{equation*}
\sigma_{m}(\SYM_1,\ldots,\SYM_n)=\SYM_1 \cdot \sigma_{m-1}(\SYM_2,\ldots,\SYM_n) + \sigma_{m}(\SYM_2, \ldots, \SYM_n)
\end{equation*}
\fi
follows that $\VaultCoeff_m - \WaultCoeff_m  = (\VeatureElement - \WeatureElement ) \cdot \sigma_{\VeatureSize - m - 1}(\VeatureSet \cap \WeatureSet)$ and by recursively using $\sigma_{r}(\VeatureSet \cap \WeatureSet) = \sigma_{r}(\VeatureSet) - \VeatureElement \cdot \sigma_{r-1}(\VeatureSet \cap \WeatureSet)$ for $r=\VeatureSize - m - 1, \ldots, 1$ as well as $\sigma_{0}(\VeatureSet \cap \WeatureSet) =  \sigma_{0}(\VeatureSet) = 1$, we obtain 
\begin{eqnarray*}
\VaultCoeff_m - \WaultCoeff_m & = & (\VeatureElement - \WeatureElement ) \cdot \sum_{i=1}^{\VeatureSize - m} (-\VeatureElement)^{i-1} \sigma_{\VeatureSize - m - i}(\VeatureSet)
 \\
 & = & (\VeatureElement - \WeatureElement) \cdot \sum_{i=1}^{\VeatureSize-m} (-\VeatureElement)^{i-1} \VaultCoeff_{m+i}
\end{eqnarray*}
for $m=\SecretSize,\ldots,\VeatureSize$. Consequently, the elements $\WaultCoeff_{\SecretSize+1},\ldots,\WaultCoeff_{\VeatureSize}$ are uniquely determined by $\VeatureElement$, $\WeatureElement$ and the vault record $\VaultPoly$. This proves the result for $\diffVW=2$ (in the case $\VeatureSize=\WeatureSize$). 

We now estimate $\left| \VaultSurr^{\VeatureSize}_{\diffVW}(\VaultPoly)\right |$ for $\diffVW > 2$ and assume that $\left| \VaultSurr^{\VeatureSize}_{i}(Z)\right | \leq q^{i}$ holds for any vault record $\ZaultPoly=(\ZaultCoeff_{\SecretSize}, \ldots,\ZaultCoeff_{\VeatureSize})$ and for all even $i < \diffVW$.  

For any pair of feature sets $\VeatureSet$ and $\WeatureSet$ of size $\VeatureSize$ having set difference $\SetDiff \left(\VeatureSet,\WeatureSet\right)=\diffVW$, there exists a set $\ZeatureSet$ of size $\VeatureSize$ with $\SetDiff \left(\WeatureSet,\ZeatureSet\right)=\diffVW-2$ and $\SetDiff \left(\VeatureSet,\ZeatureSet\right)=2$. Thus, 
\begin{equation*}
\VaultSurr^{\VeatureSize}_{\diffVW}(\VaultPoly) \subseteq \bigcup_{\ZeatureSet \in \, \VaultSurr^{\VeatureSize}_{2}(\VaultPoly)} \VaultSurr^{\VeatureSize}_{\diffVW-2}(\ZeatureSet).
\end{equation*}
Therefore, by applying the induction assumption for $i=2$ and $i=\diffVW-2$, we obtain
\begin{eqnarray*}
\left | \VaultSurr^{\VeatureSize}_{\diffVW}(\VaultPoly) \right| & \leq & \sum_{\ZeatureSet \in \, \VaultSurr^{\VeatureSize}_{2}(\VaultPoly)} \left | \VaultSurr^{\VeatureSize}_{\diffVW - 2}(\ZeatureSet)\right | \\[1ex]
& \leq & \sum_{\ZeatureSet \in \, \VaultSurr^{\VeatureSize}_{2}(\VaultPoly)} \FieldSize^{\diffVW - 2} \\[1ex]
&  \leq &  \FieldSize^{\diffVW}. 
\end{eqnarray*}

\paragraph{The case of unequal feature set sizes ($\mathbf{\VeatureSize>\WeatureSize}$).}
Let $\diffVW=1$, \ie{} let $\VeatureSet,\WeatureSet$ be two feature sets of size $\VeatureSize$ and $\VeatureSize-1$, respectively, with $\diffVW = \SetDiff \left(\VeatureSet,\WeatureSet\right)=1$, \ie{} $\VeatureSet=\WeatureSet \cup \{\VeatureElement\}$. From Lemma \ref{lem:coeff_char_poly}, we see that 
\begin{equation*}
\WaultCoeff_m = \sum_{i=m+1}^{\VeatureSize} \VeatureElement^{i-m-1} \VaultCoeff_{i}.
\end{equation*}  
for $m=\SecretSize,\ldots,\VeatureSize-1$. Thus, $\WaultPoly$ is uniquely determined by $\VaultPoly$ and $\VeatureElement$, which proves the result for $\diffVW=1$.

We now estimate $\left| \VaultSurr^{\WeatureSize}_{\diffVW}(\VaultPoly)\right |$ for $\diffVW > 1$ and $\WeatureSize < \VeatureSize$. By induction assumption, we can assume that $\left| \VaultSurr^{\WeatureSize}_{i}(\ZeatureSet)\right | \leq \FieldSize^{i}$ holds for any vault record $\ZeatureSet=(\ZeatureElement_{\SecretSize}, \ldots, \ZeatureElement_{n})$ with $\WeatureSize < n\leq \VeatureSize$ and all $i < \diffVW$. 

Analogously to the case for equal feature set sizes, we argue that for given feature set $\VeatureSet$ of size $\VeatureSize$, all feature sets $\WeatureSet$ of size $\WeatureSize\leq \VeatureSize$ having set difference $\SetDiff \left(\VeatureSet,\WeatureSet\right)=\diffVW$ can be reached by enumerating all sets $\WeatureSet$ of size $\WeatureSize$ with $\SetDiff \left(\WeatureSet,\ZeatureSet\right)=\diffVW-1$ for all sets $\ZeatureSet$ of size $\VeatureSize-1$ with $\SetDiff \left(\VeatureSet,\ZeatureSet\right)=1$. Thus, we get 
\begin{equation*}
\left |\VaultSurr^{\WeatureSize}_{\diffVW}(\VaultPoly) \right |= \sum_{Z \in \, \VaultSurr^{\VeatureSize-1}_{1}(\VaultPoly)} \left |\VaultSurr^{\WeatureSize}_{\diffVW - 1}(Z) \right |.
\end{equation*}
Since the vault records $Z$ in the sum above are of degree $\VeatureSize-1$, \ie{} are computed from feature sets of size $\VeatureSize-1$, we have to distinguish two cases. If $\WeatureSize<\VeatureSize-1$, we get $\left |\VaultSurr^{\WeatureSize}_{\diffVW - 1}(Z) \right | \leq q^{\diffVW -1}$ by induction assumption. On the other hand, if 
$\WeatureSize=\VeatureSize-1$, we get the same bound from the result for equal feature set sizes (see first part of the proof). Thus, in any case, we obtain
\begin{eqnarray*}
\left | \VaultSurr^{\WeatureSize}_{\diffVW}(\VaultPoly) \right| & \leq & 
\sum_{Z \in \, \VaultSurr^{\VeatureSize-1}_{1}(\VaultPoly)} \FieldSize^{\diffVW - 1}, \\[1ex]
 & \leq &  \FieldSize^{\diffVW}.
\end{eqnarray*}
\ENDPROOF

Using Lemma \ref{lem_Entropy}, can now prove Theorem \ref{EntropyLoss}. 

\BEGINPROOF[Proof of Theorem \ref{EntropyLoss}]
Considering the secure sketch (see \cite{bib:DodisEtAl2004} for a definition) that takes as input $(\VeatureSet,\WeatureSet)$ and outputs the corresponding vaults $\VaultPoly$ and $\WaultPoly$ as {\em secure sketch} (SS) we can apply Lemma 2.2 (b) in \cite{bib:DodisEtAl2004} to obtain
\begin{equation*}
\mathbf{\tilde{H}_{\infty}}(\VeatureSet,\WeatureSet | \VaultPoly,\WaultPoly) \geq \mathbf{H_{\infty}}(\VeatureSet,\WeatureSet,\VaultPoly,\WaultPoly) - \lambda, 
\end{equation*}
where $2^\lambda$ is the number of values the output $(\VaultPoly, \WaultPoly)$ can take, \ie{} the number of pairs of vaults that can be generated from feature sets $\VeatureSet$ and $\WeatureSet$ with set difference $\diffVW$. 

In the case of the deterministic version of the improved fuzzy vault, $\VaultPoly$ and $\WaultPoly$ contain $\VeatureSize - \SecretSize$ and $\WeatureSize - \SecretSize$ elements, respectively, so that $\lambda \leq (\VeatureSize + \WeatureSize - 2\SecretSize)\log \FieldSize$. On the other hand, $\WaultPoly \in \VaultSurr^{\WeatureSize}_{\diffVW}(\VaultPoly)$ and, hence, from Lemma \ref{lem_Entropy} we get $\lambda \leq (\VeatureSize - \SecretSize + \diffVW) \log \FieldSize$. Furthermore, since no randomness is added to the deterministic vaults, we have $\mathbf{H_{\infty}}(\VeatureSet,\WeatureSet,\VaultPoly,\WaultPoly)=\mathbf{H_{\infty}}(\VeatureSet,\WeatureSet)$. This gives the desired result. 

In the case of the probabilistic version of the fuzzy vault, $2 \SecretSize \log \FieldSize$ bits of entropy are added to $\mathbf{H_{\infty}}(\VeatureSet,\WeatureSet,\VaultPoly,\WaultPoly)$ by the random polynomials, but this additional term cancels out with the increase of $\lambda$ by $2 \SecretSize \log \FieldSize$ introduced by the $2 \SecretSize$ additional coefficients $\VaultCoeff_{0}, \ldots, \VaultCoeff_{\SecretSize-1}, \WaultCoeff_{0}, \ldots, \WaultCoeff_{\SecretSize-1}$ in the output $(\VaultPoly, \WaultPoly)$ of the secure sketch. 
\ENDPROOF

\subsection{Proof of Theorem \ref{Thm_partial_2}}
In our proof, we distinguish three cases: $(\VeatureSize+\SecretSize)/2 > \NumOverlap \geq \VeatureSize -\SecretSize$, $\NumOverlap = \SecretSize$, and $\NumOverlap < \SecretSize$.

Assume that $\AlgA$ is an algorithm that takes as input two vault records $\VaultPoly$ and $\WaultPoly$ of feature sets $\VeatureSet$ and $\WeatureSet$ of size $\VeatureSize$ and $\WeatureSize$, respectively, with $|\VeatureSet \cap \WeatureSet|=\NumOverlap$, and outputs an element $\VeatureElement$ from the set difference $(\VeatureSet \cup \WeatureSet) \setminus(\VeatureSet \cap \WeatureSet)$ with probability $\prob{\AlgA}$. We will use $\AlgA$ to construct a full recover attack $\AlgB$ on ($\VaultPoly, \WaultPoly)$ that outputs ($\VeatureSet, \WeatureSet)$, and using Theorem \ref{EntropyLoss}, we will derive a bound on the success probability $\prob{\AlgA}$ of $\AlgA$.

Since the feature sets $\VeatureSet$ and $\WeatureSet$ are chosen uniformly at random, we have $\MinEntropy(\VeatureSet,\WeatureSet)=\tbinom{\FieldSize}{\VeatureSize+\WeatureSize-\NumOverlap}$. 

\BEGINPROOF[Proof for the case $(\VeatureSize+\SecretSize)/2 > \NumOverlap \geq \VeatureSize -\SecretSize$] 
Algorithm $\AlgB$ works as follows. Note, that since $(\VeatureSize+\SecretSize)/2 > \NumOverlap$, we have $ \ThreshDist \geq 1$.
\begin{enumerate}

\item Invoke algorithm $\AlgA$ with input ($\VeatureSet, \WeatureSet)$ to obtain an output $\VeatureElement_0\in (\VeatureSet \cup \WeatureSet) \setminus(\VeatureSet \cap \WeatureSet)$. 

\item Randomly choose a bit $x\in\{0,1\}$. If $x=0$, assume $\VeatureElement_0 \in \VeatureSet\setminus\WeatureSet$ and proceed with the subsequent steps as described below. If $x=1$, assume $\VeatureElement_0\in \WeatureSet\setminus\VeatureSet$ and proceed with the subsequent steps with $\VeatureSet$ and $\WeatureSet$ exchanged (and $\VeatureSize$ and $\WeatureSize$ exchanged likewise). 

\item Guess $\ThreshDist-1$ elements $\VeatureElement_1,\ldots, \VeatureElement_{\ThreshDist-1}$ from $\VeatureSet\setminus\WeatureSet$ and use (\ref{eq:reduction_coeff}) to iteratively compute the coefficients $(\bar{\VaultCoeff}_{\SecretSize-\ThreshDist}, \ldots, \bar{\VaultCoeff}_{\VeatureSize-\ThreshDist})$ of the characteristic polynomial $\CharPoly_{\bar{\VeatureSet}}$ of $\bar{\VeatureSet} = \VeatureSet \setminus \{\VeatureElement_0,\ldots,\VeatureElement_{\ThreshDist-1}\}$ from the coefficients $\VaultCoeff_{\SecretSize},\ldots,\VaultCoeff_{\VeatureSize}$ of $\VaultPoly$.

\item Guess $\ThreshDist$ elements $\WeatureElement_0,\ldots, \WeatureElement_{\ThreshDist-1}$ from $\WeatureSet\setminus\VeatureSet$ and use (\ref{eq:reduction_coeff}) to iteratively compute the coefficients $(\bar{\WaultCoeff}_{\SecretSize-\ThreshDist}, \ldots, \bar{\WaultCoeff}_{\WeatureSize-\ThreshDist})$ of the characteristic polynomial $\CharPoly_{\bar{\WeatureSet}}$ of $\bar{\WeatureSet} = \WeatureSet \setminus \{\WeatureElement_0,\ldots,\WeatureElement_{\ThreshDist-1}\}$ from the coefficients $\WaultCoeff_{\SecretSize},\ldots,\WaultCoeff_{\WeatureSize}$ of $\WaultPoly$ .
 
\item Invoke the partial recovery attack (Algorithm \ref{Algo:partial_recovery}) with parameters $\bar{\VeatureSize}=\VeatureSize-\ThreshDist$, $\bar{\WeatureSize}=\WeatureSize-\ThreshDist$ and $\bar{\SecretSize}=\SecretSize-\ThreshDist$
on input $(\bar{\VaultCoeff}_{\bar{\SecretSize}}, \ldots, \bar{\VaultCoeff}_{\bar{\VeatureSize}})$ and  $(\bar{\WaultCoeff}_{\bar{\SecretSize}}, \ldots, \bar{\WaultCoeff}_{\bar{\WeatureSize}})$. If the partial recovery attack returns \AlgFailure, do so as well and stop; otherwise assume that the output $(\NumOverlap^*,\bar{\VeatureSet}_0,\bar{\WeatureSet}_0)$ satisfies $\bar{\VeatureSet}_0=\bar{\VeatureSet}\setminus \bar{\WeatureSet}$ and $\bar{\WeatureSet}_0=\bar{\WeatureSet}\setminus \bar{\VeatureSet}$ and continue. 

\item Guess $\LastGuess=\SecretSize-\VeatureSize+\NumOverlap$ many elements $\VeatureElement_{\ThreshDist},\ldots,\VeatureElement_{\ThreshDist+m-1}$ from $\VeatureSet \cap \WeatureSet$. 

\item Unlock $\VeatureSet$: compute the unique polynomial $\VaultSecret^*$ of degree smaller than $\SecretSize$ interpolating the pairs $(\VeatureElement,\VaultPoly(\VeatureElement))$ for all $\VeatureElement\in \{\VeatureElement_0,\ldots, \VeatureElement_{\ThreshDist+m-1}\}\cup \bar{\VeatureSet}_0$. If $\VaultPoly-\VaultSecret^*$ splits into distinct linear factors, set $\VeatureSet$ as the set of its roots; otherwise return \AlgFailure. 

\item Set $\WeatureSet=\WeatureSet_0 \cup (\VeatureSet\setminus\VeatureSet_0)$. Output $(\VeatureSet,\WeatureSet)$.
\end{enumerate}

Assume that  $\AlgA$ has been successful and that the guess of $\AlgB$ on whether $\VeatureElement_0\in \VeatureSet\setminus\WeatureSet$ or $\VeatureElement_0\in \WeatureSet\setminus\VeatureSet$ was correct; this occurs with probability $\prob{\AlgA}/2$. 

If $\VeatureElement\in \VeatureSet\setminus\WeatureSet$, the probability $\prob{Guess1}$ that $\VeatureElement_1,\ldots, \VeatureElement_{\ThreshDist-1}$ and $\WeatureElement_0,\ldots, \WeatureElement_{\ThreshDist-1}$ are indeed elements of $\VeatureSet\setminus\WeatureSet$ and $\WeatureSet\setminus\VeatureSet$, respectively, is lower bounded by 
\begin{equation*}\label{eq:p_guess}
\prob{Guess1} \geq \binom{\VeatureSize-\NumOverlap}{\ThreshDist-1} \binom{\WeatureSize-\NumOverlap}{\ThreshDist} \binom{\FieldSize}{2\ThreshDist-1}^{-1}.
\end{equation*}
In the alternative case $\VeatureElement\in \WeatureSet\setminus\VeatureSet$, $\AlgB$ guesses $\ThreshDist$ elements from $\VeatureSet\setminus\WeatureSet$ and $\ThreshDist-1$ elements from $\WeatureSet\setminus\VeatureSet$, but since $\VeatureSize \geq \WeatureSize$, the lower bound (\ref{eq:p_guess}) for $\prob{Guess1}$ still holds. 

Assuming that all  $\VeatureElement_i$ and $\WeatureElement_j$ are elements of $\VeatureSet\setminus\WeatureSet$ and $\WeatureSet\setminus\VeatureSet$, respectively, input to the partial recovery attack are two vault records of the feature sets 
$\bar{\VeatureSet}$ and $\bar{\WeatureSet}$ of size $\bar{\VeatureSize}$ and $\bar{\WeatureSize}$, respectively, with the secret polynomials $\bar{\VaultSecret}$ and $\bar{\WaultSecret}$ having degree $\bar{\SecretSize}=\SecretSize-\ThreshDist$. Furthermore, if $\VeatureElement\in \VeatureSet\setminus\WeatureSet$, we have $\bar{\VeatureSize}\geq \bar{\WeatureSize}$ and 
$|\bar{\VeatureSet}  \cap \bar{\WeatureSet}| = \NumOverlap = (\bar{\VeatureSize}+\bar{\SecretSize})/2$. In the other case, \ie{} if $\VeatureElement\in \WeatureSet\setminus\VeatureSet$, we have $\bar{\WeatureSize}\geq \bar{\VeatureSize}$ and 
$|\bar{\VeatureSet}  \cap \bar{\WeatureSet}| = \NumOverlap = (\bar{\WeatureSize}+\bar{\SecretSize})/2$. Thus, in both cases, the output of the partial recovery attack satisfies $\bar{\VeatureSet}_0=\bar{\VeatureSet}\setminus \bar{\WeatureSet}$ and $\bar{\WeatureSet}_0=\bar{\WeatureSet}\setminus \bar{\VeatureSet}$. 

At this point, if all steps have been successful, algorithm $\AlgB$ has learned all $\VeatureSize-\NumOverlap$ elements of $\VeatureSet\setminus\WeatureSet$. Therefore, after guessing $\LastGuess$ many elements from  $\VeatureSet \cap \WeatureSet$, the unlocking procedure uncovers $\VeatureSet$, which also reveals $\WeatureSet$. (Since $\NumOverlap \geq \VeatureSize -\SecretSize$, we have $\LastGuess \geq 0$.) The guessing succeeds with probability $\prob{Guess2} \geq \tbinom{\NumOverlap}{\SecretSize-\VeatureSize+\NumOverlap} \tbinom{\FieldSize}{\SecretSize-\VeatureSize+\NumOverlap}^{-1}$.

Overall, the success probability $\prob{\AlgB}$ of algorithm $\AlgB$ is at least 
\begin{eqnarray}
\prob{\AlgB} & \geq &  (\prob{\AlgA}/2) \cdot \prob{Guess1} \cdot \prob{Guess2} \nonumber \\[0.6 em]
 & > & (\prob{\AlgA}/2) \frac{\binom{\VeatureSize-\NumOverlap}{\ThreshDist-1} \binom{\WeatureSize-\NumOverlap}{\ThreshDist} \binom{\NumOverlap}{\SecretSize-\VeatureSize+\NumOverlap} }
{\binom{\FieldSize}{2\ThreshDist-1} \binom{\FieldSize}{\SecretSize-\VeatureSize+\NumOverlap}}.\label{eq:thm3_lower}
\end{eqnarray}

On the other hand, Theorem \ref{EntropyLoss} gives
\begin{equation}\label{eq:thm3_upper}
\prob{\AlgB}  \leq \FieldSize^{\VeatureSize+\WeatureSize-2\SecretSize} \binom{\FieldSize}{\VeatureSize+\WeatureSize-\NumOverlap}^{-1}.
\end{equation}

Combining the inequalities (\ref{eq:thm3_lower}) and (\ref{eq:thm3_upper}) yields 
\begin{equation*}
\prob{}  \leq  
\frac{2 \FieldSize^{\VeatureSize+\WeatureSize-2\SecretSize} \binom{\FieldSize}{2\ThreshDist-1} \binom{\FieldSize}{\SecretSize-\VeatureSize+\NumOverlap}}
{\binom{\VeatureSize-\NumOverlap}{\ThreshDist-1} \binom{\WeatureSize-\NumOverlap}{\ThreshDist} \binom{\NumOverlap}{\SecretSize-\WeatureSize+\NumOverlap}\binom{\FieldSize}{\VeatureSize+\WeatureSize-\NumOverlap} } 
\end{equation*}
where $\ThreshDist=(\VeatureSize+\SecretSize)/2-\NumOverlap$. This completes the proof for the case $(\VeatureSize+\SecretSize)/2 > \NumOverlap \geq \VeatureSize -\SecretSize$
\ENDPROOF

We now turn to the second case. 
\BEGINPROOF[Proof for the case $\NumOverlap=\SecretSize$]
Algorithm $\AlgB$ works as follows. 
\begin{enumerate}

\item Invoke algorithm $\AlgA$ with input ($\VeatureSet, \WeatureSet)$ to obtain an output $\VeatureElement_0\in (\VeatureSet \cup \WeatureSet) \setminus(\VeatureSet \cap \WeatureSet)$. 

\item Randomly choose a bit $x\in\{0,1\}$. If $x=0$, assume $\VeatureElement_0 \in \VeatureSet\setminus\WeatureSet$ and proceed with the subsequent steps as described below. If $x=1$, assume $\VeatureElement_0\in \WeatureSet\setminus\VeatureSet$ and proceed with the subsequent steps with $\VeatureSet$ and $\WeatureSet$ exchanged (and $\VeatureSize$ and $\WeatureSize$ exchanged likewise). 

\item Guess $\SecretSize-1$ many further elements $\{\VeatureElement_1,\ldots\VeatureElement_{\SecretSize-1}\}$ from $\VeatureSet$.

\item Unlock $\VeatureSet$: compute the unique polynomial $\VaultSecret^*$ of degree smaller than $\SecretSize$ interpolating the pairs $(\VeatureElement,\VaultPoly(\VeatureElement))$ for all $\VeatureElement\in \{\VeatureElement_0,\ldots\VeatureElement_{\SecretSize-1}\}$. If $\VaultPoly-\VaultSecret^*$ splits into distinct linear factors, set $\VeatureSet$ as the set of its roots; otherwise return \AlgFailure. 

\item Guess $\VeatureSet \cap \WeatureSet$ by randomly choosing a $\SecretSize$-element subset of $\VeatureSet\setminus \{\VeatureElement_0\}$. 

\item Unlock $\WeatureSet$: compute the unique polynomial $\WaultSecret^*$ of degree smaller than $\SecretSize$ interpolating the pairs $(\VeatureElement,\WaultPoly(\VeatureElement))$ for all $\VeatureElement\in \VeatureSet \cap \WeatureSet$. If $\WaultPoly-\WaultSecret^*$ splits into distinct linear factors, set $\WeatureSet$ as the set of its roots; otherwise return \AlgFailure. 

\item Output $(\VeatureSet,\WeatureSet)$.

\end{enumerate}

Obviously, if $\AlgA$ is successful and all guesses are correct, $\AlgB$ succeeds as well. 

The probability that $\AlgA$ is successful and that $\AlgB$ guesses correctly whether $\VeatureElement_0\in \VeatureSet\setminus\WeatureSet$ or $\VeatureElement_0\in \WeatureSet\setminus\VeatureSet$ is $\prob{\AlgA}/2$. Furthermore, the $\SecretSize-1$ additional elements of $\VeatureSet$ are guessed with probability $\tbinom{\VeatureSize-1}{\SecretSize-1}/\tbinom{\FieldSize-1}{\SecretSize-1}$. Finally, guessing   $\VeatureSet \cap \WeatureSet$  as a $\SecretSize$-element subset of $\VeatureSet\setminus \{\VeatureElement_0\}$ succeeds with probability $\tbinom{\VeatureSize-1}{\SecretSize}^{-1}$. Overall, this results in the following lower bound for the success probability $\prob{\AlgB}$ of $\AlgB$.
\begin{equation*}
\prob{\AlgB} \geq (\prob{\AlgA}/2) \frac{\SecretSize}  {\binom{\FieldSize}{\SecretSize-1}(\VeatureSize-\SecretSize+1)}
\end{equation*}

On the other hand, Theorem \ref{EntropyLoss} gives an upper bound of
\begin{equation*}
\prob{\AlgB}  \leq \FieldSize^{\VeatureSize+\WeatureSize-2\SecretSize} \binom{\FieldSize}{\VeatureSize+\WeatureSize-\SecretSize}^{-1}.
\end{equation*}

Combining upper and lower bounds, we obtain
\begin{equation*}
\prob{\AlgA}  \leq \frac{2\FieldSize^{\VeatureSize+\WeatureSize-2\SecretSize} \binom{\FieldSize}{\SecretSize-1}(\VeatureSize-\SecretSize+1)}
{\binom{\FieldSize}{\VeatureSize+\WeatureSize-\SecretSize} \SecretSize }.
\end{equation*}
This completes the proof for the case $\NumOverlap=\SecretSize$.
\ENDPROOF

We now turn to the last case.
\BEGINPROOF[Proof for the case $\NumOverlap<\SecretSize$]
Algorithm $\AlgB$ works as follows. 
\begin{enumerate}

\item Invoke algorithm $\AlgA$ with input ($\VeatureSet, \WeatureSet)$ to obtain an output $\VeatureElement_0\in (\VeatureSet \cup \WeatureSet) \setminus(\VeatureSet \cap \WeatureSet)$. 

\item Randomly choose a bit $x\in\{0,1\}$. If $x=0$, assume $\VeatureElement_0 \in \VeatureSet\setminus\WeatureSet$ and proceed with the subsequent steps as described below. If $x=1$, assume $\VeatureElement_0\in \WeatureSet\setminus\VeatureSet$ and proceed with the subsequent steps with $\VeatureSet$ and $\WeatureSet$ exchanged (and $\VeatureSize$ and $\WeatureSize$ exchanged likewise). 

\item Guess all $\NumOverlap$ elements from $\VeatureSet \cap \WeatureSet$. 

\item Guess $\SecretSize-\NumOverlap-1$ many additional elements $\{\VeatureElement_1,\ldots\VeatureElement_{\SecretSize-\NumOverlap-1}\}$ from $\VeatureSet\setminus\WeatureSet$, and $\SecretSize-\NumOverlap$ many elements $\{\WeatureElement_1,\ldots\WeatureElement_{\SecretSize-\NumOverlap}\}$ from $\WeatureSet\setminus\VeatureSet$.

\item Unlock $\VeatureSet$: compute the unique polynomial $\VaultSecret^*$ of degree smaller than $\SecretSize$ interpolating the pairs $(\VeatureElement,\VaultPoly(\VeatureElement))$ for all $\VeatureElement\in (\VeatureSet \cap \WeatureSet) \cup \{\VeatureElement_0,\ldots\VeatureElement_{\SecretSize-\NumOverlap-1}\}$. If $\VaultPoly-\VaultSecret^*$ splits into distinct linear factors, set $\VeatureSet$ as the set of its roots; otherwise return \AlgFailure. 

\item Unlock $\WeatureSet$: compute the unique polynomial $\WaultSecret^*$ of degree smaller than $\SecretSize$ interpolating the pairs $(\WeatureElement,\WaultPoly(\WeatureElement))$ for all $\WeatureElement\in (\VeatureSet \cap \WeatureSet) \cup \{\WeatureElement_1,\ldots\WeatureElement_{\SecretSize-\NumOverlap}\}$. If $\WaultPoly-\WaultSecret^*$ splits into distinct linear factors, set $\WeatureSet$ as the set of its roots; otherwise return \AlgFailure. 

\item Output $(\VeatureSet,\WeatureSet)$.

\end{enumerate}

Obviously, if $\AlgA$ is successful and all guesses are correct, $\AlgB$ succeeds as well. 

The probability that $\AlgA$ is successful and that $\AlgB$ guesses correctly whether $\VeatureElement_0\in \VeatureSet\setminus\WeatureSet$ or $\VeatureElement_0\in \WeatureSet\setminus\VeatureSet$ is $\prob{\AlgA}/2$. Furthermore, $\VeatureSet \cap \WeatureSet$ is guessed with probability at least $\tbinom{\FieldSize}{\NumOverlap}^{-1}$. Finally, the $\SecretSize-\NumOverlap-1$ elements of $\VeatureSet$ and the $\SecretSize-\NumOverlap$ elements of $\WeatureSet$ are guessed with probability at least $\tbinom{\VeatureSize-\NumOverlap-1}{\SecretSize-\NumOverlap-1} / \tbinom{\FieldSize}{\SecretSize-\NumOverlap-1}$ and $\tbinom{\WeatureSize-\NumOverlap}{\SecretSize-\NumOverlap} / \tbinom{\FieldSize}{\SecretSize-\NumOverlap}$, respectively.

Overall, this results in the following lower bound for the success probability $\prob{\AlgB}$ of $\AlgB$.
\begin{equation*}
\prob{\AlgB} \geq (\prob{\AlgA}/2) \frac{\binom{\VeatureSize-\NumOverlap-1}{\SecretSize-\NumOverlap-1} \binom{\WeatureSize-\NumOverlap}{\SecretSize-\NumOverlap} }  {\binom{\FieldSize}{\NumOverlap} \binom{\FieldSize}{\SecretSize-\NumOverlap-1} \binom{\FieldSize}{\SecretSize-\NumOverlap} }
\end{equation*}

On the other hand, Theorem \ref{EntropyLoss} gives an upper bound of
\begin{equation*}
\prob{\AlgB}  \leq \FieldSize^{\VeatureSize+\WeatureSize-2\SecretSize} \binom{\FieldSize}{\VeatureSize+\WeatureSize-\NumOverlap}^{-1}.
\end{equation*}

Combining upper and lower bounds, we obtain
\begin{equation*}
\prob{\AlgA}  \leq \frac{2\FieldSize^{\VeatureSize+\WeatureSize-2\SecretSize}\binom{\FieldSize}{\NumOverlap} \binom{\FieldSize}{\SecretSize-\NumOverlap-1} \binom{\FieldSize}{\SecretSize-\NumOverlap}  }
{\binom{\FieldSize}{\VeatureSize+\WeatureSize-\NumOverlap}\binom{\VeatureSize-\NumOverlap-1}{\SecretSize-\NumOverlap-1} \binom{\WeatureSize-\NumOverlap}{\SecretSize-\NumOverlap}  }.
\end{equation*}
This completes the proof for the case $\NumOverlap<\SecretSize$.
\ENDPROOF

\fi

\end{appendix}

\fi

\end{document}